\documentclass[11pt, twoside, a4paper]{article}
\usepackage{bm}
\usepackage{tabularx}
\usepackage{comment}

\usepackage[hyphens]{url}
\usepackage[hidelinks]{hyperref}
\hypersetup{breaklinks=true}
\urlstyle{same}

\linespread{1.1}
%
\usepackage{soul}
\usepackage{geometry}

 \geometry{
 left=1.8cm,
 right=1.8cm,
 top=1.9cm,
 bottom=1.9cm,
 }

\AtBeginDocument{
\addtolength{\abovedisplayskip}{-0.9ex}
\addtolength{\abovedisplayshortskip}{-0.9ex}
\addtolength{\belowdisplayskip}{-0.9ex}
\addtolength{\belowdisplayshortskip}{-0.9ex}
}
\newcommand{\wpinf}{w_{\scalebox{0.6}{$\infty$}}}
\newcommand{\einf}{e^{\wpinf}}
\newcommand{\bVisl}{\l(V_{\scalebox{0.5}{\text{SL}}}^{\mysup}\r)}
\newcommand{\errorf}{\scalebox{0.9}{$\mathcal{E}_{f}$}}
\newcommand{\errorp}{\scalebox{0.9}{$\mathcal{E}_{o}$}}
\newcommand{\errorg}{\scalebox{0.9}{$\mathcal{E}_{\hat{g}}$}}
\newcommand{\errorb}{\scalebox{0.9}{$\mathcal{E}_{b}$}}

\newcommand{\x}{{\bf{x}}}

\usepackage{pgfplots}
\makeatletter
\def\thm@space@setup{\thm@preskip=3pt
\thm@postskip=3pt}
\makeatother
\usepackage{titlesec}
\titlespacing*{\section}
  {0pt}{0.7\baselineskip}{0.2\baselineskip}
\titlespacing*{\subsection}
  {0pt}{0.7\baselineskip}{0.2\baselineskip}

\makeatletter
\newcommand{\subalign}[1]{%
  \vcenter{%
    \Let@ \restore@math@cr \default@tag
    \baselineskip\fontdimen10 \scriptfont\tw@
    \advance\baselineskip\fontdimen12 \scriptfont\tw@
    \lineskip\thr@@\fontdimen8 \scriptfont\thr@@
    \lineskiplimit\lineskip
    \ialign{\hfil$\m@th\scriptstyle##$&$\m@th\scriptstyle{}##$\crcr
      #1\crcr
    }%
  }
}
\makeatother
\usepackage{graphicx}
\usepackage{amsmath, amsthm, amssymb, amsfonts, latexsym}
\usepackage{array}
\usepackage{caption}
\usepackage{color}

\setlength{\captionmargin}{20pt}

\usepackage{subcaption}

\usepackage{epsfig}
\usepackage{fancybox}
\usepackage{array}
\usepackage{shadow}
\usepackage{empheq}

\usepackage{mathrsfs}
\usepackage{longtable}
\usepackage{calc}
\usepackage{multirow}
\usepackage{hhline}
\usepackage{ifthen}
\usepackage{algorithmic}
\usepackage{algorithm}
\usepackage{color}

\newcommand{\blue}[1]{{\color{black}{#1}}}
\newcommand{\red}[1]{{\color{black}{#1}}}

\newcommand{\numPDEblue}[1]{{\color{black}{#1}}}


\usepackage[running, mathlines]{lineno}

%
\usepackage{wrapfig}

\newcommand{\EQ}{\begin{equation}}
\newcommand{\EN}{\end{equation}}
\newcommand{\EQS}{\begin{equation*}}
\newcommand{\ENS}{\end{equation*}}
\newcommand{\EQA}{\begin{eqnarray}}
\newcommand{\ENA}{\end{eqnarray}}
\newcommand{\EQAS}{\begin{eqnarray*}}
\newcommand{\ENAS}{\end{eqnarray*}}
\newcommand{\ds}{\displaystyle}

\newcommand{\vl}{\left(v^{\scalebox{0.5}{(1)}}\right)}
\newcommand{\vn}{\left(v^{\scalebox{0.5}{(2)}}\right)}

 \newcommand{\vh}{\hat{v}}

\newcommand{\bw}{\breve{w}}
\newcommand{\vsl}{v_{\scalebox{0.5}{\text{SL}}}}

\newcommand{\vt}{\tilde{v}}

\newcommand{\bvsl}{\l(v_{\scalebox{0.5}{\text{SL}}}\r)}
\newcommand{\bvisl}{\l(v_{\scalebox{0.5}{\text{SL}}}^{\mysup}\r)}

\newcommand{\bvlsl}{\l(v_{\scalebox{0.5}{\text{SL}}}^{\myloc}\r)}
\newcommand{\bvnsl}{\l(v_{\scalebox{0.5}{\text{SL}}}^{\mynlc}\r)}
\newcommand{\mysup}{\scalebox{0.6}{$(i)$}}
\newcommand{\myloc}{\scalebox{0.6}{$(1)$}}

\newcommand{\mynlc}{\scalebox{0.6}{$(2)$}}

\newcommand{\hvisl}{\hat{v}_{\scalebox{0.5}{\text{SL}}}^{\mysup}}

\newcommand{\vhsl}{\hat{v}_{\scalebox{0.5}{\text{SL}}}}

\newcommand{\br}{\breve{r}}
\newcommand{\z}{p}
\newcommand{\pb}{\psl}
\newcommand{\pt}{\tilde{\phi}}

\newcommand{\psl}{\phi_{\scalebox{0.5}{\text{SL}}}}

\newcommand{\plsl}{\phi_{\scalebox{0.5}{\text{SL}}}^{\myloc}}
\newcommand{\pnsl}{\phi_{\scalebox{0.5}{\text{SL}}}^{\mynlc}}
\newcommand{\pisl}{\phi_{\scalebox{0.5}{\text{SL}}}^{\mysup}}

\newcommand{\vpsl}{\varphi_{\scalebox{0.5}{\text{SL}}}}

\newcommand{\psisl}{\psi_{\scalebox{0.5}{\text{SL}}}}

\usepackage{scalerel}[2014/03/10]
\usepackage{stackengine}
\usepackage[titletoc,title]{appendix}

\def\mytilde[#1]{$%
  \reallywidetilde{#1}\,
  \scriptstyle\reallywidetilde{#1}\,
  \scriptscriptstyle\reallywidetilde{#1}
$\par}

\newcommand{\pl}{(\phi_{\scalebox{0.6}{$loc$}})}
\newcommand{\pn}{(\phi_{\scalebox{0.6}{$nlc$}})}

\newcommand{\sigz}{\sigma_{\scalebox{0.5}{$Z$}}}
\newcommand{\sigr}{\sigma_{\scalebox{0.5}{$R$}}}
\newcommand{\Wz}{W_{\scalebox{0.5}{$Z$}}}
\newcommand{\Wr}{W_{\scalebox{0.5}{$R$}}}
\newcommand{\winf}{w_{\scalebox{0.6}{-$\infty$}}}

\newcommand{\Oinf}{\Omega^{\scalebox{0.5}{$\infty$}}}
\newcommand{\myin}{\scalebox{0.7}{\text{in}}}
\newcommand{\myot}{\scalebox{0.6}{\text{c}}}

\newcommand{\mymax}{{{\scalebox{0.6}{$\max$}}}}
\newcommand{\myup}{\scalebox{0.5}{\text{$U$}}}
\newcommand{\mydn}{\scalebox{0.5}{\text{$L$}}}

\newcommand{\dtau}{\Delta \tau}

\newcommand{\taus}{\tau_{m}}

\newcommand{\err}{\error({\bf{x}}_{n,k,j}^{m}, h)}
\newcommand{\errm}{\error({\bf{x}}_{n,k,j}^{m}, h)}

\newcommand{\myerrmm}[1]{\error_{#1}({\bf{x}}_{n,k,j}^{m+1}, h)}
\newcommand{\myerrm}[1]{\error_{#1}({\bf{x}}_{n,k,j}^{m}, h)}

\newcommand{\G}{\mathcal{G}(\Oinf)}
\newcommand{\C}[1]{{\mathcal{C}^{\scalebox{0.5}{$\infty$}}{(#1)}}}

\newcommand{\error}{\scalebox{0.9}{$\mathcal{E}$}}

\newcommand{\D}{\mathbf{D}}
\newcommand{\Ld}{\mathcal{L}_{\scalebox{0.7}{$d$}}}
\newcommand{\Ls}{\mathcal{L}_{\scalebox{0.7}{$s$}}}
\newcommand{\Lg}{\mathcal{L}_{\scalebox{0.7}{$g$}}}

\newcommand{\mysum}{\ds\sideset{}{^{\boldsymbol{*}}}\sum}

\newcommand{\R}{C}
\newcommand{\N}{\mathbb{N}}
\newcommand{\Nl}{\mathbb{N}^{\myot}_{\scalebox{0.6}{\text{$\min$}}}}
\newcommand{\Nr}{\mathbb{N}^{\myot}_{\scalebox{0.6}{\text{$\max$}}}}
\newcommand{\ND}{\mathbb{N}^{\dagger}}
\newcommand{\Nc}{\mathbb{N}^{\myot}}
\newcommand{\K}{\mathbb{K}}
\newcommand{\Kc}{\mathbb{K}^{\myot}}
\newcommand{\KD}{\mathbb{K}^{\dagger}}
\newcommand{\J}{\mathbb{J}}
\newcommand{\M}{\mathbb{M}}
\newcommand{\Z}{\mathbb{Z}}
\newcommand{\Na}{\mathbb{N}^{\alpha}}
\newcommand{\Ka}{\mathbb{K}^{\alpha}}
\newcommand{\mychi}{\scalebox{1.2}{$\chi$}}

\numberwithin{equation}{section}
\numberwithin{table}{section}
\numberwithin{figure}{section}



\newtheorem{definition}{Definition}
\newtheorem{theorem}{Theorem}
\newtheorem{lemma}{Lemma}
\newtheorem{remark}{Remark}
\newtheorem{assumption}{Assumption}

\newtheorem{proposition}{Proposition}

\numberwithin{definition}{section}
\numberwithin{theorem}{section}
\numberwithin{lemma}{section}
\numberwithin{remark}{section}
\numberwithin{assumption}{section}
\numberwithin{condition}{section}
\numberwithin{property}{section}
\numberwithin{proposition}{section}
\numberwithin{corollary}{section}
\numberwithin{algorithm}{section}

\def\r{\right}
\def\l{\left}

\newsavebox{\savepar}

\makeatletter
\renewcommand*\env@matrix[1][c]{\hskip -\arraycolsep
  \let\@ifnextchar\new@ifnextchar
  \array{*\c@MaxMatrixCols #1}}
\makeatother

\pgfplotsset{compat=1.18}

\begin{document}


\title{
A semi-Lagrangian $\epsilon$-monotone Fourier method
for continuous withdrawal GMWBs under jump-diffusion with stochastic interest rate
}

\author{
Yaowen Lu \thanks{School of Mathematics and Physics, The University of Queensland, St Lucia, Brisbane 4072, Australia,
\texttt{yaowen.lu@uq.edu.au}
}
\and
Duy-Minh Dang\thanks{School of Mathematics and Physics, The University of Queensland, St Lucia, Brisbane 4072, Australia,
\texttt{duyminh.dang@uq.edu.au}
}
}
\date{July 28, 2023}
\maketitle

\begin{abstract}
\noindent
We develop an efficient pricing approach for guaranteed minimum withdrawal benefits (GMWBs) with continuous withdrawals under a realistic modeling setting with jump-diffusions and stochastic interest rate.
Utilizing an impulse stochastic control framework, we formulate the no-arbitrage GMWB pricing problem
as a time-dependent Hamilton-Jacobi-Bellman (HJB) Quasi-Variational Inequality (QVI) having three spatial dimensions with cross derivative terms.
Through a novel numerical approach built upon a combination of a semi-Lagrangian method and the Green's function of an associated linear partial integro-differential equation, we develop an $\epsilon$-monotone Fourier pricing method, where $\epsilon > 0$ is a monotonicity tolerance. Together with a provable strong comparison result for the HJB-QVI, we mathematically demonstrate convergence of the proposed scheme to the viscosity solution of the HJB-QVI as $\epsilon \to 0$.  We present a comprehensive study of the impact of simultaneously considering jumps in the sub-account process and stochastic interest rate on the no-arbitrage prices and fair insurance fees of GMWBs, as well as on the holder's optimal withdrawal behaviors.

\vspace{9pt}
\noindent
\textbf{Keywords:}
Variable annuity,
guaranteed minimum withdrawal benefit,
impulse control, viscosity solution,
monotonicity,
stochastic interest rate, jump-diffusion

\vspace{9pt}
\noindent
\textbf{AMS Classification} 65N80, 60B15, 91-08, 93C20

\end{abstract}

\section{Introduction}
Variable annuities are a class of insurance products that provide the holder
with particular guaranteed stream of income without requiring him/her
to sacrifice full control over the funds invested, and hence,
allowing the holder to enjoy potentially favorable market conditions.
Therefore, these products are particularly popular among
investors who need to manage their own spending plans,
especially among retirees, considering the on-going rapid word-wide trend of replacing
defined benefit pension plans by defined contribution ones.
The current era of increased market volatility and growing inflation
has significantly boosted annuity sales.
In some countries, such as the US, annuity sales are at highest levels since the 2007-2008 Global Financial Crisis.
Specifically, the US annuity market in 2021 was valued at US\$231.63 billion,
and the market is expected to grow at a  compound annual growth rate of $4.7\%$
during the forecast period of 2022-2026, reaching US\$298.70 billion by 2026 \cite{MR2022}.

To attract investors, variable annuities are often incorporated with additional features,
among which Guaranteed Minimum Withdrawal Benefits (GMWBs) are popular.
Since first introduced in the early 2000's, GBMWs have captured great attention from both industry and academia alike, as evidenced by a substantial and growing body of literature; see \cite{Milevsky2006, chen08a, chen_2008_a, Dai,
Bacinello2016, huang:2010, huang:2010b, Huang2014, Kwok2019, Molent2016a, Gudkov2018, Molent2020, Azimzadeh2015, wang2015understanding, alonso-garcca_wood_ziveyi_2018,
DangOu2021, huang:2010a, online}, among many other publications.

In its simplest form, a GMWB is a long-dated contract, with maturity of 10 years or more,
between the policy holder (e.g.\ a retiree) and the insurer (e.g.\ an insurance company),
according to which the holder makes an up-front payment, i.e.\ the premium,
into a (personal) sub-account for investment in risky assets.
In return, by means of a guarantee account,  the insurer is stipulated to provide the holder with a stream of guaranteed cash withdrawals whose amounts (and possibly timing) are  to be determined by the holder, all of which cumulatively sum up to \emph{at least} the premium, regardless of the performance of the risky investment.
The holder may also withdraw more than the specified amount, subject to certain penalties and conditions.
Upon contract expiry, the holder can convert the remaining investment in the risky assets to cash,
and withdraw this amount. For protection from the downside in a GMWB, the insurer typically charges the holder
an insurance fee by deducting an ongoing fraction of the risky investment as opposed to an up-front one-off fee.
Underpricing typically results in undercalculated insurance fee, which adversely affects the insurer's risk management, potentially impacting the long-term sustainability of the market. The reader is referred to, for example,
\cite{choi2018valuation, Milevsky2006, chen_2008_a, Dai, Bauer}, for discussions in relation to GMWB underpricing in practice and its potential consequences.

Guaranteed Minimum Withdrawal Benefits are studied under two withdrawal scenarios, namely discrete and continuous.
It is reported in the literature that no-arbitrage prices and fair insurance fees of GMWBs,
as well as the holder's optimal withdrawal behaviors are highly sensitive to modeling assumptions and parameters,
in particular, jumps in the sub-account's balance process \cite{chen_2008_a, Benhamou2009, Kling2011, online}.
Under a discrete withdrawal scenario, fair prices and insurance fees are found to be
remarkably sensitive to interest rates, in particular, in the case of (instantaneous) short rate dynamics, such as the Vasicek model \cite{Peng2009, Pavel2017}, the Hull-White \cite{Molent2016a, Donnelly, Molent2019, Mackay2018}, and the the Cox-Ingersoll-Ross model \cite{Bacinello2011, Gudkov2018}.
Substantial impact of short rate dynamics on the holder's optimal withdrawal behavior
is recently reported in \cite{Molent2020}.
We highlight that the combined effects of jumps and stochastic interest rate
in the context of GMWBs have not been previously studied in the literature.

Numerical methods for GMWBs in a continuous withdrawal scenario is
studied through a stochastic optimal control framework.  In this withdrawal scenario, the pricing problem
can be formulated  using either impulse control or singular control. This typically results in a Hamilton-Jacobi-Bellman Quasi-Variational Inequality (HJB-QVI)
of at least two spatial dimensions, namely the balances of the sub- and guarantee accounts, which must be solved
numerically. Convergence to viscosity solutions forms the main challenge in the development of numerical methods for HJB equations. This is typically built upon the convergence
framework established by  Barles and Souganidis in
\cite{barles-souganidis:1991}; also, see \cite{crandall-ishii-lions:1992, MNIF06, Jakobsen2006, Barles2008, Seydel2009, Berestycki2014, barles-burdeau:1995}
for relevant discussions.
Specifically, provided that a strong comparison result holds, convergence to viscosity solution is ensured if
numerical methods are (i)~monotone (in the viscosity sense), (ii)~stable, and (iii)~consistent.
When a finite difference method is used, monotonicity is ensured by
a positive coefficient discretization method \cite{pooley2003, wang08, MaForsyth2015, Forsyth08}.
The reader is referred to \cite{Dai, huang:2010b, huang:2010a, huang:2010, Milevsky2006, Bauer}
and \cite{chen08a, chen_2008_a, Azimzadeh2015, online} for an analysis of singular control and  impulse control
formulations, respectively.  For GMWB contracts, impulse control is more convenient than singular control in handling complex contract features,
such as is the reset provision\cite{Dai, Milevsky2006, DangOu2021, alonso-garcca_wood_ziveyi_2018, Gudkov2018, wang2015understanding}.\footnote{Generally speaking, the impulse control approach is suitable for many complex situations
in stochastic optimal control {\numPDEblue{\cite{OksendalSulemBook3,  PvSDangForsyth2018_MQV, PvSDangForsyth2019_Distributions, PvSDangForsyth2019_Robust, PvSDangForsyth2018_TCMV, Korn1999, GuanLian2014, Baccarin2009, Masahiko2008, AsriMazid2020, BelakChristensenSeifried2017, DangForsyth2014}}}.}

In contrast to continuous withdrawals, a discrete withdrawal scenario is relatively much simpler to  tackle. Specifically, between fixed withdrawal (intervention) times, the pricing of GMWB contracts typically involves solving
an either (i) associated linear Partial (Integro)-Differential Equation (P(I)DE) using finite differences \cite{chen08a, Dai, online}, or (ii) an expectation problem using numerical integration \cite{Pavel2015, Pavel2016, Bauer, alonso-garcca_wood_ziveyi_2018, ignatieva_song_ziveyi_2018, Huang2018} or regression-type Monte Carlo
\cite{Bacinello2011, Huang2015}.
Across withdrawal times, an optimization problem needs to be solved to determine the optimal withdrawal amount,
by which the balance of the guarantee account is then adjusted accordingly.
We note that existing numerical integration or regression-type Monte Carlo
are typically not suitable to tackle continuous withdrawals.

\newpage
In light of the current era of wildly fluctuating interest rates and economic turbulence,
it is of enormous importance to apply realistic modelling for popular pension-related products.
In addition, it is also equally important to develop mathematically reliable numerical methods for those products, enabling realistic and useful conclusions to be drawn from the numerical results.
For GMWBs, it is highly desirable to simultaneously incorporate jumps (in the sub-account balance)
and stochastic interest rate dynamics.
Although in practice, only discrete withdrawals are possible,
through no-arbitrage arguments, it is arguable that the prices and insurance fees in the associated continuous withdrawal scenario can serve as worst-case bounds for the respective values in a discrete withdrawal one,
which are important for risk-management purposes.  

Nonetheless, the continuous withdrawal scenario brings about significant mathematical challenges.
As noted earlier, for GMWBs under a low-dimensional model, existing numerical integration and regression-type Monte Carlo methods are computationally expensive. With respect to the PIDE approach, due to the short rate factor,
the no-arbitrage pricing of GMWBs gives  rise to a HJB-QVI of three spatial dimensions with cross derivative terms,
which is very challenging to solve efficiently numerically. In particular, while finite difference methods
can be used to solve this HJB-QVI,  due to cross derivative terms, to ensure monotonicity
through a positive coefficient discretization method, a wide-stencil method based on a local coordinate rotation is needed. However, this is very computationally expensive \cite{MaForsyth2015, DJ12}.

In general, Fourier-based methods, if applicable, offer several important
advantages over finite differences, such as no timestepping error between intervention times,
and the capability of straightforward handling of realistic underlying dynamics,
such as jump diffusion and regime-switching. \red{In particular, the well-known Fourier cosine series expansion method \cite{Fang2008, Ruijter2013} can achieve high order convergence for piecewise smooth problems.  However, optimal control problems are often non-smooth, and hence high order convergence cannot be expected. Convergence issues, especially montonicity considerations are of primary importance.}
A novel Fourier-based method is introduced in our paper \cite{online}
for an impulse control formulation of the GMWB pricing problem
in which the sub-account's balance process follows jump-diffusion dynamics with a constant interest rate.
Central to the method is a combination of (i) the Green's function of an associated multi-dimensional PIDE
and (ii) an $\epsilon$-monotone Fourier method to approximate a pricing convolution integral
through a known closed-form expression of the Fourier transform of the Green function.
Here, the monotonicity of the method is achieved within an $\epsilon$ tolerance, where $\epsilon > 0$,
as opposed to strictly monotone. In this work, a Barles-Souganidis-type analysis in \cite{barles-souganidis:1991} is utilized to rigorously prove the convergence of the scheme the unique viscosity solution of the HJB-QVI
as the discretization parameter and the monotonicity tolerance $\epsilon$ approach zero.
Nonetheless, for the case of jump-diffusion dynamics having a non-trivial correlation
with the short rate, a closed-form expression of the Fourier transform of
the Green function is not know to exist.
Therefore, the approach in \cite{online}, while promising, is not directly applicable.
This mathematical and computational challenge of continuous withdrawals
forms another motivation for our work.

The objective of the paper is (i) to develop a provably convergent and computationally efficient PDE method for
the no-arbitrage GMWB pricing problem with continuous withdrawals under realistic modeling assumptions,
namely jumps and stochastic interest rate, and (ii) to study the combined impacts of these modelling assumptions on  the no-arbitrage prices and fair insurance fees of GMWBs, as well as the holder's optimal withdrawal behaviors.
For clarity of presentation, we focus on the GMWB pricing problem with basic contract features.
We emphasize that we do not to advocate for a specific jump-diffusion and/or stochastic interest rate
model, but rather, we aim to study the impact of realistic modeling on GWMB.
In particular, to model stochastic interest rate, we use the Vasicek short rate dynamics \cite{vasicek1977equilibrium}.
Due to a Gaussian nature, the Vasicek short rate dynamics are often criticized for allowing negative interest rates, which is considered a highly undesirable, and perhaps, also highly improbable, scenario for any economy.
However, in recent times, it has become evident that negative interest rates are employed as
a monetary policy tool by central banks, such as the European Central Bank, against extreme financial crises.
For example, see \cite{jobst2016negative, demiralp2021negative, lopez2020have} and references therein.

The main contribution of the paper are as follows.
\begin{itemize}

\item
We propose a comprehensive and systematic impulse control formulation and pricing approach for GMWBs
when the sub-account process follows a jump-diffusion process \cite{merton1975, kou01}
with the Vasicek short rate dynamics \cite{vasicek1977equilibrium}.
\begin{itemize}
\item We derive and define the pricing problem in a form of an HJB-QVI with three spatial dimensions posed
on an infinite definition domain with appropriate boundary conditions. Through a novel approach built upon a combination of a semi-Lagrangian method and the Green's function of an associated
PIDE, we obtain a properly truncated computational domain for which loss of information in the boundary
is controllably negligible.

\item
Starting from a discrete withdrawal scenario, we develop a semi-Lagrangian $\epsilon$-monotone Fourier method to solve an associated two-dimensional PIDE on a finite computation domain, together with an efficient padding technique to control wrap-around errors.

\item With a provable strong comparison result, we rigorously prove the convergence of our scheme
the unique viscosity solution of the HJB-QVI as the discretization parameter and the
monotonicity tolerance $\epsilon$ approach zero.
That is, our proposed method can be used for discrete withdrawals, and
can also be shown to converge to the viscosity solution of the HJB-QVI arising in the
continuous withdrawal setting.
\end{itemize}

\item
With a provably convergent numerical method, which allows realistic and useful conclusions to be drawn from the numerical results, we carry out a comprehensive study of the
impact of considering jumps  and stochastic short rate.
Our numerical results suggest that, compared to stochastic interest rate dynamics,
using a constant interest rate results in underpricing of fair insurance fees for GMWBs.
Furthermore, the simultaneous application of jumps and stochastic interest rates
results in (i) a much lower fair insurance fee, and (ii) significantly different
optimal withdrawal behaviors than those obtained from a comparable pure-diffusion model with a comparable constant
interest rate. These findings underscore the importance of realistic modelling and mathematically reliable numerical methods in reducing potential underpricing and overpricing of GMWBs,
contributing to the long-term sustainability of the financial markets.
\end{itemize}
The remainder of the paper is organized as follows. Section~\ref{section:model} describes
the impulse control framework and the underlying processes.
We present in Section~\ref{sec:formulation} an impulse control formulation of the GMWB pricing problem
in the form of a three-dimensional HJB-QVI. Also therein, we also prove a strong comparison result.
A numerical method for solving the HJB-QVI is discussed in Section~\ref{sec:num}.
The convergence of the proposed numerical method is demonstrated in Section~\ref{section:conv}.
In Section~\ref{sec:num_test}, we present and discuss extensive numerical results of GMWBs and the combined impact of jumps and stochastic interest rates on the prices, insurance fees, and the holder's optimal withdrawal behaviors.
Section~\ref{section:cc} concludes the paper and outlines possible future work.

\section{Modeling}
\label{section:model}
We consider a complete probability space $(\mathfrak{S}, \mathfrak{F}, \mathfrak{F}_{0 \le t \le T}, \mathfrak{Q})$,
with sample space $\mathfrak{S}$, sigma-algebra $\mathfrak{F}$, filtration $\mathfrak{F}_{0 \le t \le T}$,
where $T > 0$ is a fixed investment maturity, and a risk-neutral measure $\mathfrak{Q}$ defined on $\mathfrak{F}$.
We discuss the underlying dynamics with an impulse control formulation framework in mind \cite{OksendalSulemBook3, Korn1999}.

Broadly speaking, using an impulse control argument \cite{chen08a}, the holder's optimal withdrawal strategy involves choosing either (i) withdraw continuously at a rate determined by the holder,
but no greater than a cap on the maximum allowed continuous withdrawal rate, hereinafter denoted by $C_r$;
or (ii) withdraw finite amounts at specific times, both determined by the holder,
subject to  a penalty charge which is proportional to the withdrawal amount
and is calculated at the rate $\mu$, where $0 < \mu < 1$, as well as a strictly positive fixed cost $c$.
Due to the associated penalty charge, (ii) is only optimal at some stopping times. To this end, let $\{t^\iota\}_{\iota \le \iota_{\mymax}}$, $\iota_{\mymax} \le \infty$, is any sequence of stopping times with respect to the filtration $\mathfrak{F}_{0 \le t \le T}$ satisfying {\numPDEblue{$0\le t \leq t^1 \leq t^2 < \cdots < t^{\iota_{\mymax}}\leq T$}}.

We denote by (i) $\hat{\gamma}(t)$, $\hat{\gamma}(t)  \in [0, C_r]$, a continuous control representing continuous withdrawal rate at time $t$, and by (ii) an impulse control $\{(t^\iota, \gamma^\iota)\}_{\iota \le \iota_{\mymax}}$,
representing withdrawal/intervention times $\{t^\iota\}_{\iota \le \iota_{\mymax}}$
and associated impulses $\{\gamma^\iota\}_{\iota \le \iota_{\mymax}}$, where  $\gamma^\iota$
is a $\mathfrak{F}_{t^\iota}$-measurable random variable. Here, each $t^\iota$ corresponds to a time at which the holder instantaneously withdraws a finite amount, and $\gamma^\iota$, $\gamma^\iota \in [0, A(t_\iota^-)]$, corresponds to the withdrawal amount at that time. The net revenue cash flow provided to the  holder at time $t^\iota$ is $(1-\mu) \gamma^\iota - c$.

We respectively denote by $Z(t)$, $A(t)$, and $R(t)$, $t \in [0, T]$, the time-$t$ balance of the sub-account,
the guarantee account, and the instantaneous short-rate.
Due to continuous withdrawals and withdrawing finite amounts, the dynamics of $A(t)$ are given by
\EQA
\label{eq:A_dynamics}
dA(t) &=& - \hat{\gamma}(t) {\bf{1}}_{\{A(t) >0 \}} dt,
~~ \text{for}~~ t \neq t^{\iota}, \quad \iota = 1, 2, \ldots, \iota_{\mymax},
\nonumber
\\
A(t) &=& A(t^-) - \gamma^{\iota},  \quad ~~~~~~ \text{for}~~ t = t^{\iota}, \quad \iota = 1, 2, \ldots, \iota_{\mymax}.
\ENA
Let the dynamics of $Z(t)$ and $R(t)$ be given by
\begin{linenomath}
 \begin{subequations}\label{eq:dynamics}
\begin{empheq}[left={\empheqlbrace}]{alignat=3}
&\frac{dZ\left(t\right)}{Z\left(t\right)} &~=~&
\left(R(t)-\beta - \lambda \kappa \right) dt+ \sigz \rho d \Wz (t) + \sigz \sqrt{1-\rho^2} d \Wr (t) + 
d J(t)
\nonumber
\\
&&& - \hat{\gamma}(t) {\bf{1}}_{\{Z(t),A(t) >0 \}} dt,
\quad \text{for  } t \neq t^{k}, \quad \iota = 1, 2, \ldots, \iota_{\mymax},
\label{eq:Z_dynamics}
\\
&Z(t) &~=~& \max \left(Z(t^-) - \gamma^{\iota}, 0 \right),
\quad  \text{for } t = t^{\iota}, \quad \iota = 1, 2, \ldots, \iota_{\mymax},
\label{eq:Z_dynamics*}
\\
&dR(t)
&~=~&
\delta \left(\theta - R(t) \right) dt + \sigr d \Wr (t).
\label{eq:R_dynamics}
\end{empheq}
\end{subequations}
\end{linenomath}
We work under the following assumptions for model \eqref{eq:A_dynamics}-\eqref{eq:dynamics}.
\begin{itemize}
 \item Processes $\{\Wz (t)\}_{0\le t \le T}$ and $\{\Wr (t)\}_{0\le t \le T}$ are two independent standard Wiener processes.
 \item
 The process $\{J(t)\}_{0\le t \le T}$, where  $J(t) =  \sum_{k=1}^{\pi(t)}(Y_{k}-1)$, is a compound Poisson process. Specifically, $\{\pi(t)\}_{0\le t \le T}$ is a Poisson process with a constant finite jump intensity $\lambda\geq 0$; and, with $Y$ being a positive random variable representing the jump multiplier,
  $\{Y_{k}\}_{k = 1}^{\infty}$ are independent and identically distributed (i.i.d.) random variables having the same same distribution as $Y$.  In the dynamics \eqref{eq:Z_dynamics},  $\kappa=\mathbb{E}\left[Y-1\right]$
  represents the expected percentage change in the sub-account balance, due to jumps.
  Here,  $\mathbb{E}[\cdot]$ is the expectation operator taken under the risk-neutral measure $\mathfrak{Q}$.

\item
The Poisson process $\{\pi(t)\}_{0\le t \le T}$, and the sequence of random variables $\{Y_{k}\}_{k = 1}^{\infty}$  are mutually independent, as well as independent of the Wiener processes $\{\Wz (t)\}_{0\le t \le T}$ and $\{\Wr (t)\}_{0\le t \le T}$.
\end{itemize}
In \eqref{eq:Z_dynamics}, $\sigz>0$ is the instantaneous volatility of $Z(t)$ and  $\beta>0$ is
the proportional annual insurance rate paid by the policy holder.
The constant $\rho$, where $\left|\rho \right| < 1$, is a correlation coefficient between $Z(t)$ and $R(t)$.\footnote{Through a Cholesky factorization, the correlation coefficient between $\Wr (t)$ and $\rho \Wz (t) + \sqrt{1-\rho^2} \Wr (t)$ is $\left|\rho \right| < 1$.}
In \eqref{eq:R_dynamics}, $\sigr >0$ is the instantaneous volatility of the short rate, $\delta>0$ is the speed of mean-reversion, $\theta$ is the long-term mean level.
For simplicity, model parameters are assumed to be constant in time; however, the results of this paper can be generalized to the case of time-dependent parameters.

As a specific example, we consider two distribution for the jump multiplier $Y$, namely
the log-normal distribution \cite{merton1975}, and the log-double-exponential
distribution \cite{kou01}. Specifically, we denote by $b(y)$ the density function of the random variable $\ln(Y)$. In the former case, $\ln(Y)$ is normally distributed with mean $\nu$ and standard deviation $\varsigma$, and
\EQA
b\left(y\right)=\frac{1}{\varsigma \sqrt{2\pi} }  \exp\left\{ -\frac{\left(y - \nu \right)^{2}}{2 \varsigma^{2}}\right\}.\label{eq: PDF p(psi) Merton}
\ENA
In the latter case, $\ln Y$ has an asymmetric double-exponential distribution with
\EQA
b\left(y\right)= p_u \eta_{1} e^{-\eta_{1}y}
{\bf{1}}_{\{y \geq 0\}} +
\left(1-p_u\right)\eta_{2}e^{\eta_{2}y}
{\bf{1}}_{\{y<0\}}.
\label{eq: PDF p(psi) Kou}
\ENA
Here, $p_u \in\left[0, 1\right]$, $\eta_{1}>1$ and $\eta_{2}>0$. Given that a jump occurs, $p_u$ is the probability of an upward jump, and $(1-p_u)$ is
the probability of a downward jump.

\section{Impulse control formulation}
\label{sec:formulation}
For the controlled underlying process $\left( Z(t), R(t), A(t) \right)$, $t \in [0, T]$, let $(z, r, a)$ be the state of the system.  Let $\tau = T- t$, for $z > 0$,  we apply the change of variable $w = \ln(z) \in (-\infty, \infty)$.
With ${\mathbf{x}} = (w, r, a, \tau)$, we denote by $v({\mathbf{x}}) \equiv v(w, r, a, \tau)$ the time-$\tau$ no-arbitrage price of a GMWB when $Z(t) = e^w$, $R(t) = r$ and $A(t) = a$.
Using dynamic programming, we can show that, under dynamics  \eqref{eq:A_dynamics}-\eqref{eq:dynamics},
$v(w, r, a, \tau)$ satisfy the impulse control formulation \cite{online, chen08a}
\EQA
\label{eq:omega_inf_all}
&&\min\bigg\{
v_{\tau} - \mathcal{L}v - \mathcal{J}v
- \sup_{\hat{\gamma} \in [0, C_r]}
\hat{\gamma}\left(1 - e^{-w}v_w - v_a\right) \mathbf{1}_{\{a > 0\}},
\nonumber
\\
&& \qquad \qquad v - \sup_{\gamma \in [0,a]}
\left[ v\left(\ln\left(\max\left(e^w-\gamma, e^{\winf}\right)\right), a - \gamma, \tau\right)
+ \left(1-\mu\right)\gamma  - c\right]
\bigg\} ~=~ 0,
\ENA
where $(w, r, a, \tau) \in \Oinf ~\equiv~ (-\infty, \infty) \times (-\infty, \infty) \times [a_{\min}, a_{\max}]  \times [0, T)$, with $a_{\min} = 0$ and $a_{\max} = z_0$,
and 
\EQA
\mathcal{L}v\left(\mathbf{x}\right)
&=& \frac{\sigz^{2}}{2} v_{ww} + \rho \sigz \sigr v_{wr} + \frac{\sigr^{2}}{2} v_{rr} +  \left(r-\frac{\sigz^{2}}{2} - \beta  - \lambda \kappa \right) v_w +
\delta \left(\theta - r \right) v_r - (r + \lambda) v,
\nonumber
\\
\mathcal{J}v\left(\mathbf{x}\right) &=&
\lambda\int_{-\infty}^{\infty} v(w + y, r, a, \tau )~b(y)~dy.
\label{eq:Operator_LJ}
\ENA
Here, in \eqref{eq:omega_inf_all}, $\winf \ll 0$ is a constant to avoid the indeterminate case of
of $\ln(0)$, due to condition \eqref{eq:Z_dynamics*}; the constant positive fixed cost $c$ is introduced as a technical tool to ensure uniqueness of the impulse formulation, as commonly done in the impulse control literature \cite{OksendalSulemBook3, Pham, MNIF06};
in \eqref{eq:Operator_LJ}, $b\left(\cdot\right)$  is the probability density function of $\ln Y$.

\subsection{Localization}
The GMWB impulse control formulation \eqref{eq:omega_inf_all} is posed on
the infinite domain $\Oinf$. For problem statement and convergence analysis
of numerical schemes, we define a localized GMWB impulse formulation.
To this end, with $w_{\min} < 0 < w_{\max}$, $r_{\min} < 0 < r_{\max}$, and $\left|w_{\min}\right|$, $w_{\max}$, $\left|r_{\min}\right|$, $r_{\max}$ sufficiently large,
we define the following sub-domains:

\begin{figure}[!ht]
\begin{minipage}{0.45\linewidth}
\begin{linenomath}
\begin{align}
\label{eq:sub_domain_whole}
\Omega_{\text{in}} &=
(w_{\min}, w_{\max}) \times (r_{\min}, r_{\max}) \times (a_{\min}, a_{\max}] \times (0, T],
\nonumber
\\
\Oinf_{\tau_0} &= (-\infty, \infty) \times
(-\infty, \infty)  \times [a_{\min}, a_{\max}] \times \{0\},
\nonumber
\\
\Oinf_{w_{\max}} &= [w_{\max}, \infty) \times
(r_{\min}, r_{\max}) \times [a_{\min}, a_{\max}] \times (0, T],
\nonumber
\\
\Oinf_{w_{\min}}&= (-\infty, w_{\min}] \times
(r_{\min}, r_{\max}) \times (a_{\min}, a_{\max}] \times  (0, T],
\\
\Omega_{a_{\min}} &= (w_{\min}, w_{\max}) \times
(r_{\min}, r_{\max}) \times \{a_{\min}\} \times (0, T],
\nonumber
\\
\Oinf_{wa_{\min}} &= (-\infty, w_{\min}] \times
(r_{\min}, r_{\max}) \times \{a_{\min}\} \times (0, T],
\nonumber
\\
\Oinf_{\myot} &= \Oinf \setminus \Omega_{\text{in}}  \setminus \Oinf_{\tau_0} \setminus \Oinf_{w_{\max}}
\setminus \Oinf_{w_{\min}} \setminus \Omega_{a_{\min}} \setminus \Oinf_{wa_{\min}}.
\nonumber
\end{align}
\end{linenomath}
An illustration of the sub-domains for the localized problem corresponding to a fixed $a \in [a_{\min}, a_{\max}]$
is given in Figure~\ref{fig:domain}.
\end{minipage}
\hspace*{-0.5cm}
\begin{minipage}{0.48\linewidth}
\begin{center}
\begin{tikzpicture}[scale=0.50]
   \draw [thick] [stealth-stealth] (-2,-1) -- (10,-1);
   \draw [thick] [stealth-stealth] (-2,5) -- (10,5);
   \draw [thick] [stealth-stealth] (3,-3) -- (3,7.5);
   \draw [thick](1.25,-1) --(1.25,5);
   \draw [thick](7,-1) --(7,5);
    \draw [dotted, line width=0.5mm] (-1,-1) --(-1,-2.5) -- (9,-2.5) -- (9,-1);
    \draw [dotted, line width=0.5mm] (-1,5) --(-1,6.5) -- (9,6.5) -- (9,5);
   \node at (5,2.5) {\scalebox{1.0}{$\Omega_{\myin}$}};
   \node at (5.2,1.5) {(\scalebox{1.0}{$\Omega_{a_{\min}}$})};

   \node at (-1,2.5) {\scalebox{1.0}{$\Oinf_{w_{\min}}$}};
   \node at (-1,1.5) {(\scalebox{1.0}{$\Oinf_{wa_{\min}}$})};
   \node at (9,2.5) {\scalebox{1.0}{$\Oinf_{w_{\max}}$}};
   \node at (4.5,5.75) {\scalebox{1.0}{$\Oinf_{\myot}$}};
   \node at (4.5,-1.75) {\scalebox{1.0}{$\Oinf_{\myot}$}};
   \node [below] at (-2,0) {\scalebox{0.8}{$-\infty$}};
   \node [below] at (10,0) {\scalebox{0.8}{$\infty$}};
   \node [right] at (3,-3) {\scalebox{0.8}{$-\infty$}};
   \node [right] at (3,7.5) {\scalebox{0.8}{$\infty$}};
   \node [below] at (0.5,0) {\scalebox{0.8}{$w_{\min}$}};
   \node [below] at (8,0) {\scalebox{0.8}{$w_{\max}$}};
   \node [above] at (3.8,-1) {\scalebox{0.8}{$r_{\min}$}};
   \node [below] at (3.8,5) {\scalebox{0.8}{$r_{\max}$}};
\end{tikzpicture}
\end{center}
\vspace*{-0.5cm}
\caption{Spatial computational domain at each $\tau$ and for a fixed $a \in [a_{\min}, a_{\max}]$;
\\
at $a = 0$,  $\Omega_{\myin} \equiv \Omega_{a_{\min}}$ and $\Oinf_{w_{\min}} \equiv \Oinf_{wa_{\min}}$.}
\label{fig:domain}
\end{minipage}
\end{figure}
We now present equations for sub-domains defined in \eqref{eq:sub_domain_whole}.
\begin{itemize}
\item For $(w, r, a, \tau)\in \Omega_{\myin}$, we have \eqref{eq:omega_inf_all}.

\item For $(w, r, a, \tau)\in \Oinf_{\tau_0}$,
we use the initial condition $v(w, a, 0) =  \max(e^{w}, (1-\mu)a-c) \wedge \einf$
for a finite $\wpinf \gg w_{\max}$, where $x\wedge y = \min(x, y)$.

\item For $(w, r, a, \tau) \in \Oinf_{w_{\max}}$, we follow \cite{Dai, chen08a} to impose the Dirichelet-type boundary condition
\EQA
\label{eq:vmax_p2}
v  = e^{-\beta \tau} (e^w \wedge \einf).
\ENA
We note that the theoretical quantity $\wpinf$ is needed to indicate that the
solutions $\Oinf_{\tau_0}$ and $\Oinf_{w_{\max}}$ are bounded as $w \to \infty$, and
it does not need to be numerically specified.

\item
As $w \to -\infty$ (i.e.\
$z = e^w \to 0$),
using the asymptotic forms of the HJB-QVI \eqref{eq:omega_inf_all}, for $(w, r, a, \tau) \in  \Oinf_{w_{\min}}$,
\eqref{eq:omega_inf_all} is reduced to the boundary condition
\EQA
\label{eq:vwmin_p2}
\min \left\{v_{\tau} - \Ld v - \sup_{\hat{\gamma} \in [0, C_r]}
                \left(\hat{\gamma} - \hat{\gamma}v_a\right) {\mathbf{1}}_{\{a>0\}} ,
                 v -  \sup_{\gamma \in [0,a]}
   [v(w, a -\gamma, \tau) + (1-\mu)\gamma -c] \right\}=0,
\ENA
where the degenerated differential operator $\Ld$ is defined by
\EQA
\label{eq:Ltilde}
\Ld v ~:= ~ \frac{\sigr^{2}}{2} v_{rr} + \delta \left( \theta -r \right) v_r - r  v.
\ENA
This is essentially a Dirichlet boundary condition since it
can be solved without using any information from $\Omega_{\myin} \cup \Omega_{a_{\min}}$.

\item For $(w, r, a, \tau) \in \Omega_{a_{\min}}$,
the impulse formulation \eqref{eq:omega_inf_all}
becomes the PIDE
$v_{\tau} - \mathcal{L} v - \mathcal{J}v = 0$.

\item For $(w, r, a, \tau) \in \Oinf_{wa_{\min}}$,
\eqref{eq:vwmin_p2} becomes
$v_{\tau} - \Ld v= 0$.


\item
For $(w, r, a, \tau) \in \Oinf_{\myot}$, we note in this case, significant difficulty arises in choosing a boundary 
condition based on asymptotic forms of the HJB-QVI \eqref{eq:omega_inf_all},
or the holder's optimal withdrawal behaviours.  Since a detailed analysis of the boundary conditions is not the focus of this paper, we leave it as a topic for future research. For simplicity, we follow \cite{DANG2010, Dempster1996} to choose Dirichlet-type ``stopped process'' boundary conditions where we stop the processes $\left(Z(t), R(t), A(t) \right)$ when $R(t)$ hits the boundary. Thus, $(w, r, a, \tau) \in \Oinf_{\myot}$, the value is simply the discounted payoff for the current values of the state variables, i.e.
\EQA
\label{eq:vother}
v(w, r, a, \tau)
&=& \z(w, r, a, \tau)
~=~  p_b(\bar{r}, \tau;T) \max(e^w, (1-\mu) a -c) \wedge \einf,
\ENA
where $\bar{r} := \min( \max( r, r_{\min}), r_{\max} )$. Here, $p_b(r, \tau;T)$ is the price at time $(T-\tau)$ of a zero coupon bond with maturity $T$ given by the closed-form expression \cite{interestratemodels}
\EQA
\label{eq:zero_coupon}
p_b(r, \tau;T) = \exp \left\{ \left(\theta - \frac{\sigr^2}{2 \delta^2} \right) \left(\frac{1}{\delta} \left(1 - e^{-\delta \tau} \right) - \tau \right)  - \frac{\sigr^2}{4 \delta^3} \left(1 - e^{-\delta \tau} \right)^2   - \frac{r}{\delta} \left(1 - e^{-\delta \tau} \right)  \right\}.
\ENA
\end{itemize}

Note that no further information is needed along the boundary $a \to a_{\max}$ due to the hyperbolic
nature of the variable $a$ in the HJB-QVI \eqref{eq:omega_inf_all}.
Although the above-mentioned artificial boundary conditions may induce additional approximation errors in the numerical solutions,  we can make these errors arbitrarily small by choosing sufficiently large values for $|w_{\min}|$, $w_{\max}$, $|r_{\min}|$, and $r_{\max}$.

\subsection{Definition of viscosity solution}
We now write the GMWB pricing problem in a compact form, which includes the terminal
and boundary conditions in a single equation. We define the
intervention operator
\begin{linenomath}
\begin{subequations}
\label{eq:Operator_M}
\begin{empheq}[left={\mathcal{M}(\gamma) v ({\mathbf{x}})= \empheqlbrace}]{alignat=3}
& v(w, r, a -\gamma, \tau)
   + \gamma(1-\mu) -c & &\quad {\mathbf{x}} \in \Oinf_{w_{\min}},
\label{eq:Operator_M_a}
\\
&  v\left(\ln(\max(e^w-\gamma, e^{\winf})), r, a -\gamma, \tau\right)
   + \gamma(1-\mu) -c & &\quad{\mathbf{x}} \in \Omega_{\myin}.
\label{eq:Operator_M_b}
\end{empheq}
\end{subequations}
\end{linenomath}
With ${\mathbf{x}} = (w, r, a, \tau)$, we let $Dv({\mathbf{x}})$ and
$D^2 v( {\mathbf{x}} )$ represent the first-order and second-order partial derivatives of $v\left( {\mathbf{x}} \right)$, and define
\EQA
\label{eq:Fomega_def}
F_{\Oinf} \left({ \mathbf{x}}, v\right) \equiv
F_{\Oinf}\left({ \mathbf{x}},
                    v({\mathbf{x}}),
                    Dv({\mathbf{x}}),
                    D^2 v({\mathbf{x}}),
                    \mathcal{J} v({\mathbf{x}}),
                    \mathcal{M} v({\mathbf{x}})
                \right)
\ENA
where
\EQAS
F_{\Oinf} \left({ \mathbf{x}}, v\right)
~=~
\left\{
\begin{array}{lllll}
F_{\myin} \left({ \mathbf{x}}, v\right) &\equiv&
    F_{\myin}\left( { \mathbf{x}},
                    v({\mathbf{x}}),
                    Dv({\mathbf{x}}),
                    D^2 v({\mathbf{x}}),
                    \mathcal{J} v({\mathbf{x}}),
                    \mathcal{M}v({\mathbf{x}})
                \right),
           &
           \mathbf{x} \in \Omega_{\myin},
\\
F_{a_{\min}}\left({ \mathbf{x}}, v\right) &\equiv& F_{a_{\min}}\left(
                        {\mathbf{x}},
                        v({\mathbf{x}}),
                        Dv({\mathbf{x}}),
                        D^2 v({\mathbf{x}})
                        , \mathcal{J} v({\mathbf{x}})
                        \right),
                    &
                    \mathbf{x} \in \Omega_{a_{\min}},

\\
F_{w_{\min}}\left({ \mathbf{x}}, v\right) &\equiv& F_{w_{\min}}\left(
                 {\mathbf{x}},
                 v({\mathbf{x}}),
                 Dv({\mathbf{x}}),
                 \mathcal{M} v({\mathbf{x}})
                        \right),
                &
                \mathbf{x} \in \Oinf_{w_{\min}},
\\
F_{wa_{\min}} \left({ \mathbf{x}}, v\right) &\equiv& F_{wa_{\min}}\left(
                        {\mathbf{x}},
                        v({\mathbf{x}}),
                        Dv({\mathbf{x}})
                        \right),
                    &
                    \mathbf{x} \in \Oinf_{wa_{\min}},
\\
F_{w_{\max}}\left({ \mathbf{x}}, v\right) &\equiv& F_{w_{\max}}\left(
                 {\mathbf{x}},
                 v({\mathbf{x}})
                \right),
                &
                \mathbf{x} \in \Oinf_{w_{\max}},
\\
F_{\myot}\left({ \mathbf{x}}, v\right) &\equiv& F_{\myot}\left(
                 {\mathbf{x}},
                 v({\mathbf{x}})
                \right),
                &
                \mathbf{x} \in \Oinf_{\myot},
\\
F_{\tau_0}\left({ \mathbf{x}}, v\right) &\equiv& F_{\tau_0}({\mathbf{x}}, v({\mathbf{x}})),
        &
    \mathbf{x} \in \Oinf_{\tau_0},
\end{array}
\right.
\ENAS
with operators
\EQA
F_{\myin}\left({ \mathbf{x}}, v\right) &=&
         \min \left[ v_{\tau} - \mathcal{L} v - \mathcal{J}v
                -\sup_{\hat{\gamma} \in [0, C_r]}
                \left(\hat{\gamma} - \hat{\gamma}e^{-w}v_w - \hat{\gamma}v_a\right){\mathbf{1}}_{\{a>0\}} ,
                 v -  \sup_{\gamma \in [0, a]} \mathcal{M} v  \right],
\label{eq:Finn}
\\
F_{w_{\min}}\left({ \mathbf{x}}, v\right) &=&    \min \left[v_{\tau} - \Ld v
                -\sup_{\hat{\gamma} \in [0, C_r]}
                \left(\hat{\gamma} - \hat{\gamma}v_a\right) {\mathbf{1}}_{\{a>0\}} ,
                 v - \sup_{\gamma \in [0, a]} \mathcal{M} v
                 \right],
\label{eq:fWmin}
\\
F_{a_{\min}}\left({ \mathbf{x}}, v\right) &=&
         v_{\tau} - \mathcal{L} v - \mathcal{J}v,
\label{eq:fAmin}
\\
F_{wa_{\min}}\left({ \mathbf{x}}, v\right) &=&
         v_{\tau} - \Ld v,
\label{eq:fWAmin}
\\
F_{w_{\max}}\left({ \mathbf{x}}, v\right) &=&
         v -  e^{-\beta \tau} (e^w \wedge \einf),
\label{eq:fWmax}
\\
F_{\myot}\left({ \mathbf{x}}, v\right) &=&
             v - \z(w, r, a, \tau),
\label{eq:fother}
\\
F_{\tau_0}\left({ \mathbf{x}}, v\right) &=&
            v - \max(e^w, (1-\mu) a -c) \wedge \einf.
\label{eq:ftau0}
\ENA

\begin{definition}[Impulse control GMWB pricing problem]
\label{def:impulse_def}
The  pricing problem for the GMWB under
an impulse control formulation
is defined as
\EQA
\label{eq:gmwb_def}
F_{\Oinf}\left( { \mathbf{x}},
                    v({\mathbf{x}}),
                    Dv({\mathbf{x}}),
                    D^2 v({\mathbf{x}}),
                    \mathcal{J} v({\mathbf{x}}),
                    \mathcal{M} v({\mathbf{x}})
                \right) ~=~ 0,
\ENA
where the operator $F_{\Oinf}(\cdot)$
is defined in \eqref{eq:Fomega_def}.
\end{definition}
Next, we recall the notions of the upper semicontinuous (u.s.c.\ in short)
and the lower semicontinuous (l.s.c.\ in short) envelops of a function
$u: \mathbb{X} \rightarrow \mathbb{R}$, where $\mathbb{X}$ is a closed
subset of $\mathbb{R}^n$. They are respectively denoted by $u^*(\cdot)$
(for the u.s.c.\ envelop) and $u_*(\cdot)$ (for the l.s.c.\ envelop),
and are given by
\EQAS
\label{eq:envelop}
u^*({\mathbf{\hat{x}}}) = \limsup_{
    \subalign{{\mathbf{x}} &\to {\mathbf{\hat{x}}}
\\
{\mathbf{x}}, {\mathbf{\hat{x}}} &\in\mathbb{X}
}}
u({\mathbf{x}})
\quad
(\text{resp.}
\quad
u_*({\mathbf{\hat{x}}}) = \liminf_{
    \subalign{{\mathbf{x}} &\to {\mathbf{\hat{x}}}
\\
{\mathbf{x}}, {\mathbf{\hat{x}}} &\in\mathbb{X}
}}
u({\mathbf{x}})
).
\ENAS

In general, the solution to impulse control problems are non-smooth, and we seek
the viscosity solution of equation~\eqref{eq:gmwb_def} \cite{davis2010, Seydel2009, Guo2009}.
Since equation~\eqref{eq:gmwb_def} is defined on an infinite domain,
we need to have a suitable growth condition at infinity for the solution
\cite{Barles2008, Seydel2009}. To this end, let $\G$ be the set of  bounded functions
defined by \cite{Barles2008, Seydel2009}
\EQ
\label{eq:G}
\begin{aligned}
\G &= \l\{
u: \Oinf \to \mathbb{R},
~~\sup_{{\bf{x}} \in \Oinf}  |u({\bf{x}})| < \infty
\r\}.
\end{aligned}
\EN
\begin{definition}[Viscosity solution of equation \eqref{eq:gmwb_def}]
\label{def:vis_def_common}
A~locally bounded function $v \in \G$ is a viscosity subsolution (resp.\ supersolution) of \eqref{eq:gmwb_def}
in  $\Oinf$ if for all test function $\phi \in \G \cap \C{\Oinf}$
and for all points ${\bf{\hat{x}}} \in \Oinf$ such that
$v^*-\phi$ has a \emph{global} maximum on $\Oinf$ at ${\bf{\hat{x}}}$
and $v^*({\bf{\hat{x}}}) = \phi({\bf{\hat{x}}})$
(resp.\ $v_*-\phi$ has a \emph{global} minimum on $\Oinf$ at ${\bf{\hat{x}}}$
and $v_*({\bf{\hat{x}}}) = \phi({\bf{\hat{x}}})$), we have
\begin{eqnarray}
\label{eq:Def1}
\left(F_{\Oinf}\right)_* \left({\bf{\hat{x}}}, \phi({\bf{\hat{x}}}), D\phi({\bf{\hat{x}}}), D^2 \phi({\bf{\hat{x}}}),
            \mathcal{J} \phi({\bf{\hat{x}}}),
            \mathcal{M} \phi({\bf{\hat{x}}})
             \right) &\leq  & 0,
    \\
    \big(
\text{resp.\ }
\quad
      \left(F_{\Oinf}\right)^* \l(
              {\bf{\hat{x}}}, \phi({\bf{\hat{x}}}), D\phi({\bf{\hat{x}}}), D^2 \phi({\bf{\hat{x}}}),
             \mathcal{J} \phi ({\bf{\hat{x}}}),
             \mathcal{M} \phi ({\bf{\hat{x}}})
             \r) &\geq & 0,
\big)\nonumber
\end{eqnarray}
where the operator $F_{\Oinf}(\cdot)$ is defined in \eqref{eq:Fomega_def}.

\medskip
(ii) A locally bounded function $v \in \G$ is a viscosity solution of \eqref{eq:gmwb_def}
in  $\Omega_{\myin} \cup \Omega_{a_{\min}}$ if $v$ is a viscosity subsolution and a viscosity supersolution
in $\Omega_{\myin} \cup \Omega_{a_{\min}}$.
\end{definition}

\subsection{A strong comparison result}
\label{ssc:comparison}
In the context of numerical solutions to HJB-QVIs, convergence of numerical methods to the viscosity
typically requires stability, consistency, monotonicity, provided that a strong comparison result \cite{crandall-ishii-lions:1992, MNIF06, Jakobsen2006, Barles2008, Seydel2009, Berestycki2014, barles-souganidis:1991, barles-burdeau:1995}.  
Specifically, using stability, consistency, and monotonicity of a numerical scheme,
the common route is to establish the candidate for u.s.c.\ subsolution (resp.\ l.s.c.\ supersolution) of the HJB-QVI
using $\limsup$ (resp.\ $\liminf$) of the numerical solutions as a discretization parameter
approaches zero. We respectively denote by $\hat{u}$ the subsolution (resp.\ $\hat{v}$ the supersolution)
in a target convergence region $\mathcal{S}$ which is a non-empty subset of $\Oinf$. By construction, we have
$\hat{u}({\mathbf{x}}) \ge \hat{v}({\mathbf{x}})$ for all ${\mathbf{x}} \in \mathcal{S}$.
If a strong comparison result holds in $\mathcal{S}$, it means that for subsolution $\hat{u}({\mathbf{x}})$ and supersolution $\hat{v}({\mathbf{x}})$, we have $\hat{u}({\mathbf{x}}) \le \hat{v}({\mathbf{x}})$ for all ${\mathbf{x}} \in \mathcal{S}$. Therefore, a unique continuous viscosity solution exists in $\mathcal{S}$.
We note that, while stability, consistency and monotonicity are required properties of numerical methods, a strong comparison result is problem dependent.

In our paper {\numPDEblue{\cite[Lemma~B.1 and Theorem~B.1]{online}}}, we present a framework for proving a strong comparison result for HJB-QVIs of a form similar to \eqref{eq:gmwb_def} where jump-diffusion dynamics with a positive constant interest rate
are \blue{considered}. For the HJB-QVI~\eqref{eq:gmwb_def}, using the aforementioned framework, we are able to show a strong comparison result for $\Omega_{\myin} \cup \Omega_{a_{\min}}$, where $\Omega_{a_{\min}} \subset \partial \Omega_{\myin}$.
This result is \blue{presented} in Theorem~\ref{thm:comparison} below.
\begin{theorem}
\label{thm:comparison}
If function $\hat{u}$ (resp.\ $\hat{v}$) is a u.s.c.\ viscosity subsolution (resp.\ l.s.c.\ supersolution)
of the HJB-QVI~\eqref{eq:gmwb_def} in $\Omega$ in the sense of Definition~\ref{def:vis_def_common}, then
we have $\hat{u} \leq \hat{v}$ in $\Omega_{\myin} \cup \Omega_{a_{\min}}$.
\end{theorem}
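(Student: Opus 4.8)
The plan is to argue by contradiction using the Crandall--Ishii--Lions doubling-of-variables technique, following the framework of \cite[Lemma~B.1 and Theorem~B.1]{online} and recording the modifications forced by the present setting: (i)~the additional state variable $r$ together with the cross-derivative term $\rho\sigz\sigr v_{wr}$, and (ii)~the zeroth-order coefficient $-(r+\lambda)$, which---unlike the constant positive interest rate case of \cite{online}---is not sign-definite. Suppose, for contradiction, that $M_0:=\sup_{\overline{\Omega_{\myin}}}(\hat u-\hat v)>0$.

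First I would perform two standard reductions. Writing the interior operator as $v_\tau-\mathcal L v-\mathcal J v-\sup_{\hat\gamma}(\cdots)$, the $+\lambda v$ concealed in $-\mathcal L v$ combines with $-\mathcal J v$ into the purely nonlocal term $\lambda\int_{\mathbb R}\l(v(w,r,a,\tau)-v(w+s,r,a,\tau)\r)b(s)\,ds$, so the only non-sign-definite zeroth-order contribution that remains is $-rv$; since on the localized domain $r$ lies in the \emph{bounded} interval $(r_{\min},r_{\max})$, replacing $(\hat u,\hat v)$ by $(e^{\lambda_0\tau}\hat u,e^{\lambda_0\tau}\hat v)$ with $\lambda_0>-r_{\min}$ makes the net zeroth-order coefficient $\lambda_0+r$ bounded below by a positive constant $c_0$, while the intervention term is left qualitatively unchanged because the fixed cost $c>0$ is strictly positive. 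Second, I would dispose of the boundary of $\Omega_{\myin}$: on $\Oinf_{w_{\min}}$ and $\Oinf_{wa_{\min}}$ the equation is a self-contained lower-dimensional HJB-QVI whose differential and nonlocal parts do not involve $w$, so comparison there is proved first by the same argument in fewer variables; on $\Oinf_{w_{\max}}$, on the faces $r=r_{\min},r_{\max}$ contained in $\overline{\Oinf_{\myot}}$, and on $\Oinf_{\tau_0}$ the governing equations are of Dirichlet type, whence $\hat u^*\le\text{data}\le\hat v_*$ there; and $\Omega_{\myin}$ already includes the face $a=a_{\max}$, where the interior equation continues to hold (no extra condition, by the first-order nature of $a$). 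It then suffices to rule out a positive maximum of $\hat u-\hat v$ attained in $\Omega_{\myin}\cup\Omega_{a_{\min}}$.

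The core step is the doubling. For small $\epsilon>0$ I would maximize $\Phi_\epsilon(\x,{\mathbf y})=\hat u(\x)-\hat v({\mathbf y})-|\x-{\mathbf y}|^2/(2\epsilon)$ (with the usual auxiliary perturbation near $\tau=0$, and, so as to respect the global-maximum form of Definition~\ref{def:vis_def_common}, either the doubling taken over $\Oinf\times\Oinf$ with a weight at infinity that changes by at most a multiple of $|s|$ under a $w$-shift, or the equivalent local form of the viscosity inequalities for nonlocal equations), obtaining a maximizer $(\x_\epsilon,{\mathbf y}_\epsilon)$ which, by the reductions above, lies in $\Omega_{\myin}\cup\Omega_{a_{\min}}$ for $\epsilon$ small. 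Applying Ishii's lemma gives $p=(\x_\epsilon-{\mathbf y}_\epsilon)/\epsilon$ and symmetric matrices $X\preceq Y$, which I would insert into the subsolution inequality for $\hat u$ and the supersolution inequality for $\hat v$ (via $F_{\myin}$, or via the linear PIDE $v_\tau-\mathcal L v-\mathcal J v$ when $\x_\epsilon$ or ${\mathbf y}_\epsilon$ sits on the face $a=a_{\min}$). Because $\sigz,\sigr$ are constants, the diffusion matrix $\Sigma$, with entries $\Sigma_{ww}=\sigz^2$, $\Sigma_{wr}=\Sigma_{rw}=\rho\sigz\sigr$, $\Sigma_{rr}=\sigr^2$, is positive semidefinite and identical at the two points, so $\tfrac12\operatorname{tr}(\Sigma X)-\tfrac12\operatorname{tr}(\Sigma Y)\le0$ with no $O(1/\epsilon)$ remainder---this is where the cross term $\rho\sigz\sigr v_{wr}$ is absorbed. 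The drift terms (including $\delta(\theta-r)v_r$) and the $\sup_{\hat\gamma\in[0,C_r]}(\cdots)$ term are Lipschitz in $(v_w,v_r,v_a)$ on the bounded domain and over the compact control set, contributing only a modulus of $|\x_\epsilon-{\mathbf y}_\epsilon|+|\x_\epsilon-{\mathbf y}_\epsilon|^2/\epsilon$. The nonlocal term I would handle by the standard splitting of $\mathcal J$ into a near part, controlled by $X\preceq Y$, and a far part, controlled by the global bound $\hat u,\hat v\in\G$ and the finite moments of $b(\cdot)$---using that the jump is a pure $w$-translation and $|\x-{\mathbf y}|^2/(2\epsilon)$ is $w$-shift invariant, so the difference of the nonlocal terms, together with the $\lambda v$ recovered above, ends up with the favourable sign. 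Finally, if at $(\x_\epsilon,{\mathbf y}_\epsilon)$ the subsolution inequality comes instead from the obstacle branch $\hat u\le\sup_\gamma\mathcal M\hat u$, I would select a near-optimal $\gamma^\ast$, use $\hat v\ge\mathcal M(\gamma^\ast)\hat v$, and note that $\mathcal M(\gamma^\ast)(\hat u-\hat v)$ evaluates $\hat u-\hat v$ at a point shifted only in $a$; the strictly positive fixed cost $c$ forces a strict decrease, and iterating (each step lowering $a$ and subtracting $c$) contradicts maximality after finitely many steps---exactly the impulse-control argument of \cite{online}, untouched by the modifications. Collecting the estimates and letting $\epsilon\to0$ yields $c_0M_0\le0$, contradicting $M_0>0$.

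I expect the principal obstacle to be organizational rather than a single hard estimate: reconciling the nonlocal operator $\mathcal J$---which always samples $w+s$ over all of $\mathbb R$, hence outside the localized $w$-window---with the global-maximum form of Definition~\ref{def:vis_def_common} and with the several boundary pieces of $\Omega_{\myin}$. This is precisely why the definition is phrased with global extrema and why $\hat u,\hat v$ are taken bounded in $\G$; once the splitting of $\mathcal J$ and the $w$-shift invariance of the penalization are in place, $\mathcal J$ becomes innocuous. The only genuinely new analytic ingredients relative to \cite{online} are propagating the third variable $r$ and the term $v_{wr}$ through Ishii's lemma---painless here because $\sigz,\sigr$ are constant---and taming the non-sign-definite discount $-(r+\lambda)v$, which the boundedness of $(r_{\min},r_{\max})$, the $\lambda v$ recovered from $-\mathcal L v$, and the preliminary exponential rescaling handle jointly.
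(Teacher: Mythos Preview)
Your proposal is correct and follows the paper's own route: both fix the sign-indefinite discount $-rv$ by an exponential-in-$\tau$ rescaling---the paper passes to $e^{-q\tau}\hat u$, $e^{-q\tau}\hat v$ with $q>-r_{\min}$, which is your intended transformation (note the sign: to obtain the net zeroth-order coefficient $\lambda_0+r$ you need $e^{-\lambda_0\tau}$, not $e^{+\lambda_0\tau}$)---and then invoke the doubling/Ishii machinery and the impulse-iteration argument of \cite{online}. The details you sketch (cross derivative harmless because $\sigz,\sigr$ are constant; nonlocal term handled via the $w$-shift invariance of the penalization; obstacle branch dealt with by iterating on the strictly positive fixed cost $c$) are precisely what that reference supplies, so the two proofs are essentially identical.
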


\begin{proof}[Proof of Theorem~\ref{thm:comparison}]
We follow the framework presented in \cite{online}[Lemma~B.1 and Theorem~B.1].
With the target region being $\mathcal{S} =  \Omega_{\myin} \cup \Omega_{a_{\min}}$,
we rewrite Definition~\ref{def:vis_def_common} into an equivalent definition as follows.
\begin{itemize}
\item[(i)] In the non-local terms $\mathcal{J}(\cdot)$ and $\mathcal{M} (\cdot)$,
the smooth test function $\phi(\hat{\textbf{x}})$ is replaced by $v^*(\hat{\textbf{x}})$ for subsolution
(resp.\ $v_*(\hat{\textbf{x}})$ for supersolution),
\item[(ii)] The envelopes $(F_{\Oinf})_\ast$ (resp.\ $(F_{\Oinf})^\ast$) is eliminated from the definition of subsolution (resp.\ supersolution).
\end{itemize}
We refer to this definition as Def-A, and it is the definition we use to prove a strong comparison result.\footnote{For the purpose of verifying consistency of a numerical scheme, it is convenient to use Definition~\ref{def:vis_def_common}. However, it turns out more convenient to use the equivalent definition to prove a strong comparison result for the HJB-QVI~\eqref{eq:gmwb_def}. Similar arguments can be also referred to \cite{davis2010, Seydel2009, Azimzadeh2018}.}

Unlike the setting in \cite{online}, where a positive constant interest rate is used,
a Gaussian stochastic interest rate is considered in the present paper, which could be negative.
Therefore, the framework in \cite{online} is not directly applicable
without an important preprocessing step (shown below).
\begin{itemize}

    \item Given the HJB-QVI with $F_{\Oinf}(\cdot) = 0$ in \eqref{eq:gmwb_def},
    let $q >  -r_{\min}$ be fixed, implying $r + q > 0$ for all $r\in (r_{\min}, r_{\max})$, we introduce an HJB-QVI
    $F_{\Oinf}(\cdot; q) = 0$ which is similar to $F_{\Oinf}(\cdot) = 0$ except in
    $\Omega_{\myin} \cup \Omega_{a_{\min}}$, where
    $F_{\myin}(\cdot; q)$ and $F_{a_{\min}}(\cdot; q)$  are defined by
    \EQAS
    F_{\myin}(\mathbf{x}, v; q) &=& \min\bigg[
v_{\tau} - \mathcal{L}v + qv - \mathcal{J}v
- \sup_{\hat{\gamma} \in [0, C_r]}
\hat{\gamma}\left(e^{-q \tau} - e^{-w}v_w - v_a\right) \mathbf{1}_{\{a > 0\}},
\nonumber
\\
&& v - \sup_{\gamma \in [0,a]}
\left[ v\left(\ln\left(\max\left(e^w-\gamma, e^{\winf}\right)\right), a - \gamma, \tau\right)
+ (\left(1-\mu\right)\gamma  - c) e^{-q \tau} \right]
\bigg],
\nonumber
\\
F_{a_{\min}}\left({ \mathbf{x}}, v; q\right) &=&
         v_{\tau} - \mathcal{L} v + qv - \mathcal{J}v.
\ENAS

\item It is straightforward to show that: in the sense of Def-A, if $\hat{u}$ is a u.s.c.\ viscosity subsolution (resp.\ $\hat{v}$ is a l.s.c.\ viscosity supersolution) of
    $F_{\Oinf}(\cdot) = 0$ in $\Omega_{\myin} \cup \Omega_{a_{\min}}$, then $e^{-q\tau}\hat{u}$ is a u.s.c.\ viscosity subsolution (resp.\ $e^{-q\tau}\hat{v}$ is a l.s.c.\ viscosity subsolution) of $F_{\Oinf}(\cdot; q) = 0$ in $\Omega_{\myin} \cup \Omega_{a_{\min}}$.
\end{itemize}
Finally, using the same steps as in Lemma~B.1 and Theorem~B.1 of \cite{online} for the
HJB-QVI $F_{\Oinf}(\cdot; q) = 0$, we can prove that a strong comparison results holds for $\Omega_{\myin} \cup \Omega_{a_{\min}}$, i.e.\ $e^{-q\tau}\hat{u} \le e^{-q\tau}\hat{v}$, or equivalently, $\hat{u} \le \hat{v}$ in $\Omega_{\myin} \cup \Omega_{a_{\min}}$, which is the desired outcome.
\end{proof}
We conclude this subsection by noting that, as well-noted in the literature \cite{MNIF06, chen08a, DangForsyth2014, online, huang:2010, Pham}, it is usually the case that a strong comparison result does not hold on the whole definition domain including boundary sub-domains, because this would imply the continuity of the value function across the boundary regions, which is not true for some impulse control problems, including the HJB-QVI~\eqref{eq:gmwb_def}.
In particular, it is possible that loss of boundary data can occur over parts of $\Gamma = \partial \Omega_{\myin}\setminus \Omega_{a_{\min}}$, i.e.\ as  $\tau \to 0$, $w~\to~\{w_{\min}$, $w_{\max}\}$ and $r~\to~\{r_{\min}$, $r_{\max}\}$, hence, we cannot hope that a strong comparison result holds on $\Gamma$. However, these problematic parts of $\Gamma$ are trivial to handle in the sense that either the boundary data is used or is irrelevant. In all cases, we consider the computed solution on those parts of $\Gamma$ as the limiting value approaching $\Gamma$ from the interior.

\section{Numerical methods }
\label{sec:num}

\subsection{Overview}
\label{ssec:over}
Similar to the approach taken in our papers \cite{online, chen08a},
we will tackle the HJB-QVI \eqref{eq:gmwb_def} from a discrete withdrawal \blue{scenario}
which was first suggested in \cite{Dai}. To this end, we first introduce
a set of discrete intervention (withdrawal) times as follows.
Let $\{\tau_m\}$, $m = 0, \ldots, M$, be a partition of $[0, T]$,
where for simplicity,  an uniform spacing is used, i.e.\ $\tau_{m} = m\Delta \tau$
and $\Delta \tau = T/M$. Following \cite{Dai, chen08a}, there is no withdrawal
allowed at time $t = 0$, or equivalently, at $\tau_M = T$;
therefore, the set of intervention times is $\{\tau_m\}$,
$m = 0, \ldots, M-1$.

Broadly speaking, over the time interval $[\tau_{m}, \tau_{m+1}]$, $m = 0, \ldots, M-1$,
our numerical approach consists of two steps, namely intervention in $[\tau_m, \tau_{m}^+]$.
and time-advancement in $[\tau_{m}^+ , \tau_{m+1}]$.
Central to our method is the time-advancement step for the target region of convergence
$\Omega_{\myin} \cup \Omega_{a_{\min}}$. For this step, $a\in [a_{\min}, a_{\max}]$ is fixed,
and our starting point is a linear PIDE in
$(w, r)$ of the form
    \EQ
    \label{eq:dis_pide_ori}
 v_{\tau} - \mathcal{L}v - \mathcal{J}v = 0,
 \quad
 w \in (-\infty, \infty),~r \in (-\infty, \infty),~\tau \in (\tau_{m}^+ , \tau_{m+1}].
    \EN
where the operators $\mathcal{L}$ and $\mathcal{J}$ are given in \eqref{eq:Operator_LJ},
subject to a generic initial condition at time $\tau_{m}^+$ given by
$\vh(w, r, a, \tau_{m}^+)$ obtained from the intervention step above.
Here,
\begin{linenomath}
\begin{subequations}\label{eq:dis_pide_term}
\begin{empheq}[left={\vh (w, r, a, \tau_{m}^+) = \empheqlbrace}]{alignat=3}
&v(w, r, a, \tau_{m}^+) && \qquad (w, r, a, \tau_{m+1}) \in \Omega_{\myin} \cup \Omega_{a_{\min}},
\label{eq:dis_pide_term_a}
\\
&v_{bc}(w, r, a, \tau_{m})
&& \qquad
(w, r, a, \tau_{m+1}) \in
\Oinf \setminus \left( \Omega_{\myin} \cup \Omega_{a_{\min}} \right).
\label{eq:dis_pide_term_b}
\end{empheq}
\end{subequations}
\end{linenomath}
In \eqref{eq:dis_pide_term_a},  $v(w, r, a, \tau_{m}^+)$ is the intermediate results from the intervention step,
and  $v_{bc}(w, r, a, \tau_{m}^+)$ in \eqref{eq:dis_pide_term_b} is the boundary conditions at time-$\tau_{m}$
satisfying \eqref{eq:vwmin_p2}, \eqref{eq:vmax_p2}, \eqref{eq:vother}
in $\Oinf_{w_{\min}} \cup \Oinf_{wa_{\min}} \cup \Oinf_{w_{\max}} \cup \Oinf_{\myot}$.

The key challenge in solving the PIDE \eqref{eq:dis_pide_ori} is that a closed-form expression for \blue{its} Green's function is not known to exist, due to the $v_r$ term arising from the short rate. (Also see  \cite{Jaimungal2013}
for relevant discussions \blue{related to} similar difficulties). To handle the above challenge, we consider a combination of a semi-Lagrangian (SL) method and a Green's function approach. In particular, we consider writing $\mathcal{L} v = \Lg v +  \Ls v - rv$, where
\EQA
\Lg v := \frac{\sigz^{2}}{2} v_{ww} + \rho \sigz \sigr v_{wr} + \frac{\sigr^{2}}{2} v_{rr} - \lambda \kappa v_w - \lambda v,
~
\Ls v := (r -  \frac{\sigz^{2}}{2} - \beta) v_w + \delta(\theta - r) v_r.
\label{eq:operator_Lgs}
\ENA
To solve the PIDE \eqref{eq:dis_pide_ori} in  $\Omega_{\myin} \cup \Omega_{a_{\min}}$,
we first handle the term $\Ls v - rv$ by \blue{an} SL discretization method in $\Omega_{\myin} \cup \Omega_{a_{\min}}$.
(This is discussed in Subsection~\ref{ssc:sld}.).
We then effectively solve the PIDE of the form
\EQA
&& \bvsl_{\tau} - \Lg \vsl  - \mathcal{J} \vsl = 0, \quad
w \in (-\infty, \infty), ~r \in (-\infty, \infty),~ \tau \in (\tau_{m}^+ , \tau_{m+1}],
\label{eq:dis_pide}
\ENA
where $\vsl$ is the  unknown function,  subject to a generic initial condition $\hat{v}_{\scalebox{0.5}{\text{SL}}}(w, r, a, \taus)$ given as follows. Letting ${\bf{x}} = (w, r, a, \tau_{m+1})$,
for ${\bf{x}}~\in~\Omega_{\myin} \cup \Omega_{a_{\min}}$,
$\hat{v}_{\scalebox{0.5}{\text{SL}}}({\bf{x}})$ given by \blue{an} SL discretization method
combined with $\vh (w, r, a, \tau_m^+)$ provided in \eqref{eq:dis_pide_term_a}-\eqref{eq:dis_pide_term_b};
otherwise, $\hat{v}_{\scalebox{0.5}{\text{SL}}}({\bf{x}})$ is given by $v_{bc}({\bf{x}})$ as in \eqref{eq:dis_pide_term_b}.

To numerically solve the PIDE \eqref{eq:dis_pide} for $\vsl\l(w, r, a, \tau_{m+1}\r)$, we start from
a Green's function approach. It is a known fact that the Green's function $g\left(\cdot\right)$ associated with
the PIDE~\eqref{eq:dis_pide} has the form $g(w, w', r, r', \dtau) \equiv  g(w-w', r-r', \dtau)$
\cite{garronigreenfunctionssecond92, Duffy2015}.
Therefore,  the solution $\vsl\l(w, r, a, \tau_{m+1}\r)$ for
$(w, r) \in \D \equiv (w_{\min}, w_{\max}) \times (r_{\min}, r_{\max})$ can be represented as the convolution integral of the Green's function $g\left(\cdot, \Delta \tau\right)$ and the initial condition $\vhsl(w, r, a, \tau_{m}^+)$ as follows \cite{garronigreenfunctionssecond92, Duffy2015}
\EQ
\label{eq:pide_con_int}
\begin{aligned}
\vsl\l(w, r, \cdot, \tau_{m+1}\r)
&= \iint_{\mathbb{R}^2} g\left(w - w', r - r', \Delta \tau\right)~\vhsl (w', r', \cdot, \tau_{m}^+)~dw'~dr',
&\qquad (w, r) \in \D.
\end{aligned}
\EN
The solution $\vsl\l(w, r, \cdot, \tau_{m+1}\r)$ for $(w, r) \not\in \D$
are given by the boundary conditions
\eqref{eq:vwmin_p2}, \eqref{eq:vmax_p2}, \eqref{eq:vother}.

For computational purposes, we truncate the infinite region of integration of \eqref{eq:pide_con_int}
to
\EQ
\label{eq:D_dagger}
\D^{\dagger} \equiv [w_{\min}^{\dagger}, w_{\max}^{\dagger}] \times [r_{\min}^{\dagger}, r_{\max}^{\dagger}],
\EN
where, for $x \in \{w, r\}$,  $x^{\dagger}_{\min}\ll x_{\min}<0<x_{\max}\ll x^{\dagger}_{\max}$
and $|x^{\dagger}_{\min}|$ and $x^{\dagger}_{\max}$ are sufficiently large.
This results in the approximation
\EQA
\label{eq:green_integral_truncated}
\vsl\l(w, r, \cdot, \tau_{m+1}\r) &\simeq&
\iint_{\D^{\dagger}}
g\left(w - w', r - r', \Delta \tau\right)~\vhsl (w', r', \cdot, \tau_{m}^+)~dw'~dr',
\quad
(w, r) \in \D.
\ENA
The error arising from this truncation is discussed in Section~\ref{section:conv}.

With the above discussion in mind, we define a finite domain
$\Omega = [w_{\min}^{\dagger}, w_{\max}^{\dagger}]  \times [r_{\min}^{\dagger}, r_{\max}^{\dagger}]
\times [a_{\min}, a_{\max}] \times [0, T]$, which consists of
\EQA
\label{eq:sub_domain_truncated_fin}
\Omega_{\myin}
& =& \text{defined in \eqref{eq:sub_domain_whole}},
\quad
\Omega_{a_{\min}}  = \text{defined in \eqref{eq:sub_domain_whole}},
\nonumber
\\
\Omega_{\tau_0}
&=& [w_{\min}^{\dagger}, w_{\max}^{\dagger}] \times [r_{\min}^{\dagger}, r_{\max}^{\dagger}] \times [a_{\min}, a_{\max}]  \times \{0\},
\nonumber
\\
\Omega_{w_{\min}} &=& [w_{\min}^{\dagger}, w_{\min}] \times (r_{\min}, r_{\max}) \times (a_{\min}, a_{\max}]  \times (0, T],
\nonumber
\\
\Omega_{wa_{\min}} &=& [w_{\min}^{\dagger}, w_{\min}] \times (r_{\min}, r_{\max}) \times \{a_{\min}\}  \times (0, T],
\nonumber
\\
\Omega_{w_{\max}} &=& [w_{\max}, w_{\max}^{\dagger}] \times (r_{\min}, r_{\max}) \times [a_{\min}, a_{\max}]  \times (0, T],
\nonumber
\\
\Omega_{\myot} &=& \Omega \setminus \Omega_{\myin} \setminus \Omega_{a_{\min}} \setminus \Omega_{w_{\max}} \setminus \Omega_{wa_{\min}} \setminus \Omega_{w_{\min}} \setminus \Omega_{\tau_0}.
\nonumber
\ENA

We stress that the region $\Omega_{w_{\min}} \cup \Omega_{wa_{\min}} \cup \Omega_{w_{\max}}\cup \Omega_{\myot}$ \blue{plays} an important role
in the proposed numerical method.
In particular, the convolution integral \eqref{eq:pide_con_int} is typically approximated
using efficient computation of an associated discrete \blue{convolution} via Fast-Fourier Transform (FFT).
It is well-documented that wraparound error (due to periodic extension) is an important issue for Fourier methods, particularly in the case of control problems (see, for example, \cite{online}).
Therefore, in \eqref{eq:sub_domain_truncated_fin}, the region $\Omega_{w_{\min}} \cup \Omega_{wa_{\min}}\cup \Omega_{w_{\max}}\cup \Omega_{\myot}$ is also set up to serve as padding areas for nodes in $\Omega_{\myin} \cup \Omega_{a_{\min}}$.
For this purpose, we assume that  $|w_{\min}|$, $w_{\max}$, $|r_{\min}|$ and $r_{\max}$ are chosen sufficiently large so that
\EQA
\label{eq:w_choice_green_jump_form}
w^{\dagger}_{\min} = w_{\min} - \frac{w_{\max} - w_{\min}}{2}
~~&\text{and}&~~
w^{\dagger}_{\max} =  w_{\max} + \frac{w_{\max} - w_{\min}}{2},
\nonumber
\\
r^{\dagger}_{\min} = r_{\min} - \frac{r_{\max} - r_{\min}}{2}
~~&\text{and}&~~
r^{\dagger}_{\max} =  r_{\max} + \frac{r_{\max} - r_{\min}}{2}.
\ENA
As elaborated in \cite{online}, this padding technique is
efficient in controlling wraparound error (also Remark~\ref{rm:wrap}).

Due to withdrawals, the non-local impulse operator $\mathcal{M}(\cdot)$ for $\Omega_{\myin}$, defined in \eqref{eq:Operator_M_b}, requires evaluating a candidate value
at point having $w = \ln(\max(e^w-\gamma, e^{\winf}))$ which could be smaller than $w^{\dagger}_{\min}$,
i.e.\ outside the finite computational domain, if $\winf < w^{\dagger}_{\min}$.
Therefore,  with $w^{\dagger}_{\min}$ (and $w^{\dagger}_{\max}$)
selected sufficiently large as above,   we set $\winf = w^{\dagger}_{\min}$.
That is, $\mathcal{M}(\cdot)$ in \eqref{eq:Operator_M_b} becomes
\EQA
\label{eq:Operator_M_b_trun}
\mathcal{M}v ({\bf{x}}) \equiv \mathcal{M}(\gamma) v ({\bf{x}})
= v\left(\ln(\max(e^w-\gamma, e^{w^{\dagger}_{\min}})), r, a -\gamma, \tau\right)
   + \gamma(1-\mu) -c, \quad{\bf{x}} \in \Omega_{\myin}.
\ENA
This is the intervention operator we use in $F_{\myin}$
for computation and  convergence analysis.

Finally, for a semi-Lagrangian discretization in the setting of HJB equations, common computational difficulties lie in the boundary areas, which typically require a special treatment of computational grids and boundary conditions \cite{Reisinger2017, Arto2018}. In our case, a semi-Lagrangian discretization is only applied in the sub-domain $\Omega_{\myin} \cup \Omega_{a_{\min}}$. It may require information from boundary sub-domains, such as $\Omega_{w_{\min}}$ and $\Omega_{w_{\max}}$, which is readily available from the numerical solutions in these boundary sub-domains. With $|r^{\dagger}_{\min}|$, $r^{\dagger}_{\max}$, $|w^{\dagger}_{\min}|$ and $w^{\dagger}_{\max}$ chosen large enough, we can ensure that a semi-Lagrangian discretization never requires information outside the computational domain $\Omega$.

\subsection{Discretization}
\label{section:discretization}
The computational grid is constructed as follows.
We denote by $N$ (resp. $N^{\dagger}$) the number of  points of an uniform partition of $[w_{\min}, w_{\max}]$
(resp. $[w_{\min}^{\dagger}, w_{\max}^{\dagger}]$).
For convenience,
we typically choose $N^{\dagger} = 2N$ so that only one set of $w$-coordinates is needed.
Also let  $P = w_{\max} - w_{\min}$, and $P^{\dagger} = w^{\dagger}_{\max} - w^{\dagger}_{\min}$.
%
We define $\Delta w = \frac{P}{N} = \frac{P^{\dagger}}{N^{\dagger}}$.
We use an equally spaced partition in the $w$-direction, denoted by
$\{w_n\}$, where
\EQA
\label{eq:grid_w}
    w_n &=& \hat{w}_0 + n\Delta w;
    ~~
    n ~=~ -N^{\dagger}/2, \ldots, N^{\dagger}/2, ~~\text{where}
    \\
    \Delta w &=& P/N ~=~ P^{\dagger}/N^{\dagger},~~\text{and}~~
    \hat{w}_0 ~=~ (w_{\min} + w_{\max})/2
   ~=~ (w^{\dagger}_{\min} + w^{\dagger}_{\max})/2.
    \nonumber
\ENA
Similarly, for the $r$-dimension,   with $K^{\dagger} = 2K$,
$Q = r_{\max} - r_{\min}$, and $Q^{\dagger} = r^{\dagger}_{\max} - r^{\dagger}_{\min}$,
we denote by $\{r_k\}$, an equally spaced partition in the $r$-direction, such that
\EQA
\label{eq:grid_r}
    r_k &=& \hat{r}_0 + k\Delta r;
    ~~
    k ~=~ -K^{\dagger}/2, \ldots, K^{\dagger}/2, ~~\text{where}
    \\
    \Delta r &=& Q/K ~=~ Q^{\dagger}/K^{\dagger},
    ~~\text{and}~~
    \hat{r}_0 ~=~ (r_{\min} + r_{\max})/2
   ~=~ (r^{\dagger}_{\min} + r^{\dagger}_{\max})/2.
    \nonumber
\ENA
We use an unequally spaced partition in
the $a$-direction, denoted by $\{a_j\}$, $j = 0,\ldots, J$,
with $a_0=a_{\min}$, and $a_{J}= a_{\max}$. We set
\EQA
\label{eq:grid_a}
\Delta a_{\max} = \max_{0 \leq j \leq J-1} \left(a_{j+1} - a_j\right), ~~
\Delta a_{\min} = \min_{0 \leq j \leq J-1} \left(a_{j+1} - a_j\right).
\ENA
We use the same previously defined equally spaced partition in the $\tau$-dimension with $\Delta \tau = T/M$ and $\tau_{m} = m\Delta \tau$, denoted by $\{\tau_m\}$, $m = 0, \ldots, M$.~\footnote{While it is straightforward to generalized the numerical method to non-uniform partitioning of the $\tau$-dimension, for the purposes of proving convergence, uniform partitioning suffices.}

At each time $\tau_m$, $m = 1, \ldots, M$,  we denote by $v_{n, k, j}^{m}$ an approximation
to the exact solution $v(w_n, r_k, a_j, \tau_m)$ at the reference node
$(w_n, r_k, a_j, \tau_m)$ obtained by our numerical method.
\blue{At time $\tau_m^+$, unless otherwise stated, $v_{n, k, j}^{m+}$
refers to an intermediate value, and not an approximation to the exact solution
at time $\tau_m^+$.}

For subsequent use, we  define the following index sets for the spatial and temporal variables:
\\
$\N = \left\{-N/2+1, \ldots, N/2- 1\right\}$, $\ND = \left\{-N^{\dagger}/2, \ldots, N^{\dagger}/2- 1\right\}$,
$\K = \left\{-K/2+1, \ldots, K/2- 1\right\}$,
\\
$\KD = \left\{-K^{\dagger}/2, \ldots, K^{\dagger}/2- 1\right\}$,
$\J = \left\{0, \ldots, J \right\}$ and $\M = \left\{0, \ldots, M-1 \right\}$,
$\Nl = \left\{-N^{\dagger}/2, \ldots, -N/2\right\}$,
$\Nr = \left\{N/2, \ldots, N^{\dagger}/2- 1\right\}$,
$\Nc = \ND \setminus \N$, and  $\Kc = \KD \setminus \K$.
For fixed $j\in \J$ and $m \in \M$, nodes ${\bf{x}}_{n, j}^{m+1}$ having
(i) $n \in \Nl$ and $k \in \K$ are in $\Omega_{w_{\min}} \cup \Omega_{wa_{\min}}$,
(ii) $n \in \N$ and $k \in \K$ are in $\Omega_{\myin} \cup  \Omega_{a_{\min}}$,
(iii)~$n \in  \Nr$ and $k \in \K$ are in  $\Omega_{w_{\max}}$,
and (iv) $n \in  \ND$ and $k \in \Kc$ are in  $\Omega_{\myot}$.

In subsequent discussion, we denote by $\gamma_{n, k, j}^{m} \in [0, a_j]$ the control representing the withdrawal amount at node $(w_n, r_k, a_j, \tau_{m})$, $n \in \Nl \cup \N$, $k \in \K$, $j \in \J$,
$m \in \M$. 
We also define
\EQ
\label{eq:def_wa}
\tilde{w}_n = \ln(\max(e^{w_n} - \gamma_{n, j,k}^{m}, e^{w^{\dagger}_{\min}})),
\quad
\tilde{a}_j = a_j - \gamma_{n, k, j}^{m}, \quad \gamma_{n, k, j}^{m} \in [0, a_j].
\EN
For a given withdrawal amount $\gamma$, let $f\left(\gamma\right)$ be
the cash amount received by the holder defined as follows
\EQA
\label{eq:f_gamma_k_dis}
f\left(\gamma\right)
= \left\{
\begin{array}{ll}
\gamma & \text{if }~ 0\le \gamma \le C_r\Delta \tau,
\\
\gamma(1 - \mu) + \mu C_r\Delta \tau - c & \text{if }~ C_r\Delta \tau < \gamma.
\end{array}
\right.
\ENA
\begin{remark}[Interpolation]
\label{eq:intp}
Optimal controls are typically decided by comparing candidates obtained via interpolation using
on available relevant discrete values in $\Omega$, i.e.\ including discrete values
are in boundary sub-domains. In this work, we use linear interpolation.
To this end, let $s \in (0, T]$ be fixed.
We denote by $\mathcal{I}\left\{u^{s}\right\}(w, r, a)$
a generic three-dimensional linear interpolation operator acting on the time-$s$
discrete values $\left\{\left(\left(w_l, r_d, a_q \right), u_{l, d, q}^{s}\right)\right\}$,
$l \in \ND$, $d \in \KD$, $q \in \J$. Here, unless otherwise stated, values $u_{l, d, q}^{s}$ corresponding
to points ${\bf{x}}_{l, d, q}^{s}$ in the boundary sub-domains $\Omega_{w_{\min}}$, $\Omega_{wa_{\min}}$,  $\Omega_{w_{\max}}$ or  $\Omega_{\myot}$ are given by the respective time-$s$ boundary values.

In its primary usage, the above interpolation operator degenerates
to a two- or one-dimensional operator respectively when one or two of the following equalities hold:
$w = w_n$, $r = r_k$, and $a = a_j$,  for some $n \in \ND$, $k \in \K$, and $j \in \J$.
Nonetheless, in these cases, to simplify notation, we still use the notation
$\mathcal{I}\left\{u^{s}\right\}(w, r, a)$, with these degenerations being implicitly
understood.

It is straightforward to show that, due to linear interpolation, for any
constant $\xi$, we have
\EQ
\label{eq:interp_xi}
\mathcal{I}\left\{\varphi^{s}+\xi\right\}(w, r, a)
= \mathcal{I}\left\{\varphi^{s}\right\}(w, r, a) + \xi.
\EN
Furthermore, for a smooth test function $\varphi \in \C{\Oinf}$, we have
\EQ
\label{eq:interp_sim}
\mathcal{I}\left\{\varphi^{s}\right\}(w, r, a) = \varphi(w, r, a)
+ \mathcal{O}\l(\l(\Delta w + \Delta r\r)^2\r).
\EN
Finally, we note that linear interpolation is monotone in the viscosity sense.
\end{remark}

\noindent For double summations, we use the short-hand notation:
$\mysum_{d \in \mathcal{D}}^{q \in \mathcal{Q}}(\cdot):=\sum_{d \in \mathcal{D}} \sum_{q \in \mathcal{Q}} (\cdot)$,
unless otherwise noted.
We are now ready to present the complete numerical schemes to solve the HJB-QVI~\eqref{eq:gmwb_def}.
For any point $(w_n, r_k, a_j, \tau_{m+1})$ in $\Omega$, unless otherwise stated, we let $j \in \J$ and $m \in \M$ be fixed, and focus on the index sets of $n$ and $k$ in subsequent discussion.

\subsection{\texorpdfstring{$\boldsymbol{\Omega_{\tau_0}}$, $\boldsymbol{\Omega_{w_{\max}}}$, and $\boldsymbol{\Omega_{\myot}}$}{Omega-tau0, Omega-wmax, Omega-ot}}
For $(w_n, r_k, a_j, \tau_{0}) \in \Omega_{\tau_0}$, we impose the initial condition \eqref{eq:ftau0}.
\EQ
\label{eq:terminal}
v_{n,k,j}^0 = \max(e^{w_n}, (1-\mu)a_j -c), \quad n \in \ND,~ k \in \KD.
\EN
For $(w_n, r_k, a_j, \tau_{m+1})$ in $\Omega_{w_{\max}}$ and $\Omega_{\myot}$,
we respectively apply the Dirichlet boundary condition \eqref{eq:vmax_p2} and \eqref{eq:vother} as follows
\begin{linenomath}
\postdisplaypenalty=0
\begin{align}
\label{eq:omega_max}
v_{n,k,j}^{m+1} &= e^{-\beta \tau_{m+1}} e^{w_n}, \quad n \in \Nr,
~k \in \K,
\\
v_{n,k,j}^{m+1} &= \z(w_n, r_k, a_j, \tau_{m+1}), \quad n \in \ND,
~k \in \Kc,
\label{eq:omega_ot}
\end{align}
\end{linenomath}
where $\z(w_n, r_k, a_j, \tau_{m+1})$ is given in \eqref{eq:vother}.

\subsection[Left padding areas]{$\boldsymbol{\Omega_{w_{\min}} \cup \Omega_{wa_{\min}}}$}
For $(w_n, r_k, a_j, \tau_{m+1})$ in $\Omega_{w_{\min}} \cup \Omega_{wa_{\min}}$,
we let $\tilde{v}_{n,k,j}^{m}$ be an approximation to $v(w_n, r_k,  a_j - \gamma_{n, j}^{m}, \tau_{m})$ computed by linear interpolation as follows
\EQA
\label{eq:vtil_a}
    \tilde{v}_{n,k,j}^{m} =
    \mathcal{I}\left\{v^{m} \right\}
    \left(w_n, r_k,
    a_j - \gamma_{n, k, j}^{m} \right),
    \quad
    n \in \Nl,~
    k \in \K.
\ENA
We compute intermediate results $v_{n,k,j}^{m+}$ by solving the optimization problem
\EQA
\label{eq:Intervention_Operator_wmin}
v_{n,k,j}^{m+} &=& \sup_{\gamma_{n,k,j}^{m} \in [0, a_j]} \left( \tilde{v}_{n,k,j}^{m}
+ f\left(\gamma_{n,k,j}^{m}\right)\right),  \quad
    n \in \Nl,~
    k \in \K.
\ENA
where $\tilde{v}_{n,k,j}^{m}$ is given in \eqref{eq:vtil_a}
and $f\left(\cdot\right)$ is defined in \eqref{eq:f_gamma_k_dis}.
To advance to time $\tau_{m+1}$, we solve the PDE $v_\tau - \Ld v = 0$ with the time-$\tau_{m+}$ initial condition given by $v_{n, k, j}^{m+}$ in \eqref{eq:Intervention_Operator_wmin}.
This step is achieved by applying finite difference methods built upon a fully implicit timestepping scheme together with a positive coefficient discretization
as follows \cite{chen08a, chen06,  huang:2010a, DangForsyth2014, Forsyth08}
\EQA
v_{n,k,j}^{m+1}  &=&  v_{n,k,j}^{m+} + \Delta \tau
(\Ld^h v )_{n,k,j}^{m+1}, \quad \text{where}
\label{eq:scheme_timestep_left}
\\
(\Ld^h v )_{n,k,j}^{m+1} &=& \alpha_k v_{n, k-1, j}^{m+1}
+ \beta_k v_{n, k+1, j}^{m+1} - \left( \alpha_k + \beta_k + r_k \right) v_{n, k, j}^{m+1},
\quad
n \in \Nl,~
k \in \K,~
\nonumber
\\
&&\text{with}\quad \alpha_k ~\geq~ 0, \quad \beta_k ~\geq ~0, \quad k \in \K.
\label{eq:pos_con}
\ENA

\subsection[Main areas: scheme]{$\boldsymbol{\Omega_{\myin} \cup  \Omega_{a_{\min}}}$}
For $(w_n, r_k, a_j, \tau_{m+1})$ in $\Omega_{\myin} \cup  \Omega_{a_{\min}}$
and $\gamma_{n, k,j}^{m} \in [0, a_j]$, we let $\vt_{n,k,j}^{m}$ be an approximation to
$v(\tilde{w}_n, r_k, \tilde{a}_j, \tau_m)$, where $\tilde{w}_n$ and $\tilde{a}_j$ are defined in \eqref{eq:def_wa},
computed by linear interpolation given by
\begin{linenomath}
\postdisplaypenalty=0
\begin{align}
\vt_{n,k,j}^{m} &=
\mathcal{I}\left\{v^{m}\right\}
   \left(\tilde{w}_n, r_k, \tilde{a}_j\right),
        \quad
        \gamma_{n, k,j}^{m} \in [0, a_j],~
    n \in \N,~
    k \in \K.
    \label{eq:vtil_b}
\end{align}
\end{linenomath}
We recall the control formulation \eqref{eq:omega_inf_all}, where the admissible control set is $[0, a]$.
We observe that the $\min\{\cdot\}$ operator of \eqref{eq:omega_inf_all} contains two terms, with the continuous control $\hat{\gamma}$ in the first term having a local nature ($\hat{\gamma}\in [0, C_r]$), while the impulse control $\gamma$ in the second term having a non-local nature ($\gamma\in [0, a]$). Motivated by this observation, as in \cite{online, chen08a}, with the convention that $(C_r \Delta \tau, a_j] = \emptyset$ if $a_j \le C_r \Delta \tau$,
we  partition $[0, a_j]$ into $[0, a_j \wedge C_r\Delta\tau]$ and $(C_r\Delta\tau, a_j]$,
where $x\wedge y = \min(x, y)$.
We compute respective intermediate results $\vl_{n, k, j}^{m+}$ and $\vn_{n, k, j}^{m+}$,
$n \in \N$, $k \in \K$,  by solving the optimization problems
\begin{align}
\label{eq:scheme*}
\vl_{n,k,j}^{m+} =
\sup_{\gamma_{n, k, j}^{m} \in[0, a_j\wedge C_r\Delta\tau]}
(\vt_{n,k,j}^{m} + f(\gamma_{n,k,j}^{m})),\quad
\vn_{n,k,j}^{m+} =
\sup_{\gamma_{n, k, j}^{m} \in(C_r\Delta\tau, a_j]}
(\vt_{n,k,j}^{m} + f(\gamma_{n,k,j}^{m})),
\end{align}
where  $\vt_{n,k,j}^{m}$ is given in \eqref{eq:vtil_b} and
$f(\cdot)$ is defined in \eqref{eq:f_gamma_k_dis}.
\begin{remark}[Attainability of supremum]
\label{rm:sup_exist}
It is straightforward to show that, due to boundedness of nodal values
used in $\mathcal{I}\left\{v^{m}\right\}(\cdot)$ (see Lemma~\ref{lemma:stability} on stability),
the interpolated value $\vt_{n,k,j}^{m}$ in \eqref{eq:vtil_b} is uniformly continuous in $\gamma_{n, k, j}^m$.
As a result, the supremum in the discrete equations for $\vl_{n, k, j}^{m+}$  and $\vn_{n, k, j}^{m+}$
in \eqref{eq:scheme*} can be achieved by a control in $[0, \min(a_j, C_r \Delta \tau)]$
and $(C_r \Delta \tau, a_j]$, respectively, with the latter case being made possible due to $c > 0$ \cite{chen08a}.
\end{remark}

The next step in the numerical scheme for $\Omega_{\myin} \cup  \Omega_{a_{\min}}$
is time advancement from $\tau_{m}^+$ to $\tau_{m+1}$.  As briefly discussed previously, the time advancement step involves (i) \blue{an} SL discretization
for the  term $\Ls v - rv$ of the PIDE \eqref{eq:dis_pide_ori} in
$\Omega_{\myin} \cup \Omega_{a_{\min}}$, (ii) an $\epsilon$-monotone Fourier method based on the Green function associated with the PIDE \eqref{eq:dis_pide}. We now discuss these steps in detail below.

\subsubsection{Intuition of semi-Lagrangian discretization}
\label{ssc:sld}
\noindent We start by providing an intuition of \blue{an} SL discretization method and the Green's function approach utilized
for $\Omega_{\myin} \cup \Omega_{a_{\min}}$.
The main idea employed to construct \blue{an} SL discretization of the PIDE of the form \eqref{eq:dis_pide_ori}
is to integrate the PIDE along \blue{an} SL trajectory, which is to be defined subsequently.
Recall from \eqref{eq:operator_Lgs} that the differential operator $\mathcal{L}$ in the PIDE \eqref{eq:dis_pide_ori}
can be written as $\mathcal{L}~=~\Lg +\Ls - rv$, where the operator $\Ls = (r -  \frac{\sigz^{2}}{2} - \beta) v_w + \delta(\theta - r) v_r$. In subsequent discussion, 
we let $a \in [a_{\min}, a_{\max}]$ be fixed, and also let $x := (w, r)$ be arbitrary in $[w_{\min}, w_{\max}] \times [r_{\min}, r_{\max}]$.
For any $s \in [\tau_{m}^+, \tau_{m+1}]$,
and $\tau \le s$, we consider \blue{an} SL trajectory, denoted by
$\mychi(\tau; s, x) =  \l(\mychi_1(\tau; s, x), \mychi_2(\tau; s, x)\r)$,
which satisfies the ordinary differential equations
\EQA
\ds
\label{eq:trajectory_ode*}
\left\{
\begin{array}{ll}
\ds
\frac{\partial \mychi_1(\tau; s, x)}{\partial \tau} = -(r - \frac{\sigz^2}{2} -\beta),
& \tau < s,
\\
\mychi_1(s; s, x)= w,
& \tau = s,
\end{array}
\right.
~\text{and}~
\left\{
\begin{array}{ll}
\ds
\frac{\partial \mychi_2(\tau; s, x)}{\partial \tau} = - \delta(\theta - r),
& \tau < s,
\\
\mychi_2(s; s, x)= r,
& \tau = s.
\end{array}
\right.
\ENA
Using \eqref{eq:trajectory_ode*}, we have $\frac{D v}{D \tau} = v_\tau + \Ls v$,
and therefore,  the PIDE \eqref{eq:dis_pide_ori} can be written as
\EQA
\label{eq:SL_pide}
\frac{D v}{D \tau} + rv - \Lg v - \mathcal{J}v  = 0, \quad \tau \in (\tau_{m}^+, \tau_{m+1}],
\ENA
subject to a generic initial condition of the form \eqref{eq:dis_pide_term}.
We let $(\bw(s), \br(s))$ be the $(w, r)$-departure point at time-$\tau_{m}$ for the trajectory
$\mychi(\tau; s, x)$, i.e.\
$(\bw(s), \br(s)) = (\mychi_1(\tau = \tau_{m}; s, x), \mychi_2(\tau = \tau_{m}; s, x))$, and hence,
they can be computed by solving \eqref{eq:trajectory_ode*} from $\tau = \tau_{m}$ to $\tau = s$,
i.e.\
\EQ
\label{eq:semi-lag-location*}
\breve{w}(s) =  w + r( e^{s - \tau_{m}}-1) - \left(\frac{\sigz^2}{2} + \beta \right) ( e^{s - \tau_{m}}-1),
~
\breve{r}(s) =  r e^{-\delta (s - \tau_{m})} - \theta \left( e^{-\delta (s - \tau_{m})} - 1 \right).
\EN
We then integrate both sides of the equation \eqref{eq:SL_pide} along the
trajectory $\mychi(\tau; s, x)$ from $\tau = \tau_{m}$ to $\tau = s$
with $a$ being fixed. This gives
\EQA
\label{eq:SL_pide*}
\int_{\tau_{m}}^{s} \l(\frac{D v}{D \tau}\l(\mychi(\tau; s, x), a, \tau\r)
 +  r v\l(w, r, a, \tau\r) - \l(\Lg + \mathcal{J}\r)v\l(w, r, a, \tau\r)\r)
 d\tau
 = 0.
\ENA
In \eqref{eq:SL_pide*}, using the identity
\EQAS
\int_{\tau_{m}}^{s} \frac{D v}{D \tau}\l(\mychi(\tau; s, x), a, \tau\r)  d\tau =
v\l(w, r, a, s\r)
-
v\l(\bw(s), \br(s),  a, \tau_{m}\r),
\ENAS
together with a simple left-hand-side rule for
$\int_{\tau_{m}}^{s} rv\l(w, r, a, \tau\r) d\tau \simeq r(s-\tau_{m}) v\l(w, r, a, \tau_{m}\r)$,
and rearranging, \eqref{eq:SL_pide*} becomes
\EQ
\label{eq:inte}
v\l(w, r, a, s\r)
-
\int_{\tau_{m}}^{s}
 \l(\Lg  + \mathcal{J}\r)v\l(w, r, a, \tau\r)
 d\tau
=
v\l(\bw(s), \br(s),  a, \tau_{m}\r)
-
r(s-\tau_{m}) v\l(w, r, a, \tau_{m}\r).
\EN
Here, $v\l(w, r, a, s\r)$, $\tau_{m} \le s \le \tau_{m+1}$, is the unknown function at time-$s$.
In particular, we are interested in finding $v\l(w, r, a, \tau_{m+1}\r)$.
To this end, we approximate $v\l(w, r, a, \tau_{m+1}\r)$ by $\vsl\l(w, r, a, \tau_{m+1}\r)$
where the function $\vsl\l(w, r, a, s\r)$, $\tau_{m} \le s \le \tau_{m+1}$,
satisfies a variation of equation~\eqref{eq:inte} obtained by fixing its right-hand-side
at $s = \tau_{m+1}$. More specifically, with $(\bw, \br) \equiv (\bw(\tau_{m}^+), \br(\tau_m^+)$,
$\vsl\l(w, r, a, s\r)$ satisfies
\EQ
\label{eq:inte_2}
\vsl\l(w, r, a, s\r)
-
\int_{\tau_{m}}^{s}
 \l(\Lg  + \mathcal{J}\r)\vsl\l(w, r, a, \tau\r)
 d\tau
=
v\l(\bw, \br,  a, \tau_{m}\r)
-
r \Delta \tau v\l(w, r, a, \tau_{m}\r),
\EN
where, on the rhs, $v\l(\cdot, \cdot,  a, \tau_{m}\r)$ is given by a known generic initial condition
at time $\tau_{m}$.
We highlight that equation~\eqref{eq:inte} agrees with equation~\eqref{eq:inte_2} only when $s = \tau_{m+1}$,
at which time we have $\vsl\l(w, r, a, \tau_{m+1}\r) = v\l(w, r, a, \tau_{m+1}\r)$, as wanted.

The form of equation~\eqref{eq:inte_2} suggests that $\vsl\l(w, r, a, s\r)$
satisfies the PIDE of the form \eqref{eq:dis_pide}, i.e.\
\EQ
\label{eq:dis_pide_sl}
(\vsl)_{\tau} - \Lg \vsl  - \mathcal{J} \vsl = 0,
\quad
 w \in (-\infty, \infty),~r \in (-\infty, \infty),~\tau \in (\tau_{m}^+ , \tau_{m+1}],
\EN
subject to the initial condition
\begin{linenomath}
\begin{subequations}\label{eq:LG_pide_term}
\begin{empheq}[left={\vhsl(w, r, a, \tau_{m}^+o) = \empheqlbrace}]{alignat=3}
&v\l(w, r, a, \tau_{m}^+\r) = \frac{v\l(\bw, \br,  a, \tau_{m}^+\r)}{ 1 + \Delta \tau r}
&& \qquad (w, r, a, \taus) \in \Omega_{\myin} \cup \Omega_{a_{\min}},
\label{eq:LG_pide_term_a}
\\
&v_{bc}(w, r, a, \tau_{m}^+)&& \qquad (w, r, a, \taus) \in
\Omega \setminus \left( \Omega_{\myin} \cup \Omega_{a_{\min}} \right),
\label{eq:LG_pide_term_b}
\end{empheq}
\end{subequations}
\end{linenomath}
where, in \eqref{eq:LG_pide_term_a},
$(\bw, \br) \equiv (\bw(\tau_{m+1}), \br(\tau_{m+1}))$ given by \eqref{eq:semi-lag-location*}.
From here, as previously discussed in Subsection~\ref{ssec:over}, the solution
$\vsl\l(w, r, \cdot, \tau_{m+1}\r)$ is approximated by the convolution integral
\eqref{eq:green_integral_truncated}.


For subsequent discussions, we investigate equation \eqref{eq:inte_2} and the initial condition \eqref{eq:LG_pide_term}
from a standpoint that involves discrete grid points.
Specifically, for a Lagrangian trajectory which ends at
$(w_n, r_k)$ at time $\tau_{m+1}$, the departure point $(\bw_n, \br_k)$ at time-$\tau_m^+$, computed by
$\eqref{eq:semi-lag-location*}$  with $w  = w_n$, $r = r_k$, and $s = \tau_{m+1}$,
does not necessarily coincide with a grid point.
Therefore,
to approximate \eqref{eq:LG_pide_term_a}
corresponding to $(w_n, r_k, a_j)$, i.e.\ $\frac{v\l(\bw_n, \br_k,  a_j, \tau_{m}^+\r)}{ 1 + \Delta \tau r}$,
linear interpolation can be used.
Specifically,  we denote by $\bvsl^{m+}_{n,k,j}$ the interpolation result given by
\EQA
\label{eq:semi_lag_results}
    && \bvsl_{n,k,j}^{m+} =
    \frac{\mathcal{I}\left\{v^{m+} \right\}
    \left(\bw_n, \br_k, a_j \right)}{1 + \Delta \tau r_k},
    \quad
    n \in \N, ~k \in \K,~
    \\
    &&
    \text{where } \bw_n =  w_n + r_k \left( e^{\Delta \tau}-1\right) - \left(\frac{\sigz^2}{2} + \beta \right) \left( e^{\Delta \tau}-1\right),
    ~
    \br_k =  r_k e^{-\delta \Delta \tau} - \theta \left( e^{-\delta \Delta \tau} - 1 \right).
    \nonumber
\ENA
Here, $\mathcal{I}\left\{\cdot\right\}$ is the discrete interpolation operator
defined in \eqref{eq:intp}. If the departure point $(\bw_n, \br_k, a_j)$ falls outside $\Omega_{\myin} \cup \Omega_{a_{\min}}$, discrete solutions in the boundary sub-domains are used for interpolation.
We emphasize the SL discretization is not applied to grid points outside $\Omega_{\myin} \cup \Omega_{a_{\min}}$.

\subsubsection{Time advancement scheme: ${\boldsymbol{\tau \in [\tau_{m}^+, \tau_{m+1}]}}$}
\noindent
To prepare for time advancement, we combine the time-$\tau_{m}$ boundary values in $\Omega_{w_{\min}}$, $\Omega_{wa_{\min}}$, $\Omega_{w_{\max}}$, and $\Omega_{\myot}$ with the time-$\tau_{m}^+$
intermediate results obtained by the SL discretization discussed above and results from \eqref{eq:scheme*}.
With a slight abuse of notation, for $(i) \in  \{(1), (2)\}$, this is done as follows
\EQA
\label{eq:v_sl_ext}
\begin{array}{l}
\bvisl^{m+}_{l,d,j}
\end{array}
= \left\{
\begin{array}{llll}
\ds \frac{\mathcal{I}\left\{(v^{\mysup})^{m+} \right\}
    \left(\bw_l, \br_d, a_j \right)}{1 + \Delta \tau r_d} & \text{$\bw_l$ and $\br_d$ defined in \eqref{eq:semi_lag_results}} & l \in \N ~\text{and}~ d \in \K,
\\
v_{l,d,j}^{m} &\text{in \eqref{eq:omega_max}, \eqref{eq:omega_ot}, and \eqref{eq:scheme_timestep_left}},
& \text{otherwise}.
\end{array}
\right.
\ENA
For $\tau \in [\tau_{m}^+, \tau_{m+1}]$, our timestepping method for solving the PIDE \eqref{eq:dis_pide_sl} is built upon the convolution integral \eqref{eq:pide_con_int}, with the initial condition
$\hvisl(w, r, \cdot, \tau_{m}^+)$, $(i) \in \{(1), (2)\}$, approximated by
a projection of discrete values in \eqref{eq:v_sl_ext}.
onto linear basis functions for the $w$- and $r$-dimensions.
Specifically, $\hvisl\left(w, r, \cdot, \tau_{m}^+\right)$, $(i) \in  \{(1), (2)\}$,  is approximated by the projection
\EQA
\label{eq:rhowfunc}
\hvisl\left(w, r, \cdot, \tau_{m}^+\right) \simeq \mysum_{l \in \ND}^{d \in \KD} \varphi_l(w)~ \psi_d(r)
                ~ \bvisl^{m+}_{l,d,j},
\quad
(w, r)  \in \D \equiv (w_{\min}, w_{\max}) \times (r_{\min}, r_{\max}),
\ENA
where $\{\varphi_l(w)\}_{l \in \ND}$ and $\{\psi_{d}(r) \}_{d \in \KD}$  are piecewise linear basis functions defined by
\EQA
\varphi_{l}(w) =
\left \{
\begin{array}{lll}
(w_{l+1} - w)/\Delta w, & w_{l}\le w \le w_{l+1},
\\
(w - w_{l-1})/\Delta w, & w_{l-1}\le w \le w_{l},
\\
0,  & \text{otherwise},
\end{array}
\right.
\psi_{d}(r) =
\left \{
\begin{array}{lll}
(r_{d+1} - r)/\Delta r, & r_{d}\le r \le r_{d+1},
\\
(r - r_{d-1})/\Delta r, & r_{d-1}\le r \le r_{d},
\\
0,  & \text{otherwise}.
\end{array}
\right.
\label{eq:piecewise_w}
\ENA
In the convolution integral \eqref{eq:green_integral_truncated},
we substitute $\hvisl (w, r, \cdot, \tau_{m}^+)$,
$(i) \in  \{(1), (2)\}$, by the projection \eqref{eq:rhowfunc} and rearrange the resulting equation.
We obtain the discrete convolution for $\bvisl_{n,k,j}^{m+1}$,  $(i) \in  \{(1), (2)\}$, as follows
\EQ
\label{eq:V_nG}
\bvisl_{n,k,j}^{m+1} = \Delta w \Delta r
\mysum_{l \in \ND}^{d \in \KD} \tilde{g}_{n-l, k-d}~\bvisl^{m+}_{l,d,j},
\quad n \in \N,~ k \in \K.
\EN
Here, $\bvisl^{m+}_{l,d,j}$ is given by the linear interpolation in  \eqref{eq:semi_lag_results},
and $\tilde{g}_{n-l, k-d}$ is given by
\EQA
\label{eq:Gtilde_ell}
\tilde{g}_{n-l, k-d} &\equiv& \tilde{g}(w_n-w_l, r_k-r_d, \Delta \tau)
\nonumber
\\
&=& \frac{1}{\Delta w} \frac{1}{\Delta r} \iint_{\D^{\dagger}}~ \varphi_{l}(w)~\psi_{d}(r)~g(w_n - w, r_k - r, \Delta \tau)~dw~dr.
\ENA
That is, in the discrete convolution \eqref{eq:V_nG}, the exact weights $\tilde{g}_{n-l, k-d}$,
$n \in \N$, $k \in \K$, $l \in \ND$, $d \in \KD$, are obtained by a projection
of the Green's function $g\left(\cdot, \Delta \tau\right)$ onto the piecewise linear basis functions
$\{\varphi_l(w)\}_{l \in \ND}$ and $\{\psi_{d}(r) \}_{d \in \KD}$.

Finally, we compute the discrete solution $v_{n,k,j}^{m+1}$ by
\begin{linenomath}
\postdisplaypenalty=0
\begin{align}
\label{eq:scheme}
v_{n,k,j}^{m+1} =
\max\l(
\bvlsl_{n,k,j}^{m+1},
\bvnsl_{n,k,j}^{m+1}\r)
\quad
n \in \N,
~
k \in \K,
\end{align}
\end{linenomath}
where  $\bvlsl_{n,k,j}^{m+1}$ and $\bvnsl_{n,k,j}^{m+1}$ are given by \eqref{eq:V_nG}.

\subsubsection{Approximation of exact weights ${\boldsymbol{\tilde{g}}}$ and ${\boldsymbol{\epsilon}}$-monotonicity}
We need to approximate the exact weights $\tilde{g}_{n-l, k-d}$ defined in the convolution integral
\eqref{eq:Gtilde_ell}. To this end, we adapt steps in \cite{ForsythLabahn2017, online} for
two-dimensional Green's functions. We let $G \left( \eta, \xi, \Delta \tau \right)$ be
the Fourier transform of the Green's function $g(w,r, \Delta \tau)$.
A closed-form expression for $G \left( \eta, \xi, \Delta \tau \right)$ is given by
\EQA
\label{eq:small_g}
G(\eta, \xi, \Delta \tau) &=&
\exp \left( \Psi \left( \eta, \xi \right) \Delta \tau \right), \quad \text{with}
\nonumber
\\
\Psi(\eta, \xi) &=& - \frac{ \sigz^2}{2} (2 \pi \eta)^2  - \rho \sigz \sigr (2 \pi \eta) (2 \pi \xi) - \frac{\sigr^2}{2} (2 \pi \xi)^2  - \lambda \kappa (2 \pi i \eta) - \lambda + \lambda \overline{B} (\eta),
\ENA
where, $\overline{B}(\eta)$ is the complex conjugate of the integral $B (\eta) = \int_{-\infty}^{\infty} b(y) e^{-2 \pi i \eta y}~dy$, noting $b(y)$ is the density function of $\ln(Y)$, where $Y$ is the random variable representing the jump multiplier.

The idea in approximating the integral \eqref{eq:Gtilde_ell} is to replace $g(w, r, \Delta \tau)$ therein by its localized, periodic approximation $\hat{g}(w,  r, \Delta \tau)$ given by
\EQA
\label{eq:green_hat}
\hat{g}(w,r,\Delta \tau) ~=~ \frac{1}{P^{\dagger}} \frac{1}{Q^{\dagger}}  \mysum^{z \in \Z}_{s \in \Z}  e^{2\pi i \eta_s w} e^{2\pi i \xi_z r} G(\eta_s, \xi_z, \Delta \tau) ~~\text{with}~~\eta_s = \frac{s}{P^{\dagger}},~ \xi_z = \frac{z}{Q^{\dagger}}.
\ENA
where we denote $\mathbb{Z}$ to be the set of all integers.\footnote{We note that the coefficients $G(\eta_s, \xi_z \Delta \tau)$ in \eqref{eq:green_hat} are the exact coefficients corresponding to the Green's function of the PIDE \eqref{eq:dis_pide} with suitable periodic boundary conditions; hence, $\hat{g}(w,r,\Delta \tau)$ is a valid Green's function, and in particular $\hat{g}(\cdot) \ge 0$.} Then, assuming uniform convergence of Fourier series, we integrate \eqref{eq:Gtilde_ell} to obtain
\EQA
\label{eq:gtilde_infty}
\tilde{g}_{n-1,k-d}
\equiv \tilde{g}_{n-1,k-d} (\infty)
=
\frac{1}{P^{\dagger}} \frac{1}{Q^{\dagger}}
\mysum^{z \in \Z}_{s \in \Z}
e^{2\pi i \eta_s (n-l) \Delta w}
e^{2\pi i \xi_z (k-d) \Delta r}
~
\text{tg}(s,z)~
G(\eta_s, \xi_z, \Delta \tau),
\ENA
where the trigonometry term $\text{tg}(s,z)$ is defined by\footnote{For $\eta_s = 0$ and $\xi_z = 0$, we take the limit $\eta_s \to 0$ and $\xi_z \to 0$.}
\EQA
\label{eq:trigofun}
	\text{tg}(s,z) = \left(
\frac{\sin^2 \pi \eta_s \Delta w}{\left(\pi \eta_s \Delta w\right)^2}
\right) \left(
\frac{\sin^2 \pi \xi_z \Delta r}{\left(\pi \xi_z \Delta r\right)^2}
\right), \quad s \in \Z, z \in \Z.
\ENA
For $\alpha \in \{2, 4, 8, \ldots\}$, \eqref{eq:gtilde_infty} is truncated to
$\alpha N^{\dagger}$ and $\alpha K^{\dagger}$ terms for the outer and the inner summations, respectively,
resulting in an approximation
\EQA
\label{eq:gtilde_truncated}
\tilde{g}_{n-l,k-d} \left(\alpha \right)
= \frac{1}{P^{\dagger}} \frac{1}{Q^{\dagger}}
\mysum^{z \in \Ka}_{s \in \Na}
e^{2\pi i \eta_s (n-l) \Delta w}
e^{2\pi i \xi_z (k-d) \Delta r}
~
\text{tg}(s,z)~
{\numPDEblue{G(\eta_s, \xi_z, \Delta \tau),}}
\ENA
where $\Na = \{ -\alpha N^{\dagger}/2 -1, \ldots, \alpha N^{\dagger}/2 -1 \}$
and $\Ka = \{ -\alpha K^{\dagger}/2 -1, \ldots, \alpha K^{\dagger}/2 -1 \}$.{\footnote{ We can use different numbers of terms
in the \blue{truncation} for the outer and the inner summations, i.e.\ $\alpha_1 N^{\dagger}$ and $\alpha_2 K^{\dagger}$,
respectively. Here, we use a single $\alpha$ to simplify the presentation.}}

As $\alpha \to \infty$, replacing $\tilde{g}_{n-l,k-d}$ by $\tilde{g}_{n-l,k-d} \left(\alpha \right)$ in the
discrete convolution \eqref{eq:V_nG} results in no loss of information.
However, for any finite $\alpha$, there is an error due to the use of a truncated Fourier series,
although, as $\alpha \to \infty$, this error vanishes very quickly due a rapid convergence of
truncated Fourier series. This is discussed in Subsection~\eqref{ssc:error_analysis}.
Due to the above truncation error of Fourier series, strict monotonicity is not guaranteed for a finite $\alpha$.
To control this potential loss of monotonicity for a finite $\alpha$,
as in \cite{ForsythLabahn2017, online}, the selected $\alpha$ must satisfy
\EQA
\label{eq:test1}
\Delta w  \Delta r  \mysum^{d \in \KD}_{l \in \ND}   \big|\min\left(
\tilde{g}_{n-l, k-d}(\alpha),
0\right)\big| ~<~ \epsilon \frac{\Delta \tau}{T},
\quad
\forall n \in \N, ~ k \in \K,
\ENA
where  $0 < \epsilon \ll 1/2$ is an user-defined monotonicity tolerance.
We note that the left-hand-side of the monotonicity test \eqref{eq:test1} is
scaled by $\Delta w$ so that it is bounded as $\Delta w, \Delta \tau \to 0$.
In addition, $\epsilon$ is scaled by $\frac{\Delta \tau}{T}$ in order to
eliminate the number of timesteps from the bounds of potential loss of monotonicity.

\subsubsection{Efficient implementation via FFT and algorithms}
Note that, for a fixed  $\alpha \in \{2, 4, 8, \ldots\}$,
the sequence $\{\tilde{g}_{-N^{\dagger}/2, k}(\alpha),\ldots, \tilde{g}_{N^{\dagger}/2-1, k}(\alpha)\}$
for a fixed $k \in \KD$ is $N^{\dagger}$-periodic, and the sequence $\{\tilde{g}_{n, -K^{\dagger}/2}(\alpha),\ldots, \tilde{g}_{n, K^{\dagger}/2-1}(\alpha)\}$ for a fixed $n \in \ND$ is $K^{\dagger}$-periodic.
With these in mind, we let $p = n-l$ and $q = k-d$ in the discrete convolution
\eqref{eq:gtilde_truncated}, and, for a fixed $\alpha$,  the set of approximate weights in the physical domain
to be determined is $\tilde{g}_{p,q}(\alpha)$, $p \in \ND$, $q \in \KD$.
Using this notation, in \eqref{eq:gtilde_truncated}, with  $p = n-l$ and $q = k-d$,
we rewrite $e^{2\pi i \eta_s (n-l) \Delta w} = e^{2 \pi i s \alpha p/(\alpha N^{\dagger})}$, $e^{2\pi i \xi_z (k-d) \Delta r} = e^{2 \pi i z \alpha q/(\alpha K^{\dagger})}$, and obtain
\EQ
\label{eq:gtilde_truncated_evaluation}
\begin{aligned}
\tilde{g}_{p, q}(\alpha)
&= \frac{1}{P^{\dagger}} \frac{1}{Q^{\dagger}}
\mysum^{z \in \Ka}_{s \in \Na}
e^{2 \pi i s \alpha p/(\alpha N^{\dagger})}e^{2 \pi i z \alpha q/(\alpha K^{\dagger})} ~y_{s,z},
&
p \in \ND, ~ q \in \KD,
\\
& \text{ where } y_{s,z} =
\text{tg}(s,z)~
G(\eta_s, \xi_z \Delta \tau), &
s \in \Na,~ z \in \Ka,
\end{aligned}
\EN
and $\text{tg}(s,z)$ is given in \eqref{eq:trigofun}.
It is observed from \eqref{eq:gtilde_truncated_evaluation} that, given $\{y_{s,z}\}$,
$\{\tilde{g}_{p,q}(\alpha)\}$ can be computed efficiently via a single two-dimensional FFT of size $(\alpha N^{\dagger}, \alpha K^{\dagger})$.
A suitable value for $\alpha$, i.e.\ satisfying the $\epsilon$-monotonicity condition \eqref{eq:test1},
can be determined through an iterative procedure based on formula \eqref{eq:gtilde_truncated_evaluation}.
Let this value be $\alpha_{\epsilon}$.
We also observe that, once $\alpha_{\epsilon}$  is found,
the discrete \blue{convolution} \eqref{eq:V_nG} can also be computed efficiently using an FFT. This suggests that
we only need to compute the weights in the Fourier domain, i.e.\ the DFT of $\{\tilde{g}_{p, q}(\alpha_\epsilon)\}$, only once, and  reuse them for all timesteps.
We define $\{\tilde{G}_{p, q}(\alpha_{\epsilon})\}$ to be the DFT of  $\{\tilde{g}_{p,q}(\alpha_{\epsilon})\}$ given by
\EQA
\tilde{G}(\eta_s,  \xi_z, \Delta \tau, \alpha_{\epsilon})
   =
    \frac{P^{\dagger}}{N^{\dagger}} \frac{Q^{\dagger}}{K^{\dagger}}
    \mysum^{q \in \KD}_{p \in \ND}
	e^{-2\pi i p s/N^{\dagger}} e^{-2\pi i q z/K^{\dagger}}
	~
    \tilde{g}_{p, q}(\alpha_{\epsilon}),
    \quad
    s \in \ND,~ z \in \KD.
\label{eq:bigGtilde}
\ENA
An iterative procedure for computing $\{\tilde{G}_{p, q}(\alpha_{\epsilon})\}$ is
given in Algorithm~\ref{alg:Gtilde}, where we also use the stopping criterion
$\Delta w \Delta r \mysum^{q \in \KD}_{p \in \ND}
\left|
\tilde{g}_{p, q}(\alpha)
-
\tilde{g}_{p, q}(\alpha/2)
\right|< \epsilon_1$, $\epsilon_1>0$.
\begin{algorithm}[!ht]
\caption{
\label{alg:Gtilde}
Computation of weights $\tilde{G}_{p,q}(\alpha_{\epsilon})$, $p \in \ND$, $q \in \KD$, in Fourier domain. 
}
\begin{algorithmic}[1]
\STATE set $\alpha = 1$ and compute $\tilde{g}_{p,q}(\alpha)$, $p \in \ND$, $q \in \KD$ using \eqref{eq:gtilde_truncated_evaluation};
\FOR{$\alpha = 2, 4, \ldots$ until convergence}
    \STATE
    compute
    $\tilde{g}_{p,q}(\alpha)$,
    $p \in \ND$, $q \in \KD$, using \eqref{eq:gtilde_truncated_evaluation};

    \STATE
    compute
    $\text{test}_1 = \Delta w \Delta r \sum_{p \in \ND} \sum_{q \in \KD}
    \min \left(
     \tilde{g}_{p,q}(\alpha), 0 \right) $ for monotonicity test;

    \STATE
    compute
    $\text{test}_2 = \Delta w \Delta r \sum_{p \in \ND} \sum_{q \in \KD}
     \big|\tilde{g}_{p,q}(\alpha)-\tilde{g}_{p,q}(\alpha/2)\big|$ for accuracy test;

    \IF{$|\text{test}_1 | < \epsilon (\Delta \tau/T)$
    and $\text{test}_2 < \epsilon_1$}
    \label{alg:if}

    \STATE
    \label{alg:break}
    $\alpha_{\epsilon} = \alpha$;
    \\
    break from for loop;
    \ENDIF
\ENDFOR
\STATE use \eqref{eq:bigGtilde} to compute and output weights $\tilde{G}_{p,q}(\alpha_{\epsilon})$,
$p \in \ND$, $q \in \KD$, in Fourier domain.
\end{algorithmic}
\end{algorithm}

For simplicity, unless otherwise state, we adopt the notional convention
$\tilde{g}_{n-l, k-d}=\tilde{g}_{n-l, k-d}(\alpha_{\epsilon})$
and
$\tilde{G}(\eta_s, \xi_z,  \Delta \tau) \equiv \tilde{G}(\eta_s, \xi_z,  \Delta \tau, \alpha_{\epsilon})$,
where $\alpha_{\epsilon}$ is selected by Algorithm~\ref{alg:Gtilde}.
The discrete convolutions \eqref{eq:V_nG} can then be implemented efficiently via an FFT as follows
\begin{align}
\label{eq:green_solution_project}
\bvisl_{n,k,j}^{{m+1}}
&\simeq
\mysum^{q \in \KD}_{p \in \ND}
e^{2\pi i p n/N^{\dagger}} e^{2\pi i q n/K^{\dagger}}
~
\bVisl (\eta_p, \xi_q, a_j, \tau_{m}^+)
~
\tilde{G}(\eta_p, \xi_q, \Delta \tau),
\\
\text{with} ~ \bVisl \left(\eta_p, \xi_q, a_j, \tau_{m}^+\right) &= \frac{1}{N^{\dagger}} \frac{1}{K^{\dagger}}
 \mysum^{d \in \KD}_{l \in \ND}
 	e^{-2\pi i p l/N^{\dagger}} e^{-2\pi i q d/K^{\dagger}}
 	 \bvisl_{l,d,j}^{m+},
    ~p \in \ND,~ q \in \KD,
\nonumber
\end{align}
where $(i) \in \{(1), (2)\}$ and $\tilde{G}(\eta_p,\xi_q \Delta \tau)$ is given by \eqref{eq:bigGtilde}.
Putting everything together, an $\epsilon$-monotone Fourier numerical algorithm for the HJB-QVI~\eqref{eq:gmwb_def} on $\Omega$ is presented in Algorithm~\ref{alg:monotone} below.
\begin{algorithm}[!ht]
\caption{
\label{alg:monotone}
An $\epsilon$-monotone Fourier algorithm for GMWB problem defined in Definition~\eqref{def:impulse_def}.
$x \circ y$ is the Hadamard product of matrices $x$ and $y$.
}
\begin{algorithmic}[1]
\STATE
compute matrix
$\tilde{G} = \left\{
\tilde{G}(\eta_{p}, \xi_{q}, \Delta \tau)
\right\}_{p \in \ND, q \in \KD}$,
using Algorithm~\ref{alg:Gtilde};

\STATE
\label{alg:initial}
initialize $\blue{v_{n, k, j}^{0}}
= \max\left(e^{w_n}, (1-\mu) a_j -c\right)$, $n \in \ND$, $k \in \KD$, $j \in \J$;
\hfill //$\Omega_{\tau_0}$

\FOR{$m = 0, \ldots, M-1$}

    \STATE
    \label{alg:step1_a}
    solve \eqref{eq:scheme*} to obtain
    $\vl_{n,k,j}^{m+}$ and $\vn_{n,k,j}^{m+}$,
    $n \in \N$, $k \in \K$, $j \in \J$;
    \hfill //$\Omega_{\myin} \cup  \Omega_{a_{\min}}$

    \STATE
    \label{alg:step1}
    compute  $\bvlsl_{n,k,j}^{m+}$ and $\bvnsl_{n,k,j}^{m+}$, $n \in \N$, $k \in \K$, $j \in \J$;
    using \eqref{eq:semi_lag_results}; \hfill //$\Omega_{\myin} \cup  \Omega_{a_{\min}}$

            \STATE
    \label{alg:step1_c}
    combine results in Line-\ref{alg:step1}
    with $v_{n,k,j}^{m}$ in  $\Omega_{w_{\min}}$, $\Omega_{wa_{\min}}$, $\Omega_{w_{\max}}$ and $\Omega_{\myot}$,
    to obtain
    \\
    $\bvisl_j^{m+} = \left\{ \bvisl_{n,k,j}^{m+} \right\}_{n \in \ND, k \in \KD}$,~
     $(i) \in  \{(1), (2)\}$,
    ~ $j \in \J$;

    \STATE
    \label{alg:step2}
    compute
 $\left\{\bvisl_{n,k,j}^{m+1}\right\}_{n \in \ND, k \in \KD}
    ~=~
    \text{IFFT} \left\{
      \text{FFT}\left\{
   \bvisl_j^{m+} \right\} \circ
    \tilde{G}
    \right\}$,~
    $(i) \in  \{(1), (2)\}$,~
    $j \in \J$;



    \STATE
    \label{alg:step4}
    discard FFT values in $\Omega_{w_{\min}}$, $\Omega_{wa_{\min}}$, $\Omega_{w_{\max}}$, and $\Omega_{\myot}$, namely
    $\bvlsl_{n,k,j}^{m+1}$ and $\bvnsl_{n,k,j}^{m+1}$, \\
     $n \in \Nc$, $k \in \Kc$, $j \in \J$;

    \STATE
    set
    \label{alg:step5}
    $v_{n,k,j}^{m+1} = \max\left(\bvlsl_{n,k,j}^{m+1} , \bvnsl_{n,k,j}^{m+1} \right)$, $n \in \N$, $k \in \K$, $j \in \J$;
     \hfill $//\Omega_{\myin} \cup  \Omega_{a_{\min}}$

    \STATE
    compute $v_{n,k,j}^{m+1}$,
	$n \in \Nc$, $k \in \Kc$, $j \in \J$
    using \eqref{eq:omega_max},  \eqref{eq:omega_ot}
    and \eqref{eq:scheme_timestep_left};
    \hfill// $\Omega \setminus (\Omega_{\myin} \cup  \Omega_{a_{\min}})$



\ENDFOR
\end{algorithmic}
\end{algorithm}

\begin{remark}[Wraparound error]
\label{rm:wrap}
The boundary sub-domains $\Omega_{w_{\min}} \cup \Omega_{wa_{\min}}$, $\Omega_{w_{\max}}$ and $\Omega_{\myot}$
are also set up to act as  padding areas to minimize the wraparound error
in the computation of discrete convolutions \eqref{eq:V_nG} via
an FFT in Line~\ref{alg:step2} of Algorithm~\ref{alg:monotone}.
After an FFT is applied, all results of auxiliary padding nodes in $\Omega_{w_{\min}} \cup \Omega_{wa_{\min}}$, $\Omega_{w_{\max}}$ and $\Omega_{\myot}$ are discarded to minimize the wraparound error at nodes in
 $\Omega_{\myin}\cup \Omega_{a_{\min}}$ (Line~\ref{alg:step4}).
Using similar techniques as in \cite{online} for the case of one-dimensional Green's function, we can show that,
with our choice of  $N^{\dagger} = 2N$ and $K^{\dagger} = 2K$,  where $N$ and $K$ are chosen large enough,
our handling of wraparound described above is sufficiently effective.
The reader is referred to \cite{online}[Section 4.4] for relevant details.
\end{remark}

\subsection{Fair insurance fees}
\label{ssec:fee}

With respect to the insurance fee $\beta$, let $v(\beta; w, r, a, \tau)$ be the exact solution, i.e.\
$v(w,a,r,\tau)$, be parameterised by the insurance fee $\beta$.
Then, the fair insurance fee for $t = 0$, or $\tau_M=T$, denoted by $\beta_f$, solves the equation $v\left(\beta_f; \ln(z_0), r_0, z_0, T \right) ~=~z_0$.
In a numerical setting, with a slight abuse of notation, let $v_{\ln(z_0), r_0, z_0}^M(\beta)$ be the numerical solution parametrized by $\beta$, then we need to solve $v_{\ln(z_0), r_0, z_0}^M(\beta_f)  = z_0$, where
$v_{\ln(z_0), r_0, z_0}^M$ is obtained by Algorithm~\ref{alg:monotone}. Finally, we apply the Newton iteration to solve for $\beta_f$. 
\section{Convergence to the viscosity solution}
\label{section:conv}
In this section, we appeal to a Barles-Souganidis-type analysis \cite{barles-souganidis:1991}
to rigorously study the convergence of our scheme
in $\Omega_{\myin} \cup \Omega_{a_{\min}}$ as $h \to 0$ by verifying three properties:
$\ell_\infty$-stability, $\epsilon$-monotonicity (as opposed to strict monotonicity),
and consistency. We will show that convergence of our scheme is ensured if
the monotonicity tolerance $\epsilon \to 0$ as $h \to 0$.
We note that our proofs share some similarities with those in \cite{online},
but our proof techniques are more involved due to the SL discretization, especially
for consistency of the numerical scheme. We will emphasize these key similarities and
differences where suitable.

For subsequent use,  we introduce several important results
related to relevant properties of the weights $\tilde{g}_{n-l, k-d}$ in the discrete
convolution~\eqref{eq:Gtilde_ell}.
\begin{proposition}
\label{proposition:sum_g}
For any $(n, k) \in \l\{\N \times \K\r\}$, we have
\[
\Delta w \Delta r \mysum^{d \in \KD}_{l \in \ND} \tilde{g}_{n-l, k-d} ~=~ 1,
\quad
{\text{with $\tilde{g}_{n-l, k-d}$ is given by \eqref{eq:gtilde_truncated}}}.
\]
\end{proposition}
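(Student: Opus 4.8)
The plan is to prove the identity by a direct computation that interchanges the two finite summations and then exploits discrete orthogonality of exponentials. First I would substitute the definition \eqref{eq:gtilde_truncated} of $\tilde{g}_{n-l,k-d}$ into $\Delta w \, \Delta r \sum_{l \in \ND}\sum_{d \in \KD} \tilde{g}_{n-l,k-d}$ and use the relations $\eta_s \Delta w = s/N^{\dagger}$ and $\xi_z \Delta r = z/K^{\dagger}$ (which follow from $\eta_s = s/P^{\dagger}$, $\Delta w = P^{\dagger}/N^{\dagger}$, $\xi_z = z/Q^{\dagger}$, $\Delta r = Q^{\dagger}/K^{\dagger}$), so that the exponential factor in the $(s,z)$-summand is $e^{2\pi i s(n-l)/N^{\dagger}} e^{2\pi i z(k-d)/K^{\dagger}}$. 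Since all sums are finite, moving the sums over $l$ and $d$ inside the sums over $s \in \Na$, $z \in \Ka$ is immediate.

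The key step is that $\ND = \{-N^{\dagger}/2,\ldots,N^{\dagger}/2-1\}$ is a complete residue system modulo $N^{\dagger}$ (and $\KD$ modulo $K^{\dagger}$), so that $\sum_{l \in \ND} e^{-2\pi i s l/N^{\dagger}} = N^{\dagger}\,\mathbf{1}_{\{N^{\dagger}\,\mid\,s\}}$ and $\sum_{d \in \KD} e^{-2\pi i z d/K^{\dagger}} = K^{\dagger}\,\mathbf{1}_{\{K^{\dagger}\,\mid\,z\}}$; here the edge phase $e^{\pi i s}$ produced by the shift $l \mapsto l + N^{\dagger}/2$ equals $1$ on multiples of $N^{\dagger}$ precisely because $N^{\dagger} = 2N$ is even (similarly $K^{\dagger} = 2K$). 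Collecting the prefactors and using $\Delta w\,\Delta r = P^{\dagger} Q^{\dagger}/(N^{\dagger} K^{\dagger})$ gives $\Delta w\,\Delta r \cdot \tfrac{1}{P^{\dagger}}\tfrac{1}{Q^{\dagger}} \cdot N^{\dagger} K^{\dagger} = 1$, so the whole double sum collapses to $\sum_{s \in \Na,\, N^{\dagger}\mid s}\ \sum_{z \in \Ka,\, K^{\dagger}\mid z} e^{2\pi i s n/N^{\dagger}} e^{2\pi i z k/K^{\dagger}}\,\text{tg}(s,z)\,G(\eta_s,\xi_z,\Delta\tau)$, and for such $s,z$ the exponentials are $1$ since $n,k\in\mathbb{Z}$.

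Next I would observe that $\text{tg}(s,z)$ from \eqref{eq:trigofun} vanishes at every \emph{nonzero} multiple of $N^{\dagger}$ in $s$ (resp.\ of $K^{\dagger}$ in $z$): there $\eta_s \Delta w = s/N^{\dagger}$ is a nonzero integer, hence $\sin^2(\pi\,\eta_s\Delta w)=0$. Thus only the term $(s,z)=(0,0)$ survives, with $\text{tg}(0,0)=1$ by the stated limiting convention $\eta_s\to 0$, $\xi_z\to 0$. It remains to evaluate $G(0,0,\Delta\tau)$: by \eqref{eq:small_g}, $G(0,0,\Delta\tau) = \exp(\Psi(0,0)\Delta\tau)$ with $\Psi(0,0) = -\lambda + \lambda\,\overline{B}(0)$, and $B(0) = \int_{-\infty}^{\infty} b(y)\,dy = 1$ because $b$ is the probability density of $\ln Y$; hence $\Psi(0,0) = 0$, $G(0,0,\Delta\tau)=1$, and the sum equals $1$, as claimed.

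I do not expect any genuine obstacle; the argument is elementary bookkeeping. The only points requiring mild care are (i) that $\ND$ and $\KD$ are exactly complete residue systems, so the orthogonality identities are exact finite sums (with the parity $N^{\dagger}=2N$, $K^{\dagger}=2K$ used to discard the edge phase), and (ii) the harmless use of the $\eta_s,\xi_z\to 0$ convention to set $\text{tg}(0,0)=1$. Note that the argument goes through verbatim for every admissible $\alpha \in \{2,4,8,\ldots\}$, in particular for $\alpha = \alpha_{\epsilon}$, so the identity holds regardless of the monotonicity tolerance $\epsilon$.
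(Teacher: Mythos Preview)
Your proposal is correct and follows essentially the same route as the paper's proof: substitute \eqref{eq:gtilde_truncated}, swap the finite sums, use discrete orthogonality of the exponentials over $\ND$ and $\KD$ to reduce to multiples of $N^{\dagger}$ and $K^{\dagger}$, observe that $\text{tg}(s,z)$ kills all nonzero multiples, and evaluate $G(0,0,\Delta\tau)=1$. The only cosmetic difference is that the paper first invokes periodicity of $\tilde{g}$ to reindex via $p=n-l$, $q=k-d$ before applying roots of unity, and it bundles the orthogonality and the $\text{tg}$-vanishing into a single step; your treatment is slightly more explicit on the latter point, and your edge-phase remark, while correct, is not strictly needed since $\ND$ is already a complete residue system.
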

\noindent A proof of Proposition~\ref{proposition:sum_g} is given Appendix~\ref{app:sum_g}.
Noting $\tilde{g} =\max(\tilde{g}, 0) + \min(\tilde{g}, 0)$,
Proposition~\ref{proposition:sum_g} and  the monotonicity condition \eqref{eq:test1} give the bound
\EQA
\label{eq:sum_g}
 \Delta w \Delta r
\mysum^{d \in \KD}_{l \in \ND}
 \left(\max \left(\tilde{g}_{n-l, k-d}, 0\right)  +  \left|\min\left(\tilde{g}_{n-l, k-d}, 0\right)
    \right| \right)
~\le~ 1 + 2 \epsilon \frac{\Delta \tau}{T}.
\ENA
Our scheme consists of the following equations:
\eqref{eq:terminal} for $\Omega_{\tau_0}$, \eqref{eq:omega_max} for $\Omega_{w_{\max}}$,
\eqref{eq:omega_ot} for $\Omega_{\myot}$,
\eqref{eq:scheme_timestep_left} for $\Omega_{w_{\min}}\cup\Omega_{wa_{\min}}$,
and finally \eqref{eq:scheme} for $\Omega_{\myin} \cup  \Omega_{a_{\min}}$. We start by verifying $\ell_\infty$-stability of our scheme.
\subsection{Stability}
\begin{lemma}[$\ell_\infty$-stability]
\label{lemma:stability}
Suppose that (i) the discretization parameter $h$ satisfies \eqref{eq:dis_parameter},
and (ii) the discretization \eqref{eq:scheme_timestep_left} satisfies the positive coefficient condition \eqref{eq:pos_con}, (iii) linear interpolation in \eqref{eq:vtil_a}, \eqref{eq:semi_lag_results}, and \eqref{eq:vtil_b}, and (iv) $r_{\min}< 0$ satisfies the condition
\EQA
\label{eq:r_min}
1 + \Delta \tau r_{\min} ~>~ 0.
\ENA
Then scheme \eqref{eq:terminal}, \eqref{eq:omega_max}, \eqref{eq:omega_ot}, \eqref{eq:scheme_timestep_left}, and \eqref{eq:scheme}
satisfies $\ds \sup_{h > 0} \left\| v^{m} \right\|_{\infty} < \infty$ for all $m = 0, \ldots, M$, as the discretization parameter $h \to 0$.
Here, $\left\| v^{m} \right\|_{\infty} = \max_{n, k, j} |v_{n,k, j}^{m}|$, where
$n \in \ND$, $k \in \KD$ and $j \in \J$.
\end{lemma}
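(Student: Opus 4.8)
## Proof Plan for $\ell_\infty$-Stability

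The plan is to prove by induction on $m$ that $\|v^m\|_\infty \le B_m$ where $B_m$ grows at most geometrically in $m$ with a ratio of the form $1 + \mathcal{O}(\Delta\tau)$, hence stays bounded by $e^{CT}\|v^0\|_\infty$ (plus possibly an additive term from the fixed cost $c$ and the boundary data) uniformly as $h \to 0$. The base case $m=0$ is immediate from \eqref{eq:terminal}, since $|v_{n,k,j}^0| = |\max(e^{w_n},(1-\mu)a_j - c)| \le e^{w_{\max}^\dagger} + \|a\|_\infty =: B_0$, a bound independent of $h$. For the inductive step, assuming $\|v^m\|_\infty \le B_m$, I would bound $\|v^{m+1}\|_\infty$ region by region over the decomposition of $\Omega$ into $\Omega_{\tau_0}$ (done), $\Omega_{w_{\max}}$, $\Omega_{\myot}$, $\Omega_{w_{\min}}\cup\Omega_{wa_{\min}}$, and $\Omega_{\myin}\cup\Omega_{a_{\min}}$, and take $B_{m+1}$ to be the maximum of the resulting bounds.

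The easy regions come first. For $\Omega_{w_{\max}}$, \eqref{eq:omega_max} gives $|v_{n,k,j}^{m+1}| = e^{-\beta\tau_{m+1}} e^{w_n} \le e^{w_{\max}^\dagger}$; for $\Omega_{\myot}$, \eqref{eq:omega_ot}–\eqref{eq:vother} gives a bound in terms of the zero-coupon bond price \eqref{eq:zero_coupon} times $\max(e^{w_n},\ldots)$, which is bounded by a constant depending only on the (fixed) model parameters and domain size — here I would note that $p_b(\bar r,\tau;T)$ is bounded on the compact $\bar r$-range. For $\Omega_{w_{\min}}\cup\Omega_{wa_{\min}}$: the intervention step \eqref{eq:Intervention_Operator_wmin} gives $|v_{n,k,j}^{m+}| \le |\tilde v_{n,k,j}^m| + \sup_\gamma |f(\gamma)| \le B_m + (C_r\Delta\tau + (1-\mu)a_{\max})$ using \eqref{eq:vtil_a}, monotonicity/boundedness of linear interpolation (Remark~\ref{eq:intp}), and the explicit form \eqref{eq:f_gamma_k_dis} of $f$ (note the $-c$ only helps); then the fully implicit positive-coefficient timestep \eqref{eq:scheme_timestep_left} with \eqref{eq:pos_con} is unconditionally $\ell_\infty$-stable in the standard way — writing $(I - \Delta\tau \Ld^h)v^{m+1} = v^{m+}$ and invoking diagonal dominance together with $r_k$ possibly negative but controlled by \eqref{eq:r_min}, one gets $\|v^{m+1}\|_\infty \le \frac{1}{1 - \Delta\tau |r_{\min}|}\|v^{m+}\|_\infty$, and $\frac{1}{1-\Delta\tau|r_{\min}|} = 1 + \mathcal{O}(\Delta\tau)$.

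The main region, and the main obstacle, is $\Omega_{\myin}\cup\Omega_{a_{\min}}$, where we chain three operations: the optimization \eqref{eq:scheme*}, the semi-Lagrangian interpolation-and-discount step \eqref{eq:semi_lag_results}, and the Fourier convolution \eqref{eq:V_nG}. For \eqref{eq:scheme*}, as in the left-padding case, $|v^{(i),m+}_{n,k,j}| \le B_m + (C_r\Delta\tau + (1-\mu)a_{\max})=:\hat B_m$ via \eqref{eq:vtil_b} and boundedness of interpolation. For the SL step \eqref{eq:semi_lag_results}, linear interpolation preserves the bound on the interpolated value, and the discount factor $1/(1+\Delta\tau r_k)$ is controlled by \eqref{eq:r_min}: $|\bar v^{(i),m+}_{n,k,j}| \le \hat B_m / (1 - \Delta\tau|r_{\min}|)$ — crucially one must also check that when the departure point $(\bw_n,\br_k)$ leaves $\Omega_{\myin}\cup\Omega_{a_{\min}}$ the boundary values used in the interpolation are themselves bounded by the already-established boundary bounds, so the combined array $\bvisl^{m+}_{l,d,j}$ in \eqref{eq:v_sl_ext} is bounded by $\max(\hat B_m/(1-\Delta\tau|r_{\min}|), \text{boundary bounds})$. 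Finally, for the convolution \eqref{eq:V_nG}, I would apply the triangle inequality and the key bound \eqref{eq:sum_g}, which is the payoff of $\epsilon$-monotonicity: $|\bvisl^{m+1}_{n,k,j}| \le \big(\Delta w\Delta r\sum_{l,d}|\tilde g_{n-l,k-d}|\big)\max_{l,d}|\bvisl^{m+}_{l,d,j}| \le (1 + 2\epsilon\tfrac{\Delta\tau}{T})\cdot(\text{previous bound})$. Then \eqref{eq:scheme} takes a max of two such, preserving the bound. Collecting, $B_{m+1} \le (1 + 2\epsilon\tfrac{\Delta\tau}{T})\cdot\tfrac{1}{1-\Delta\tau|r_{\min}|}\cdot(B_m + C\Delta\tau) \le (1 + C'\Delta\tau)(B_m + C\Delta\tau)$ for some $C,C'$ independent of $h$ (using $\epsilon \ll 1/2$ bounded), and the discrete Grönwall/Gronwall-type recursion gives $B_m \le e^{C'T}(B_0 + CT)$, uniformly in $h$. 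The subtle points I would be careful about: (a) making sure every appeal to "boundedness of linear interpolation" is legitimate, i.e. the interpolation stencil only ever uses nodes whose values are already bounded (this is exactly the $N^\dagger = 2N$, $K^\dagger = 2K$ padding design, plus the remark that SL departure points stay in $\Omega$), so there is no circularity in the induction; (b) tracking that the additive constants ($c$, $C_r\Delta\tau$, $(1-\mu)a_{\max}$, boundary-data bounds) do not blow up — the $C_r\Delta\tau$ term is $\mathcal{O}(\Delta\tau)$ and absorbed into the Grönwall constant, while $c$ and $(1-\mu)a_{\max}$ are fixed; and (c) handling the degenerate case $a_j = a_{\min} = 0$, where several controls collapse and the bounds only simplify. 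No single step is deep, but assembling the region-by-region bounds without circular dependence on not-yet-established bounds at the current time level is where the care is needed.
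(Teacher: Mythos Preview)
Your overall architecture is right, but there is a genuine gap in the induction for $\Omega_{\myin}\cup\Omega_{a_{\min}}$. You bound the intervention step by $|v^{(i),m+}_{n,k,j}| \le B_m + C_r\Delta\tau + (1-\mu)a_{\max}$, and then in the Gr\"onwall recursion you silently replace this by $B_m + C\Delta\tau$. These are not the same: the non-local withdrawal contributes an additive $(1-\mu)a_{\max}$ that is $\mathcal{O}(1)$, not $\mathcal{O}(\Delta\tau)$. If you carry it honestly, the recursion is $B_{m+1}\le (1+C'\Delta\tau)\bigl(B_m + (1-\mu)a_{\max} + C_r\Delta\tau\bigr)$, and summing over $M=T/\Delta\tau$ steps gives a bound growing like $M\cdot(1-\mu)a_{\max}\sim T(1-\mu)a_{\max}/\Delta\tau\to\infty$ as $h\to 0$. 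Your remark (b) that ``$(1-\mu)a_{\max}$ is fixed'' does not help, precisely because a fixed constant added $M$ times blows up.

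The paper repairs this by making the inductive bound $a_j$-dependent rather than uniform: it proves $|v_{n,k,j}^{m}| \le C_m(\|v^0\|_\infty + a_j)$ with $C_m = e^{2m\epsilon\Delta\tau/T}e^{Cm\Delta\tau}$. The point is that after a withdrawal of $\gamma^*$, the interpolation $\tilde v_{n,k,j}^m$ is evaluated at $a_j-\gamma^*$, so by the induction hypothesis (applied at the smaller $a$-value) it is bounded by $C_m(\|v^0\|_\infty + a_j - \gamma^*)$. Since $f(\gamma^*)\le\gamma^*$ and $C_m\ge 1$, one gets
\[
|v_{n,k,j}^{m+}| \le C_m\bigl(\|v^0\|_\infty + a_j - \gamma^*\bigr) + \gamma^* \le C_m\bigl(\|v^0\|_\infty + a_j\bigr),
\]
and the intervention step produces \emph{no growth at all} in the constant. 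The SL discount and the $\epsilon$-monotone convolution then contribute exactly the factors $(1+\Delta\tau C)$ and $(1+2\epsilon\Delta\tau/T)$ you identified. So the missing idea is to let the cash-flow term $f(\gamma^*)$ be absorbed by the decrement $a_j\mapsto a_j-\gamma^*$ in an $a_j$-dependent bound, rather than bounding $f$ crudely by $(1-\mu)a_{\max}$ against a $j$-uniform $B_m$.
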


\begin{proof}[Proof of Lemma~\ref{lemma:stability}]
For fixed $h >0$, we have $\left\| v^{0} \right\|_{\infty} <\infty$,
and thus, $\sup_{h > 0} \left\| v^{0} \right\|_{\infty} < \infty$.
Motivated by this observation, to demonstrate $\ell_\infty$-stability of our scheme,
we aim to demonstrate that, for a fixed $h > 0$, at any $(w_n, r_k, a_j, \tau_m)$ in $\Omega$,
\EQ
\label{eq:key}
|v_{n, k, j}^{m}| < C'(\left\| v^{0} \right\|_{\infty} + a_j),
\text{ where }
C' =  e^{2m\epsilon  \frac{\Delta\tau}{T}} e^{\R m \Delta \tau}, \text{ with }
\R = |r_{\min}| (1 +  \Delta \tau r_{\min})^{-1},
\EN
where $\epsilon$, $0 < \epsilon <1/2 $, is the monotonicity tolerance used in \eqref{eq:test1}.
Since $m \Delta\tau \le T$, $C'$ is bounded~above.


We now discuss the  important point of how to the constant $C'$ in \eqref{eq:key} is determined.
This choice is motivated by the stability bounds for
$\Omega_{\myin} \cup \Omega_{a_{\min}}$, which primarily depend on the amplification factor
of the time-advancement step. (Boundary sub-domains require smaller stability bounds as shown subsequently).
In our proof techniques, through mathematical induction on $m$, the time-$\tau_m$ accumulative amplification factor
of the time-advancement in $\Omega_{\myin} \cup \Omega_{a_{\min}}$ can be bounded by the product of the respective amplification factors of the SL discretization and of the $\epsilon$-monotone Fourier method. For the SL discretization, from \eqref{eq:semi_lag_results} and the condition \eqref{eq:r_min}, for all $k \in \K$, we have
\EQA
\label{eq:r_k_con}
0 ~< ~ (1+ \Delta \tau r_k)^{-1} ~\le~ (1+ \Delta \tau r_{\min})^{-1} = 1 + \Delta \tau \R,
\text{ where } \R = |r_{\min}| (1 +  \Delta \tau r_{\min})^{-1} > 0,
\ENA
which results in the time-$\tau_m$ accumulative amplification factor bounded by $e^{\R m \Delta \tau}$.
For the $\epsilon$-monotone Fourier method, the bound \eqref{eq:sum_g}
suggests the time-$\tau_m$ amplification factor is bounded by $e^{2m\epsilon  \frac{\Delta\tau}{T}}$.
Putting together, we obtain the constant $C' > 0$ given in \eqref{eq:key}.

We address $\ell$-stability for the boundary and interior sub-domains separately.
For \eqref{eq:terminal}, \eqref{eq:omega_max}, it is straightforward to show
$\max_{n, k, j} |v_{n, k, j}^{m}| \le \left\| v^{0} \right\|_{\infty}$,
$n \in \N \cup \Nr$, \blue{$k \in \K$}, $j \in \J$, and $m = 0, \ldots, M$.
For \eqref{eq:omega_ot}, since the $T$-maturity zero-coupon bond price
$p_b(r_k, \tau_m;T)$ given \blue{in} \eqref{eq:vother} is non-negative, the stability
trivially to show.  For \eqref{eq:scheme_timestep_left}, since the finite difference scheme
is strictly monotone, the $\ell$-stability can be demonstrated using
the induction technique (on $m$) as in \cite{chen08a}.

To prove \eqref{eq:key} for \eqref{eq:scheme},
it is sufficient to show that for all $m \in \{ 0, \ldots, M\}$ and $j \in \J$, we have
\EQA
  \left[v_j^{m}\right]_{\max} &\le&
  e^{2 m \epsilon\frac{\Delta\tau}{T}} e^{\R m \Delta \tau}
\left(
\left\| v^{0} \right\|_{\infty} + a_{j}\right),
    \label{eq:vminusmax}
    \\
  - 2m\epsilon\frac{\Delta\tau}{T}e^{2m\epsilon  \frac{\Delta\tau}{T}} e^{\R m \Delta \tau}
\left(
\left\| v^{0} \right\|_{\infty} + a_{j}\right)
&\le&
\left[v_j^{m}\right]_{\min}.
  \label{eq:vminusmin}
\ENA
where $\big[v_j^{m}\big]_{\max} = \max_{n,k}\big\{v_{n,k,j}^{m}\big\}$ and $\big[v_j^{m}\big]_{\min} = \min_{n,k}\big\{v_{n,k,j}^{m}\big\}$. To prove \eqref{eq:vminusmax}-\eqref{eq:vminusmin},
motivated by the above reasoning regarding the choice $C'$,  we use mathematical induction on $m = 0, \ldots, M$, similar to the technique developed in \cite{online}[Lemma~5.1]. The details for this step are provided in
Appendix~\ref{app:induction}.
\end{proof}

\subsection{Error analysis results}
\label{ssc:error_analysis}
In this subsection, we identify errors arising in our numerical scheme and make assumptions needed
for subsequent proofs.
\begin{enumerate}
\item Truncating the infinite region of integration in the convolution integral
\eqref{eq:pide_con_int} to $\D^{\dagger}$ (defined in \eqref{eq:D_dagger}) results in a
boundary truncation error, denoted by $\errorb$, where
\EQA
\label{eq:green_integral_truncated_e}
\errorb  = \iint_{\mathbb{R}^2 \setminus \D^{\dagger}}
g(w - w', r - r', \dtau)~\vhsl(w', r', \cdot,\taus)~dw'~dr',
\quad (w, r) \in \D.
\ENA
Similar to the discussions in \cite{online}, 
we can show that $\errorb$ is bounded by
\EQAS
\label{eq:truncation}
        \left| \errorb \right| &\le&  K_1 \Delta \tau e^{-K_2 \l(P^{\dagger} \wedge Q^{\dagger}\r)},
    \quad
    \text{$\forall (w,r) \in \D$},
     \quad
     \text{$K_1, K_2 > 0$ independent of $\Delta \tau$, $P^{\dagger}$ and $Q^{\dagger}$},
\ENAS
where $P^{\dagger} = w^{\dagger}_{\max} - w^{\dagger}_{\min}$ and $Q^{\dagger} = r^{\dagger}_{\max} - r^{\dagger}_{\min}$. For fixed $P^{\dagger}$ and $Q^{\dagger}$, \eqref{eq:truncation} shows $\errorb \to 0$, as $\Delta \tau \to 0$.
However,  as typically required for showing consistency, one would need to ensure $\frac{\errorb}{\Delta \tau}  \to 0$ as $\Delta \tau \to 0$.
Therefore, from \eqref{eq:truncation}, we need $P^{\dagger} \to \infty$ and $Q^{\dagger} \to \infty$ as $\Delta \tau \to 0$, which can be achieved by letting $P^{\dagger} = C/\Delta \tau$ and $Q^{\dagger} = C'/\Delta \tau$,
for finite $C>0$ and $C' >0$.

\item The next error arises in approximating the Green's function $g(w, r,  \Delta \tau)$ by its localized, periodic approximation $\hat{g}(w, r, \Delta \tau)$ defined in \eqref{eq:green_hat}. We denote this error by
    $\errorg$.
While $\hat{g}(w, r, \Delta \tau) \neq g(w, r, \Delta \tau)$  for $(w, r) \in \D$. Nonetheless,
if $P^{\dagger} = C_5/\Delta \tau$ and $Q^{\dagger} = C'_5/\Delta \tau$ as discussed above,
then, as $\Delta \tau \to 0$, we have
\begin{linenomath}
\postdisplaypenalty=0
\begin{align*}
\hat{g}(w,r, \Delta \tau)
\overset{\text{(i)}}{=}
\iint_{\mathbb{R}^2} e^{2\pi i \eta w} e^{2\pi i \xi r} G(\eta, \xi, \Delta \tau) d\eta d\xi + \mathcal{O}\l(1/\l(P^{\dagger} \wedge Q^{\dagger}\r)^2\r)
\overset{\text{(ii)}}{=}
g(w, r, \Delta \tau) + \mathcal{O}(\Delta \tau ^2).
\end{align*}
\end{linenomath}
Here, (i) is due to $P^{\dagger} \to \infty$ and $Q^{\dagger} \to \infty$ as $\Delta \to 0$, ensuring in an
$\mathcal{O}\l(1/\l(P^{\dagger} \wedge Q^{\dagger}\r)^2\r) \sim \mathcal{O}((\Delta \tau) ^2)$ error for the traperzoidal rule approximation of the integral, and (ii) is due to that $G(\cdot)$ is the Fourier transform of $g(\cdot)$. Therefore, $\errorg = \mathcal{O}(\Delta \tau ^2)$ as $\Delta \tau \to 0$.

\item Truncating $\tilde{g}_{n-l}(\infty)$, defined in \eqref{eq:gtilde_infty}, to
to $\tilde{g}_{n-l}(\alpha)$, for a finite $\alpha \in \{2, 4, 8, \ldots\}$, in \eqref{eq:gtilde_truncated},
\blue{gives rise} to a Fourier series truncation error, denote by $\errorf$.
As shown in Appendix~\ref{app:truncation}, as $\Delta \tau$, $\Delta w$ and $\Delta r \to 0$,  this error is
\[
\errorf = \mathcal{O}\l(e^{-\frac{\Delta \tau}{(\Delta w)^2}}/(\Delta w \wedge \Delta r)^2\r)
+ \mathcal{O}\l(e^{-\frac{\Delta \tau}{(\Delta r)^2}}/(\Delta w \wedge \Delta r)^2\r), \quad
\text{ as } \Delta \tau,~\Delta w,~\Delta r \to 0.
\]


\item Approximating a function in $\G \cap \C{\Oinf}$ by its projection on the piecewise linear basis
     functions $\varphi_l(\cdot)$ and $\psi_d(\cdot)$, $l \in \ND$ ad $d \in \KD$,
     as in \eqref{eq:rhowfunc}, as well as by linear interpolation, as in Remark~\eqref{eq:intp},
     gives rise to a projection/interpolation error, collectively denoted by
     $\errorp$. Generally $ \errorp  = \mathcal{O}\l(\max (\Delta w, \Delta r, \Delta a)^2\r)$,
     as $\Delta w, \Delta r, \Delta a \to 0$.
\end{enumerate}
Motivated by the above discussions, for convergence analysis, we make an assumption about
the discretization parameter.
\begin{assumption}
\label{as:dis_parameter}
We assume that there is a discretization parameter $h$
such that
\EQA
\label{eq:dis_parameter}
&& \Delta w = C_1 h,
\quad
\Delta r = C_2 h,
\quad
\Delta a_{\max} = C_3 h,
\quad
\Delta a_{\min} = C_3'h,
\nonumber
\\
&& \qquad\qquad\qquad\qquad\qquad\qquad\qquad
\Delta \tau = C_4 h,
\quad
P^{\dagger} = C_5/h,
\quad
Q^{\dagger} = C'_5/h,
\ENA
where the positive constants $C_1$, $C_2$, $C_3$, $C'_3$, $C_4$, $C_5$ and $C_5'$ are independent of $h$.
\end{assumption}
Under Assumption~\ref{as:dis_parameter}, it is straightforward to obtain
\EQ
\label{eq:err_h}
\errorb = \mathcal{O}(he^{-\frac{1}{h}}),
\quad
\errorg = \mathcal{O}(h^2),
\quad
\errorf  = \mathcal{O}(e^{-\frac{1}{h}}/h^2),
\quad
\errorp  = \mathcal{O}(h^2).
\EN
It is also straightforward to ensure the theoretical requirement $P^{\dagger}, Q^{\dagger} \to \infty$ as $h \to 0$.
For example, with $C_5 = C'_5 = 1$ in \eqref{eq:dis_parameter},
we can quadruple $N^{\dagger}$ and $K^{\dagger}$ as we halve $h$.
We emphasize that, for practical purposes, if $P^{\dagger}$ and $Q^{\dagger}$ are chosen sufficiently large,
both can be kept constant for all $\Delta \tau$ refinement levels (as we let $\Delta \tau  \to 0$).
The effectiveness of this practical approach is demonstrated through numerical experiments
in Section~\ref{sec:num_test}. Also \blue{see} relevant discussions in \cite{online}.

To show convergence of the numerical scheme to the viscosity solution,
our starting point is discrete convolutions of the form \eqref{eq:V_nG} which typically involve a generic function $\varphi \in \G$. There are two cases: (i) $\varphi$ is not necessarily smooth,
which corresponds to the SL discretization or non-local impulses,
and (ii) $\varphi$ is a test function in $\G \cap \C{\Oinf}$,
which corresponds to local impulses. In subsequent discussions,
we present results relevant to these two cases in
Lemma~\ref{lemma:ar} below. For differential and jump operators, we use the notation
$[\cdot]_{n,k,j}^m := [\cdot]({\bf{x}}_{n,k,j}^m)$.

\begin{lemma}
\label{lemma:ar}
Suppose the discretization parameter $h$ satisfies Assumption~\ref{as:dis_parameter}.
Let $\phi$  and $\chi$ be in $\G \cap \C{\Oinf}$ and $\G$, respectively.
For ${\bf{x}}_{n,k,j}^{m}$, $n \in \N$, $j \in \J$, $k \in \K$, $m \in \{0, \ldots, M\}$,
we~have
\begin{align}
\Delta w \Delta r
\mysum^{d \in \KD}_{l \in \ND}
    \tilde{g}_{n-l, k-d}~
    \phi_{l,d,j}^{m}
&=
\phi_{n,k,j}^{m}
+ \Delta \tau \left[ \Lg \phi + \mathcal{J} \phi\right]_{n,k,j}^{m} + \mathcal{O}( h^2),
\label{eq:error_analysis_1}
\\
\Delta w \Delta r
\mysum^{d \in \KD}_{l \in \ND}
    \tilde{g}_{n-l, k-d}~
    \chi_{l,d,j}^{m}
&=
\chi_{n,k,j}^{m} + \mathcal{O}(h^2)
+  \myerrm{\chi},
\text{ where $\myerrm{\chi} \to 0$ as $h \to 0$}.
\label{eq:error_analysis_2}
\end{align}
\end{lemma}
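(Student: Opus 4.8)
The plan is to treat the two identities~\eqref{eq:error_analysis_1} and~\eqref{eq:error_analysis_2} together as far as possible, using the fact that the discrete convolution with weights $\tilde{g}_{n-l,k-d}$ is, by construction, an approximation of the exact solution operator of the PIDE~\eqref{eq:dis_pide}. First I would write out explicitly what the left-hand side of~\eqref{eq:error_analysis_1} means: by~\eqref{eq:gtilde_infty}--\eqref{eq:gtilde_truncated} the weights $\tilde{g}_{p,q}$ are projections of the localized periodic Green's function $\hat g$ onto piecewise-linear basis functions, so $\Delta w \Delta r \sum_{l,d} \tilde{g}_{n-l,k-d}\,\phi_{l,d,j}^m$ equals $\iint_{\D^\dagger}\hat g(w_n-w',r_k-r',\Delta\tau)\,(\Pi\phi)(w',r')\,dw'\,dr'$ where $\Pi\phi$ is the piecewise-linear interpolant of $\phi$ in $(w,r)$ at time $\tau_m$ (with boundary values attached outside $\D$, which is where the padding matters). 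Replacing $\Pi\phi$ by $\phi$ costs $\mathcal{O}(\max(\Delta w,\Delta r)^2)=\mathcal{O}(h^2)$ (the error $\errorp$); replacing $\hat g$ by the true Green's function $g$ costs $\mathcal{O}(\Delta\tau^2)=\mathcal{O}(h^2)$ (the error $\errorg$, using $P^\dagger,Q^\dagger = C/h \to\infty$ under Assumption~\ref{as:dis_parameter}); and extending the integration from $\D^\dagger$ back to $\mathbb{R}^2$ costs $\errorb = \mathcal{O}(h e^{-1/h})$. What remains is $\iint_{\mathbb{R}^2} g(w_n-w',r_k-r',\Delta\tau)\,\phi(w',r',a_j,\tau_m)\,dw'\,dr'$, i.e.\ the exact time-$\Delta\tau$ evolution operator of~\eqref{eq:dis_pide} applied to $\phi$. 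Since $g$ is the Green's function of $\partial_\tau - \Lg - \mathcal{J}$, a Taylor/Duhamel expansion of this operator to first order in $\Delta\tau$ gives $\phi_{n,k,j}^m + \Delta\tau\,[\Lg\phi + \mathcal{J}\phi]_{n,k,j}^m + \mathcal{O}(\Delta\tau^2)$, where smoothness of $\phi\in\C{\Oinf}$ and boundedness of the relevant derivatives (plus $\phi\in\G$, so bounded, and the finite jump measure $\lambda b(\cdot)$, so $\mathcal{J}\phi$ is well-defined and bounded) justify the remainder estimate. Collecting all the $\mathcal{O}(h^2)$ terms yields~\eqref{eq:error_analysis_1}.

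For~\eqref{eq:error_analysis_2}, with only $\chi\in\G$ (bounded, not smooth), the same reduction brings us to $\iint_{\mathbb{R}^2} g(w_n-w',r_k-r',\Delta\tau)\,\chi(w',r',a_j,\tau_m)\,dw'\,dr'$ up to $\mathcal{O}(h^2)$ errors from the same three sources. Now I cannot Taylor-expand, but I can use that $g(\cdot,\cdot,\Delta\tau)$ is a probability density (Proposition~\ref{proposition:sum_g} is the discrete analogue, and $\hat g\ge 0$) concentrating at the origin as $\Delta\tau\to 0$: the diffusion part has variance $\sim\Delta\tau$ and the jump part fires with probability $\sim\lambda\Delta\tau$. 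Hence this convolution equals $\chi_{n,k,j}^m$ plus a term that tends to $0$ as $\Delta\tau\to 0$ — by boundedness of $\chi$ together with an approximate-identity argument (dominated convergence / continuity of translation in a suitable averaged sense, or simply the standard fact that convolution of a bounded function with a shrinking mollifier of unit mass converges back to the function at points where it is controlled). I would define $\myerrm{\chi}$ to absorb precisely this "approximate identity defect," note that it is bounded uniformly (by $2\|\chi\|_\infty$, say, up to the $\mathcal{O}(h^2)$ pieces) and $\to 0$ as $h\to 0$, which is all the lemma claims.

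The main obstacle I anticipate is the bookkeeping at the boundary / padding region: the sums run over $l\in\ND$, $d\in\KD$, i.e.\ over the full padded grid, and the values $\phi_{l,d,j}^m$ or $\chi_{l,d,j}^m$ outside $\D$ carry the boundary data rather than the interpolant of a global smooth function. One must check that, because $g(w_n-\cdot,r_k-\cdot,\Delta\tau)$ has exponentially small mass at distance $\gtrsim P^\dagger$ from $(w_n,r_k)$ while $n\in\N$, $k\in\K$ are interior indices at distance at least $\tfrac{P^\dagger}{2}$-ish (recall $N^\dagger=2N$, $K^\dagger=2K$ and~\eqref{eq:w_choice_green_jump_form}), the contribution of the padding nodes is of the same exponentially-small order as $\errorb$ and can be folded into the $\mathcal{O}(h e^{-1/h})$ term under Assumption~\ref{as:dis_parameter}. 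A second, more technical point is controlling the jump integral $\mathcal{J}\phi$ on the truncated $w$-range: one needs $b(\cdot)$ to have sufficiently light tails (true for both the log-normal~\eqref{eq: PDF p(psi) Merton} and log-double-exponential~\eqref{eq: PDF p(psi) Kou} cases) so that restricting the $y$-integration to the computational $w$-window introduces only an $\mathcal{O}(h^2)$ (indeed exponentially small) error; this is analogous to the treatment in~\cite{online} and I would cite it rather than redo it. Everything else — the three error bounds~\eqref{eq:err_h}, Proposition~\ref{proposition:sum_g}, and the first-order expansion of the PIDE solution operator — is routine given the results already established.
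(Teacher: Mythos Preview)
Your proposal is correct and follows essentially the same route as the paper: reduce the discrete convolution to the continuous convolution $\iint g(w_n-w',r_k-r',\Delta\tau)\,\xi(w',r')\,dw'dr'$ modulo the four error sources $\errorb,\errorg,\errorf,\errorp$ (you omit $\errorf$, the Fourier-series truncation from $\alpha_\epsilon$ to $\infty$ in~\eqref{eq:gtilde_truncated}, but it is $\mathcal{O}(e^{-1/h}/h^2)$ and gets absorbed), then for smooth $\phi$ expand the semigroup to first order, and for bounded $\chi$ invoke the approximate-identity behaviour of $g(\cdot,\cdot,\Delta\tau)$. The paper does two things slightly differently: (i) it inserts a mollification step to replace $\xi$ by an $L^1(\mathbb{R}^2)$ function agreeing with $\xi$ on $\D^\dagger$, which is needed because the paper then recovers $\Lg\phi+\mathcal{J}\phi$ via Fourier and inverse Fourier transform of the convolution (using $G(\eta,\xi,\Delta\tau)=1+\Psi(\eta,\xi)\Delta\tau+\mathcal{O}(\Delta\tau^2)$) rather than your Taylor/Duhamel expansion; and (ii) for non-smooth $\chi$ it writes the defect as $\iint g(w_n-w,r_k-r,\Delta\tau)(\chi''(w,r)-\chi''(w_n,r_k))\,dwdr$ and appeals to the ``cancellation properties'' of parabolic Green's functions in Garroni--Menaldi~\cite{garronigreenfunctionssecond92}, which is exactly your approximate-identity argument phrased differently. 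Your boundary/padding discussion is in fact more explicit than the paper's, which simply defers these details to~\cite{online}.
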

\begin{proof}[Proof of Lemma~\ref{lemma:ar}]
Lemma~\ref{lemma:ar} can be proved using similar techniques in
\cite{online}[Lemmas~5.3 and 5.4] for the one-dimensional Greens' function case.
For completeness, we provide the key steps below.
We let $a = a_j$ and $\tau_ = \tau_m$ be fixed, and with a slight abuse of notation,
we view $\phi$ and $\chi$ as functions of $(w, r)$.
Let $\xi  \in \{\phi, \chi\}$.
Starting from the discrete convolutions on the left-hand-side of \eqref{eq:error_analysis_1}-\eqref{eq:error_analysis_2},
we need to recover an associated convolution integrals of the form \eqref{eq:pide_con_int}
which is posed on an infinite integration region. Since  $\xi  \in \{\chi, \phi\}$
is not necessarily in $L^1(R^2)$, standard mollification techniques can be used
to obtain $\xi' \in L^1(R^2)$ which agrees with  $\xi$ on $\D^{\dagger}$.
Then, with $\xi  \in \{\phi, \chi\}$, using error analysis, we have
\EQ
\label{eq:integral}
\Delta w \Delta r
\mysum^{d \in \KD}_{l \in \ND}
    \tilde{g}_{n-l, k-d}~
    \xi_{l,d,j}^{m} = \iint_{R^2}\xi''(w, r) g(w_n - w, r_k - r, \Delta \tau) dwdr +
    \errorb + \errorg + \errorf + \errorp.
\EN
where $\xi''$ is a projection of $\xi'$ onto the piecewise linear basis functions $\varphi_l(\cdot)$ and $\psi_d(\cdot)$, $l \in \ND$ ad $d \in \KD$.
By Assumption~\ref{as:dis_parameter} and \eqref{eq:err_h},
$ \errorb + \errorg + \errorf + \errorp = \mathcal{O}(h^2)$.

For $\xi  = \phi$, and since $\phi$ is smooth, we then apply the Fourier Transform and inverse Fourier Transform to
$\iint_{R^2}\xi''(w, r) g(w_n - w, r_k - r, \Delta \tau) dwdr$ in  \eqref{eq:integral} to recover the differential and jump operators.

For $\xi  = \chi$ which is not smooth,  we write the convolution integral in \eqref{eq:integral} as
\[
\iint_{R^2}\chi''(w, r) g(w_n - w, r_k - r, \Delta \tau)
= \chi''(w_n, r_k) + \iint_{\mathbb{R}^2}  g(w_n -w, r_k - r, \Delta \tau)\l(\chi''(w, r) - \chi''(w_n, r_k)\r)~dwdr.
\]
Note that $\chi''(w_n, r_k) = \chi_{l,d,j}^{m}$, and letting
$\myerrm{\chi} =  \iint_{\mathbb{R}^2} (\cdot) dwdr$ gives \eqref{eq:error_analysis_2},
due to the ``cancelation properties'' of the Green's function
\cite{garronigreenfunctionssecond92, Duffy2015}.
This concludes the proof.
\end{proof}
We now consider a special case of the discrete convolution \eqref{eq:V_nG}
that involves interpolation of values of a smooth test function evaluated at the departure points of the SL trajectory presented in Subsection~\ref{ssc:sld}. Specifically, given $\phi \in \G \cap \C{\Oinf}$,
for ${\bf{x}}_{l,d,q}^{m+1} \in \Omega$, $0 <\tau_{m+1} \le T$, we define discrete values $\l(\pb\r)_{l,d,q}^{m}$ as follows
\EQA
\label{eq:phi_term}
\begin{array}{l}
(\pb)_{l,d,q}^{m}
= \left\{
\begin{array}{lll}
\mathcal{I} \{ \phi^{m}\} (\bw_l, \br_d, a_q) (1 + \Delta \tau r_d)^{-1}
& {\bf{x}}_{l,d,q}^{m+1} \in \Omega_{\myin} \cup \Omega_{a_{\min}},
\\
\phi_{l,d,q}^{m}
& \text{otherwise}.
\end{array}
\right.
\end{array}
\ENA
Here, as described in Remark~\ref{eq:intp}, $\mathcal{I} \left\{ \phi^{m} \right\} (\cdot)$
is the linear interpolation operator acting
on discrete data $\left\{(w_l, r_d, a_q), \phi_{l,d,q}^{m} \right\}$
and $(\bw_l, \br_d)$ is given by \eqref{eq:semi_lag_results}, while $a_q$ is fixed.

\begin{lemma}
\label{lemma:error_smooth_sl}
Let $\phi \in \G \cap \C{\Oinf}$ and $\{(w_l, r_d, a_q), (\pb)_{l,d,q}^{m}\}$ be given by \eqref{eq:phi_term}.
For any fixed ${\bf{x}}_{n,k,j}^{m} \in \Omega_{\myin} \cup \Omega_{a_{\min}}$, i.e.\ $n \in \N$, $j \in \J$, $k \in \K$,
and $m \in \{1, \ldots, M\}$, we have
\EQ
\label{eq:error_smooth_sl}
\Delta w \Delta r
\mysum_{l \in \ND}^{d \in \KD}
        \tilde{g}_{n-l, k-d}~
   (\pb)_{l,d,j}^{m}
=
\phi_{n,k,j}^{m}+ \Delta \tau \left[ \mathcal{L} \phi + \mathcal{J} \phi
\right]_{n,k,j}^{m} + \mathcal{O}( h^2 ) + \Delta \tau \myerrm{}.
\EN
Here, $\tilde{g}_{n-l, k-d}$ is given by \eqref{eq:gtilde_truncated},
$\mathcal{L}$ and $\mathcal{J}$ are defined in \eqref{eq:Operator_LJ},
and $\myerrmm{} \to 0$ as $h \to 0$..
\end{lemma}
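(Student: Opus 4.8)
\textbf{Proof proposal for Lemma~\ref{lemma:error_smooth_sl}.}

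The plan is to combine the semi-Lagrangian consistency analysis with the Green's-function error estimate already established in Lemma~\ref{lemma:ar}. First, I would observe that the quantity $(\pb)_{l,d,j}^{m}$ defined in \eqref{eq:phi_term} is, for indices with ${\bf{x}}_{l,d,j}^{m+1} \in \Omega_{\myin} \cup \Omega_{a_{\min}}$, precisely the discretized right-hand side of the semi-Lagrangian relation \eqref{eq:inte_2}: it is the interpolated value of $\phi$ at the SL departure point $(\bw_l, \br_d)$ divided by $(1 + \Delta\tau\, r_d)$. Since $\phi$ is smooth, I would Taylor-expand $\phi$ around $(w_l, r_d)$ along the SL trajectory. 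Using the explicit departure points in \eqref{eq:semi_lag_results} — namely $\bw_l = w_l + (r_l - \frac{\sigz^2}{2} - \beta)\Delta\tau + \mathcal{O}(\Delta\tau^2)$ and $\br_d = r_d - \delta(\theta - r_d)\Delta\tau + \mathcal{O}(\Delta\tau^2)$ — together with $(1+\Delta\tau r_d)^{-1} = 1 - \Delta\tau r_d + \mathcal{O}(\Delta\tau^2)$, I get
\[
(\pb)_{l,d,j}^{m} = \phi_{l,d,j}^{m} - \Delta\tau\,[\Ls\phi - r\phi]_{l,d,j}^{m} + \mathcal{O}(\Delta\tau^2) + \Delta\tau\,\mathcal{E}_{I,l,d}^{m},
\]
where the $\mathcal{O}(\Delta\tau^2)$ absorbs the ODE-solution and trajectory-expansion errors, and $\mathcal{E}_{I,l,d}^{m}$ is the linear interpolation error (of order $\mathcal{O}((\Delta w + \Delta r)^2/\Delta\tau)$ per Remark~\ref{eq:intp}, hence contributing an $\mathcal{O}(h^2)$ term after multiplying by $\Delta\tau$), which vanishes as $h \to 0$ — this is where the $\Delta\tau\,\myerrm{}$ term in \eqref{eq:error_smooth_sl} originates. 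For indices outside $\Omega_{\myin}\cup\Omega_{a_{\min}}$ the value is simply $\phi_{l,d,j}^{m}$, which matches the above expansion up to the padding/boundary contributions that the discrete convolution handles as in Lemma~\ref{lemma:ar}.

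Next I would substitute this expansion into the discrete convolution $\Delta w\Delta r \sum \tilde{g}_{n-l,k-d}\,(\pb)_{l,d,j}^{m}$. The leading term $\Delta w\Delta r\sum \tilde{g}_{n-l,k-d}\,\phi_{l,d,j}^{m}$ is handled directly by \eqref{eq:error_analysis_1} of Lemma~\ref{lemma:ar}, yielding $\phi_{n,k,j}^{m} + \Delta\tau[\Lg\phi + \mathcal{J}\phi]_{n,k,j}^{m} + \mathcal{O}(h^2)$. For the $\mathcal{O}(\Delta\tau)$ correction term, I apply the second estimate \eqref{eq:error_analysis_2} with $\chi = \Ls\phi - r\phi \in \G$ (smooth, but I only need it in $\G$): the convolution of $\tilde{g}$ against $-\Delta\tau\,\chi_{l,d,j}^{m}$ reproduces $-\Delta\tau\,\chi_{n,k,j}^{m}$ up to $\Delta\tau\cdot\mathcal{O}(h^2) = \mathcal{O}(h^3)$ and $\Delta\tau\cdot\myerrm{\chi}$, both of which are absorbed into $\mathcal{O}(h^2) + \Delta\tau\,\myerrm{}$. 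Combining, $\Lg\phi + (\Ls\phi - r\phi) = \mathcal{L}\phi$ by the decomposition in \eqref{eq:operator_Lgs}, giving the stated $\Delta\tau[\mathcal{L}\phi + \mathcal{J}\phi]_{n,k,j}^{m}$. The interpolation-error terms $\Delta\tau\,\mathcal{E}_{I,l,d}^{m}$, when convolved against the (approximately $\ell^1$-bounded, by \eqref{eq:sum_g}) weights $\tilde{g}$, produce a term of the form $\Delta\tau\,\myerrm{}$ with $\myerrmm{} \to 0$, since the interpolation error tends to zero uniformly as $h \to 0$ and the weight sums are uniformly bounded by $1 + 2\epsilon\Delta\tau/T$.

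The main obstacle I anticipate is the careful bookkeeping of the interpolation error: a naive bound on $\mathcal{E}_{I,l,d}^{m}$ is $\mathcal{O}((\Delta w)^2)$ for a $C^2$ function, which would be fine, but $\phi$ is only assumed $\C{\Oinf}$ (smooth) — so this is actually not an issue for $\phi$ itself; the subtlety is rather that after dividing by $(1+\Delta\tau r_d)^{-1}$ and summing against weights that are only $\epsilon$-monotone (not nonnegative), I must ensure the error does not get amplified uncontrollably. This is exactly why the bound \eqref{eq:sum_g} on $\sum(|\max(\tilde{g},0)| + |\min(\tilde{g},0)|) \le 1 + 2\epsilon\Delta\tau/T$ is essential: it guarantees the convolution acts as a near-contraction on the error, so that $\Delta\tau\,\myerrm{}$ genuinely vanishes. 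A secondary technical point, already flagged in Lemma~\ref{lemma:ar}, is that $\chi = \Ls\phi - r\phi$ need not be in $L^1(\mathbb{R}^2)$, so the mollification/truncation argument recovering the infinite-domain convolution integral must be invoked again; but since this is verbatim the argument of Lemma~\ref{lemma:ar}, I would simply cite it rather than repeat it. I would conclude by assembling the three pieces and noting that all error contributions collapse to $\mathcal{O}(h^2) + \Delta\tau\,\myerrm{}$ with $\myerrmm{} \to 0$ as $h \to 0$, which is the claim.
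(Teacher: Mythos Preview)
Your approach is essentially identical to the paper's: Taylor-expand $(\pb)_{l,d,j}^m$ along the SL trajectory to extract $\phi_{l,d,j}^m + \Delta\tau[\Ls\phi - r\phi]_{l,d,j}^m$, apply the two parts of Lemma~\ref{lemma:ar} separately to the $\phi$-convolution (via \eqref{eq:error_analysis_1}) and the correction-convolution (via \eqref{eq:error_analysis_2}), and recombine using $\mathcal{L} = \Lg + \Ls - r\cdot$. Two minor points to tighten: (i) you have sign slips in the trajectory expansion (it should read $\br_d = r_d + \delta(\theta - r_d)\Delta\tau + \mathcal{O}(\Delta\tau^2)$, hence $(\pb)_{l,d,j}^m = \phi_{l,d,j}^m + \Delta\tau[\Ls\phi - r\phi]_{l,d,j}^m + \mathcal{O}(h^2)$ with a plus sign, which is what your final combination actually requires); and (ii) the paper handles the boundary indices more cleanly than your hand-wave by defining the auxiliary function $\psi$ piecewise---equal to $(\Ls\phi - r\phi)$ on $\Omega_{\myin}\cup\Omega_{a_{\min}}$ and zero elsewhere---so that $\psi\in\G$ (genuinely non-smooth, which is precisely why \eqref{eq:error_analysis_2} rather than \eqref{eq:error_analysis_1} is needed) and the decomposition $(\pb)_{l,d,j}^m = \phi_{l,d,j}^m + \Delta\tau\,\psi_{l,d,j}^m + \mathcal{O}(h^2)$ holds uniformly for all $(l,d)\in\ND\times\KD$.
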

\begin{proof}[Proof of Lemma~\ref{lemma:error_smooth_sl}]
We let $j \in \J$ be fixed in this proof. We start by investigating the interpolation result $\mathcal{I} \left\{ \phi^{m}\right\} (\bw_l, \br_d, a_j)$ for ${\bf{x}}_{l,d,j}^{m} \in \Omega_{\myin} \cup \Omega_{a_{\min}}$ in \eqref{eq:phi_term}.
Remark~\ref{eq:intp}
\EQA
\label{eq:phi_interp_sl}
\mathcal{I} \left\{ \phi^{m} \right\} (\bw_l, \br_d, a_j)
&\overset{(i)}{=} &\phi\left(\bw_{l},  \br_{d}, a_j, \tau_{m}\right)
 + \mathcal{O}(h^2)
\nonumber
\\
&\overset{(ii)}{=} &\phi_{l,d,j}^{m} + \Delta \tau \l[(r_d-\frac{\sigz^2}{2} - \beta) (\phi_w)_{l,d,j}^{m} + \delta (\theta - r_d) (\phi_r)_{l,d,j}^{m} \r] + \mathcal{O}(h^2)
\nonumber
\\
&= & \phi_{l,d,j}^{m} + \Delta \tau \left[ \Ls \phi \right]^m_{l,d,j} + \mathcal{O}(h^2).
\ENA
Here, (i) follows from Remark~\ref{eq:intp}[equation \eqref{eq:interp_sim}], noting $\phi \in \C{\Oinf}$;
in (ii), we apply a Taylor series to expand
the term $\phi\left(\bw_{l}, \br_{d}, a_j, \tau_{m}\right)$ about the point $\left(w_{l}, r_{d}, a_{j}, \tau_{m}\right)$,
and then use $e^{\Delta \tau} = 1 + \Delta \tau + \mathcal{O}(h^2)$ and $e^{-\delta \Delta \tau} = 1 - \delta \Delta \tau + \mathcal{O}(h^2)$.
We note that, for ${\bf{x}}_{l,d,q}^{m} \in \Omega_{\myin} \cup \Omega_{a_{\min}}$, we have
\EQ
\label{eq:simple}
(1+ \Delta \tau r_d)^{-1} = 1 - \Delta \tau r_d + \mathcal{O}\left((\Delta \tau)^2\right),
\quad r_d \in [r_{\min}, r_{\max}].
\EN
Using \eqref{eq:simple} and \eqref{eq:phi_interp_sl}, we arrive at
\begin{align}
\label{eq:phi_interp_sl_2}
\mathcal{I} \left\{ \phi^{m} \right\} (\bw_l, \br_d, a_j) (1 + \Delta \tau r_d)^{-1}
=
\phi_{l,d,j}^{m} + \Delta \tau \left[ \Ls \phi - r \phi \right]_{l,d,j}^{m} + \mathcal{O}(h^2),\quad
{\bf{x}}_{l,d,j}^{m} \in \Omega_{\myin} \cup \Omega_{a_{\min}}.
\end{align}

\noindent Next, letting ${\bf{x'}} = (w', a', r', \tau')$,
we define a function $\psi\left({\bf{x'}}\right): \Oinf \to \mathbb{R}$ by
\EQA
\label{eq:phi_interp_sl_3}
\psi\left({\bf{x'}}\right)
=
\left\{
\begin{array}{ll}
(r' -\frac{\sigz^2}{2} - \beta) \phi_w\left({\bf{x'}}\right)    +
\delta (\theta - r' )\phi_r\left({\bf{x'}}\right) - r' \phi\left({\bf{x'}}\right),
& {\bf{x'}} \in \Omega_{\myin} \cup \Omega_{a_{\min}},
\\
0 & {\text{otherwise}}.
\end{array}
\right.
\ENA
Note that $\psi \in \G$, and that $\psi_{l,d,j}^{m} = [ \Ls \phi - r \phi]_{l,d,j}^{m}$ for ${\bf{x}}_{l,d,j}^{m} \in  \Omega_{\myin} \cup \Omega_{a_{\min}}$.
Now, we consider the discrete convolution on the rhs of \eqref{eq:error_smooth_sl}:
$\Delta w \Delta r
\mysum_{l \in \ND}^{d \in \KD}
        \tilde{g}_{n-l, k-d}~
  (\pb)_{l,d,j}^{m} = \ldots$
\begin{linenomath}
\postdisplaypenalty=0  
\begin{align*}
\ldots
&\overset{(i)}{=}
\Delta w \Delta r
\mysum_{l \in \ND}^{d \in \KD}
        \tilde{g}_{n-l, k-d}~\phi_{l,d,j}^{m}
        + \Delta \tau
        \big(\Delta w \Delta r
        \mysum_{l \in \ND}^{d \in \KD}
        \tilde{g}_{n-l, k-d}~
        \psi_{l,d,j}^{m}\big)
        + \mathcal{O}(h^2)
\\
&
\overset{(ii)}{=}
\phi_{n,k,j}^{m} + \Delta \tau \left[ \Lg \phi + \mathcal{J} \phi
\right]_{n,k,j}^{m} + \Delta \tau \left[ \Ls \phi - r \phi \right]_{n,k,j}^{m} +
  \Delta \tau \myerrm{} + \mathcal{O}(h^2)
\\
&
\overset{(iii)}{=}
\phi_{n,k,j}^{m} + \Delta \tau \left[ \mathcal{L} \phi + \mathcal{J} \phi
\right]_{n,k,j}^{m} + \mathcal{O}(h^2) + \Delta \tau \myerrm{}.
\end{align*}
\end{linenomath}
Here, (i) is due to the definition of $(\pb)_{l,d,j}^{m}$ given in \eqref{eq:phi_term},
together with \eqref{eq:phi_interp_sl_2}-\eqref{eq:phi_interp_sl_3}, and Proposition~\ref{proposition:sum_g} to get $\mathcal{O}(h^2)$. In (ii), we use Lemma~\ref{lemma:ar}[equation~\eqref{eq:error_analysis_2}] on the discrete convolution involving $ \psi_{l,d,j}^{m}$,
noting its definition $\eqref{eq:phi_interp_sl_3}$ and $\myerrm{}\to 0$ as $h \to 0$; and in (iii), we use $\mathcal{L} \phi = \Lg \phi +  \Ls \phi - r\phi$. This concludes the proof.
\end{proof}

\subsection{Consistency}
\label{section:consistency}
While equations \eqref{eq:terminal}, \eqref{eq:omega_max}, \eqref{eq:omega_ot},
\eqref{eq:scheme_timestep_left}, and \eqref{eq:scheme} are convenient for computation,
they are not in a form amendable for analysis.
For purposes of proving consistency, it is more convenient to rewrite them in
a single equation.  To this end, we recall that we  partition $[0, a_j]$ into $[0, a_j\wedge C_r\Delta\tau]$ and $(C_r\Delta\tau, a_j]$, with the convention that $(C_r \Delta \tau, a_j] = \emptyset$ if $a_j \le C_r \Delta \tau$.
Subsequently in this subsection, the aforementioned partition of $[0, a_j]$ is used  to write
\eqref{eq:terminal}, \eqref{eq:omega_max}, \eqref{eq:omega_ot},
\eqref{eq:scheme_timestep_left}, and \eqref{eq:scheme} into an equivalent single equation
convenient for analysis. Unless noted otherwise, in the following, let $j \in \J$ and $m \in \M$ be fixed.

For $(w_n,r_k,a_j, \tau_{m+1}) \in \Omega_{w_{\min}} \cup \Omega_{wa_{\min}}$,
i.e.\ $n \in \Nl$ and $k \in \K$, we define the following operators:
$\mathcal{A}_{n,k,j}^{m+1}
\left(h, v_{n,k,j}^{m+1},
\left\{v_{l,d,p}^{m}\right\}_{\subalign{p \le j}}\right)
\equiv
\mathcal{A}_{n,k,j}^{m+1}\left(\cdot\right)$
and
$\mathcal{B}_{n,k,j}^{m+1}
\left(h, v_{n,k,j}^{m+1},
\left\{v_{l,d,p}^{m}\right\}_{\subalign{p \le j}}
\right) \equiv \mathcal{B}_{n,k,j}^{m+1}\left(\cdot\right)$, where
\EQA
\label{eq:scheme_AB}
\mathcal{A}_{n,k,j}^{m+1}\left(\cdot\right)
&=&
\frac{1}{\Delta \tau}\bigg[ v_{n,k,j}^{m+1}
-
    \sup_{\gamma_{n,k,j}^{m} \in \left[0, a_j \wedge C_r \Delta \tau\right]}
   \left(
   \tilde{v}_{n,k,j}^{m}
   +
    f\left(\gamma_{n,k,j}^{m}\right)
   \right)
+ \Delta \tau (\Ld^h  v )_{n,k,j}^{m+1}
\bigg],
\nonumber
\\
\mathcal{B}_{n,k,j}^{m+1}\left(\cdot\right)
&=&
v_{n,k,j}^{m+1}
-
     \sup_{\gamma_{n,k,j}^{m} \in \left(C_r \Delta \tau, a_j\right]}
    \left(
     \tilde{v}_{n,k,j}^{m}
+           f\left(\gamma_{n,k,j}^{m}\right)
        \right)
 + \Delta \tau (\Ld^h  v )_{n,k,j}^{m+1},
\ENA
where $\tilde{v}_{n,k,j}^{m}$, $n \in \Nl$ and $k \in \K$,
is given in \eqref{eq:vtil_a}, and $f\left(\cdot\right)$
is defined in \eqref{eq:f_gamma_k_dis}.

For $(w_n,r_k,a_j, \tau_{m+1}) \in \Omega_{\myin} \cup \Omega_{a_{\min}}$,
i.e.\ $n \in \N$ and $k \in \K$, we define the following operators:
$\mathcal{C}_{n,k,j}^{m+1}
\left(h, v_{n,k,j}^{m+1},
\left\{v_{l,d,p}^{m}\right\}_{\subalign{p \le j}}
 \right)\equiv \mathcal{C}_{n,k,j}^{m+1}\left(\cdot\right)$
and
$\mathcal{D}_{n,k,j}^{m+1}
\left(h, v_{n,k,j}^{m+1},
\left\{v_{l,d,p}^{m}\right\}_{\subalign{p \le j}}
 \right) \equiv \mathcal{D}_{n,k,j}^{m+1}\left(\cdot\right)$,
where
\begin{linenomath}
\postdisplaypenalty=0
\label{eq:scheme_CD}
\begin{align}
\mathcal{C}_{n,k,j}^{m+1}\left(\cdot \right)
&=
\frac{1}{\Delta \tau}\l[ v_{n,k,j}^{m+1}
-
\Delta w \Delta r \mysum_{l \in \N}^{d \in \K}
\tilde{g}_{n-l, k-d}~\bvlsl^{m+}_{l,d,j}
-
\Delta w \Delta r \mysum_{l \in \Nc}^{d \in \Kc}
\tilde{g}_{n-l, k-d}~
v_{l,d,j}^{m} \r],
\nonumber
\\
\mathcal{D}_{n,k,j}^{m+1}\left(\cdot\right)
&=
v_{n,k,j}^{m+1}
-
\Delta w \Delta r \mysum_{l \in \N}^{d \in \K}
\tilde{g}_{n-l, k-d}
\bvnsl^{m+}_{l,d,j}
-
\Delta w \Delta r \mysum_{l \in \Nc}^{d \in \Kc}
\tilde{g}_{n-l, k-d}~
v_{l,d,j}^{m}.
\end{align}
\end{linenomath}
Here, for $(i) \in  \{(1), (2)\}$,
$\bvisl_{l,d,j}^{m+} = \frac{\mathcal{I}\left\{(v^{\mysup})^{m+} \right\} \left(\bw_l, \br_d, a_j \right)}{1 + \Delta \tau r_d}$, $l \in \N$ and $d \in \K$, are defined in \eqref{eq:semi_lag_results},
and $\mathcal{I}\{(v^{\mysup})^{m+}\}$, a  linear operator discussed in Remark~\ref{eq:intp}.

In order to show local consistency, we split the sub-domains $\Omega_{\myin}$ and $\Omega_{w_{\min}}$
as follows: $\Omega_{\myin} = \Omega^{\mydn}_{\myin} \cup \Omega^{\myup}_{\myin}$ and
$\Omega_{w_{\min}} = \Omega^{\mydn}_{w_{\min}} \cup \Omega^{\myup}_{w_{\min}}$, where
\EQ
\label{eq:new_dom}
\begin{aligned}
\Omega^{\mydn}_{\myin} &= (w_{\min}, w_{\max}) \times
(r_{\min}, r_{\max}) \times
(a_{\min}, C_r \Delta \tau] \times
(0, T],
\\
\Omega^{\myup}_{\myin} &= (w_{\min}, w_{\max}) \times
(r_{\min}, r_{\max}) \times
(C_r \Delta \tau, a_{\max}] \times
(0, T],
\\
\Omega^{\mydn}_{w_{\min}} &= [w^{\dagger}_{\min}, w_{\min}] \times
(r_{\min}, r_{\max}) \times
(a_{\min}, C_r \Delta \tau] \times
(0, T],
\\
\Omega^{\myup}_{w_{\min}} &= [w^{\dagger}_{\min}, w_{\min}] \times
(r_{\min}, r_{\max}) \times
(C_r \Delta \tau, a_{\max}] \times
(0, T].
\end{aligned}
\EN
Using $\mathcal{A}_{n,k,j}^{m+1}\left(\cdot\right)$,
$\mathcal{B}_{n,k,j}^{m+1}\left(\cdot\right)$,
$\mathcal{C}_{n,k,j}^{m+1}\left(\cdot\right)$ and $\mathcal{D}_{n,k,j}^{m+1}\left(\cdot\right)$
defined \eqref{eq:scheme_AB}-\eqref{eq:scheme_CD},
our scheme at  the reference node ${\bf{x}} = (w_n,r_k,a_j, \tau_{m+1})$
can be rewritten in an equivalent form as follows
\EQA
\label{eq:scheme_G}
0=
\mathcal{H}_{n,k,j}^{m+1}
\left(h, v_{n,k,j}^{m+1},
\left\{v_{l,d,p}^{m}\right\}_{\subalign{p \le j}}
  \right)
\equiv
\left\{
\begin{array}{lllllllllllll}
\mathcal{A}_{n,k,j}^{m+1}
\left(\cdot\right)
& \quad{\bf{x}} \in \Omega^{\mydn}_{w_{\min}} \cup \Omega_{wa_{\min}},
\\
\min\left\{ \mathcal{A}_{n,k,j}^{m+1} \left(\cdot\right), \mathcal{B}_{n,k,j}^{m+1} \left(\cdot\right) \right\}
&
\quad {\bf{x}} \in \Omega^{\myup}_{w_{\min}},
\\
\mathcal{C}_{n,k,j}^{m+1}
\left(\cdot\right)
&
\quad {\bf{x}} \in \Omega^{\mydn}_{\myin} \cup \Omega_{a_{\min}},
\\
\min\left\{ \mathcal{C}_{n,k,j}^{m+1}\left(\cdot\right), \mathcal{D}_{n,k,j}^{m+1} \left(\cdot\right) \right\}
&
\quad {\bf{x}} \in \Omega^{\myup}_{\myin},
\\
v_{n,k,j}^{m+1} - e^{-\beta \tau_{m+1}} e^{w_n}
&
\quad {\bf{x}} \in \Omega_{w_{\max}},
\\
v_{n,k,j}^{m+1} - \max(e^{w_{n}}, (1 - \mu)a_j -c)
&
\quad {\bf{x}} \in \Omega_{\tau_0},
\\
v_{n,k,j}^{m} - \z(w_n,r_k,a_j, \tau_{m})
&
\quad {\bf{x}} \in \Omega_{\myot},
\end{array}
\right.
\ENA
where the sub-domains are defined in \eqref{eq:sub_domain_whole} and \eqref{eq:new_dom}.

To demonstrate the consistency in viscosity sense of \eqref{eq:scheme_G}, we need some intermediate results
on local consistency of our scheme. To this end, motivated by the aforementioned partitioning
of $[0, a_j]$, we define operators $F_{{\myin}'}$ and $F_{w'_{\min}}$, respectively
associated with $F_{{\myin}}$ and $F_{w_{\min}}$,  for the case  $0 \le a_j \le C_r \Delta \tau$,
i.e.\  $0 \le a/\Delta \tau \le C_r$, as follows
\EQA
\label{eq:extra}
F_{{\myin}'} \left({ \bf{x}}, v\right) &=&
v_{\tau} - \mathcal{L} v - \mathcal{J}v
- \sup_{\hat{\gamma} \in \left[0, a/\Delta \tau \right]}
\hat{\gamma}\left(1 - {e^{-w}}v_w  - v_a\right) {\bf{1}}_{\{a>0 \}},
\quad
0 \le a/\Delta \tau \le C_r,
\nonumber
\\
F_{w'_{\min}} \left({ \bf{x}}, v\right) &=&
v_{\tau} - \Ld v
- \sup_{\hat{\gamma} \in \left[0, a/\Delta \tau \right]}
\hat{\gamma}\left(1 - v_a\right){\bf{1}}_{\{a>0 \}},
\quad
0 \le a/\Delta \tau \le C_r.
\ENA
%
Below, we state the key supporting lemma related to local consistency of scheme \eqref{eq:scheme_G}.
\begin{lemma} [Local consistency]
\label{lemma:consistency}
Suppose that (i) the discretization parameter $h$ satisfies Assumption~\ref{as:dis_parameter},
(ii)~linear interpolation in \eqref{eq:vtil_a}, \eqref{eq:semi_lag_results}, and \eqref{eq:vtil_b}
is used, and (iii)~$w_{\min}$ satisfies
\EQA
\label{eq:min_e_w}
e^{w_{\min}} - e^{w^{\dagger}_{\min}} \geq C_r \Delta \tau.
\ENA
Then, for any function $\phi\in \G \cap \C{\Oinf}$,
with $\phi_{n,k,j}^{m} = \phi\left({\bf{x}}_{n,k,j}^{m}\r)$
and ${\bf{x}} =(w_n,r_k,a_j, \tau_{m+1})$, and for a sufficiently small $h$,  we have
\EQA
\label{eq:lemma_1}
\mathcal{H}_{n,k,j}^{m+1}
\l(h, \phi_{n,k,j}^{m+1} + \xi,
\{\phi_{l,d,p}^{m}+\xi\}_{p \le j}\r)
=
\left\{
\begin{array}{llllllllllr}
F_{\myin}(\cdot, \cdot)
&\!\!\!\!\!+~ c({\bf{x}})\xi
+ \mathcal{O}(h)
+ \err
& {\bf{x}} \in \Omega^{\myup}_{\myin};
\\
F_{{\myin}'}(\cdot, \cdot)
&\!\!\!\!\!+~ c({\bf{x}})\xi
+ \mathcal{O}(h)
+ \err
& {\bf{x}} \in \Omega^{\mydn}_{\myin};
\\
F_{a_{\min}}(\cdot, \cdot)
&\!\!\!\!\!+~ c({\bf{x}})\xi
+ \mathcal{O}(h)
& {\bf{x}} \in \Omega_{a_{\min}}
\\
F_{w_{\min}}(\cdot, \cdot)
&\!\!\!\!\!+~ c({\bf{x}})\xi
 + \mathcal{O}(h)
& {\bf{x}} \in \Omega^{\myup}_{w_{\min}};
\\
F_{w'_{\min}}(\cdot, \cdot)
&\!\!\!\!\!+~ c({\bf{x}})\xi
+ \mathcal{O}(h)
& {\bf{x}} \in \Omega^{\mydn}_{w_{\min}};
\\
F_{wa_{\min}}(\cdot, \cdot)
&\!\!\!\!\!+~ c({\bf{x}})\xi
+  \mathcal{O}(h)
& {\bf{x}} \in \Omega_{wa_{\min}};
\\
F_{w_{\max}}(\cdot, \cdot)
&\!\!\!\!\!+~ c({\bf{x}})\xi
& {\bf{x}} \in \Omega_{w_{\max}};
\\
F_{\tau_0}(\cdot, \cdot)
&\!\!\!\!\!+~ c({\bf{x}})\xi
& {\bf{x}} \in \Omega_{\tau_0};
\\
F_{\myot}(\cdot, \cdot)
&\!\!\!\!\!+~c({\bf{x}})\xi
& {\bf{x}} \in \Omega_{\myot}.
\end{array}
\right.
\ENA
Here, $\xi$ is a constant and $c(\cdot)$ is a bounded function
satisfying $|c({\bf{x}})| \le \max(|r_{\min}|, r_{\max}, 1)$
for all ${\bf{x}}~\in~\Omega$,
and $\err \to 0 $ as $h \to 0$.
The operators $F_{\myin}\left(\cdot, \cdot\right)$, 
$F_{a_{\min}}\left(\cdot, \cdot\right)$, $F_{w_{\min}}\left(\cdot, \cdot\right)$,
$F_{wa_{\min}}\left(\cdot, \cdot\right)$, $F_{w_{\max}}\left(\cdot, \cdot\right)$
$F_{\tau_0}\left(\cdot, \cdot\right)$, $F_{\myot}\left(\cdot, \cdot\right)$, defined in \eqref{eq:Finn}-\eqref{eq:fother},
as well as $F_{{\myin}'}$ and $F_{w'_{\min}}$ defined in \eqref{eq:extra},
are function of $\left({\bf{x}}, \phi\left({\bf{x}}\right)\right)$.
\end{lemma}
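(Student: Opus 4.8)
The plan is to verify the local consistency equation \eqref{eq:lemma_1} sub-domain by sub-domain, starting from the easy boundary pieces and building toward the interior region where the semi-Lagrangian discretization is active. First I would treat $\Omega_{w_{\max}}$, $\Omega_{\tau_0}$, and $\Omega_{\myot}$: here the scheme is an exact Dirichlet assignment, so substituting $\phi^{m+1}_{n,k,j}+\xi$ directly gives $F_{w_{\max}}$, $F_{\tau_0}$, or $F_{\myot}$ evaluated at $(\mathbf{x},\phi(\mathbf{x}))$ plus the term $c(\mathbf{x})\xi$ with $c(\mathbf{x})=1$, with no $\mathcal{O}(h)$ error at all. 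Next, for $\Omega_{w_{\min}}\cup\Omega_{wa_{\min}}$ (operators $\mathcal{A}$, $\mathcal{B}$), the scheme combines an interior intervention supremum over $\gamma\in[0,a_j]$ with a fully-implicit, positive-coefficient finite-difference step for $\Ld$. Using \eqref{eq:interp_sim} for the interpolation in \eqref{eq:vtil_a}, a Taylor expansion of $\phi$ about the reference node, the standard positive-coefficient-scheme consistency estimate $(\Ld^h\phi)^{m+1}_{n,k,j}=[\Ld\phi]^{m+1}_{n,k,j}+\mathcal{O}(h)$, and $\frac{1}{\Delta\tau}(\phi^{m+1}-\phi^{m})=\phi_\tau+\mathcal{O}(\Delta\tau)$, one recovers $F_{w_{\min}}$ on $\Omega^{\myup}_{w_{\min}}$; on $\Omega^{\mydn}_{w_{\min}}$ the admissible set $[0,a_j\wedge C_r\Delta\tau]$ rescales as $\hat\gamma\in[0,a/\Delta\tau]$ after dividing by $\Delta\tau$ in the definition of $f$, producing $F_{w'_{\min}}$; and condition \eqref{eq:min_e_w} guarantees the interpolation point $\tilde w_n$ stays inside the grid so the $\max(\cdot,e^{w^\dagger_{\min}})$ clipping is inactive. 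The constant $\xi$ passes through linear interpolation by \eqref{eq:interp_xi} and contributes $c(\mathbf{x})\xi$ with $|c(\mathbf{x})|\le\max(|r_{\min}|,r_{\max},1)$ coming from the $-r\phi$ and $v_\tau$ terms.

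The heart of the proof is the interior region $\Omega_{\myin}\cup\Omega_{a_{\min}}$ (operators $\mathcal{C}$, $\mathcal{D}$), and this is where Lemma~\ref{lemma:error_smooth_sl} does the decisive work. I would observe that the first discrete convolution in $\mathcal{C}^{m+1}_{n,k,j}$, namely $\Delta w\Delta r\sum_{l\in\N}\sum_{d\in\K}\tilde g_{n-l,k-d}\,\bvlsl^{m+}_{l,d,j}$, together with the padding sum over $(l,d)\in\Nc\times\Kc$, is exactly a discrete convolution of the form appearing in \eqref{eq:error_smooth_sl}, once one notes that the relevant intermediate values $(v^{(1)})^{m+}_{l,d,j}$ are, after substituting $\phi^m_{l,d,p}+\xi$, given by $\sup_{\gamma\in[0,a_j\wedge C_r\Delta\tau]}(\mathcal{I}\{\phi^m+\xi\}(\tilde w_l,r_d,\tilde a_j)+f(\gamma))$. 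The supremum over the small control set contributes precisely the $\sup_{\hat\gamma\in[0,C_r]}\hat\gamma(1-e^{-w}\phi_w-\phi_a)$ term (or $F_{\myin'}$ when $a_j\le C_r\Delta\tau$), by expanding $\mathcal{I}\{\phi^m\}(\tilde w_l,r_d,\tilde a_j)=\phi(w_l,r_d,a_j,\tau_m)-\gamma(e^{-w_l}\phi_w+\phi_a)+\ldots$ and dividing by $\Delta\tau$; this step introduces the vanishing error $\err\to0$ (it comes from the fact that the $\gamma$-interpolation is only first-order accurate, as in the treatment in \cite{online, chen08a}). Then Lemma~\ref{lemma:error_smooth_sl} converts the convolution against the SL-shifted, $r$-discounted values into $\phi^m_{n,k,j}+\Delta\tau[\mathcal{L}\phi+\mathcal{J}\phi]^m_{n,k,j}+\mathcal{O}(h^2)+\Delta\tau\myerrm{}$; dividing $\mathcal{C}$ by $\Delta\tau$ and combining with $\frac{1}{\Delta\tau}(\phi^{m+1}-\phi^m)=\phi_\tau+\mathcal{O}(h)$ yields $v_\tau-\mathcal{L}\phi-\mathcal{J}\phi-\sup_{\hat\gamma}(\ldots)$, i.e.\ exactly $F_{\myin}$ (or $F_{\myin'}$, or $F_{a_{\min}}$ when $a_j=a_{\min}$, where the control set degenerates to $\{0\}$ and the supremum term vanishes). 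Finally, on $\Omega^{\myup}_{\myin}$ the scheme is $\min\{\mathcal{C},\mathcal{D}\}$ and one must additionally show $\mathcal{D}^{m+1}_{n,k,j}$ is consistent with $\phi-\sup_{\gamma\in[0,a]}\mathcal{M}\phi$: here the convolution is against $\bvnsl^{m+}_{l,d,j}=\sup_{\gamma\in(C_r\Delta\tau,a_j]}(\mathcal{I}\{\phi^m\}(\tilde w_l,r_d,\tilde a_j)+f(\gamma))$, and since this is a genuine (non-local, non-smooth) quantity, I would invoke Lemma~\ref{lemma:ar}\eqref{eq:error_analysis_2} (the non-smooth branch) rather than the smooth one, obtaining $\mathcal{D}=\phi^{m+1}_{n,k,j}-\sup_{\gamma}\mathcal{M}\phi^{m+1}_{n,k,j}+\mathcal{O}(h)+\err$, matching the second argument of the $\min$ in $F_{\myin}$.

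The main obstacle I anticipate is bookkeeping the error terms in the interior case so that everything genuinely collapses to $\mathcal{O}(h)$ plus a single $\err\to0$ — in particular, reconciling the $\Delta\tau\myerrm{}$ from Lemma~\ref{lemma:error_smooth_sl}, the $\err$ from the first-order $\gamma$-interpolation inside the supremum, and the $\mathcal{O}(h^2)$ from projection/interpolation, after the division by $\Delta\tau=C_4 h$ (so $\mathcal{O}(h^2)/\Delta\tau=\mathcal{O}(h)$ and $\Delta\tau\myerrm{}/\Delta\tau=\myerrm{}\to0$). A secondary subtlety is verifying that the supremum over the rescaled control set $[0,a/\Delta\tau]$ in $F_{\myin'}$ is the correct consistent limit on $\Omega^{\mydn}_{\myin}$ — this requires the uniform continuity of the interpolated value in $\gamma$ (Remark~\ref{rm:sup_exist}) together with the observation that $f(\gamma)=\gamma$ on $[0,C_r\Delta\tau]$, so no fixed-cost $-c$ term appears and the supremum is attained. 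I would also note explicitly that the $\min\{\cdot,\cdot\}$ structure is preserved under the consistency estimates because each branch is consistent with the corresponding branch of $F_{\myin}$ (resp.\ $F_{w_{\min}}$), and $\min$ of two consistent approximations is consistent with the $\min$ of the limits, with the error being the max of the individual errors.
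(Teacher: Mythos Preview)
Your overall architecture is right and matches the paper's: trivial Dirichlet pieces first, then the finite-difference block $\Omega_{w_{\min}}\cup\Omega_{wa_{\min}}$, then the core interior case via the discrete convolutions, with $\mathcal{D}$ handled by the non-smooth branch of Lemma~\ref{lemma:ar}. However, there is a gap in your treatment of $\mathcal{C}^{m+1}_{n,k,j}$. You write that the convolution in $\mathcal{C}$ ``is exactly a discrete convolution of the form appearing in \eqref{eq:error_smooth_sl}'', but it is not: Lemma~\ref{lemma:error_smooth_sl} is stated for the SL-transformed \emph{pure} $\phi$-values $(\pb)^{m}_{l,d,j}$ of \eqref{eq:phi_term}, whereas $\mathcal{C}$ convolves against $(\plsl)^{m+}_{l,d,j}$, which are SL-transformed \emph{intervention} values carrying the supremum and the constant $\xi$. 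The paper's proof therefore inserts an explicit decomposition at the level of the SL-transformed data,
\[
(\plsl)^{m+}_{l,d,j} \;=\; (\pb)^{m}_{l,d,j} \;+\; (\vpsl)^{m}_{l,d,j} \;+\; \mathcal{O}(h^2),
\]
where $(\vpsl)^{m}_{l,d,j}$ packages $\xi$ together with the SL-interpolated, $r$-discounted supremum $\Delta\tau\,\mathcal{I}\{\varphi^{m}\}(\bw_l,\br_d,a_j)(1+\Delta\tau r_d)^{-1}$, with $\varphi(\mathbf{x}')=\sup_{\hat\gamma\in[0,C_r]}\hat\gamma(1-e^{-w'}\phi_w-\phi_a)$ on $\Omega^{\myup}_{\myin}$ and zero elsewhere. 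Lemma~\ref{lemma:error_smooth_sl} is then applied to the first summand and Lemma~\ref{lemma:ar}\eqref{eq:error_analysis_2} (non-smooth branch) to the second; the latter is what actually produces the $\err$ term, not ``first-order $\gamma$-interpolation'' as you suggest.

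A second, related omission is that after Lemma~\ref{lemma:ar} returns $(\vpsl)^{m}_{n,k,j}+\mathcal{O}(h^2)+\myerrm{}$, one still has an SL-interpolated value $\mathcal{I}\{\varphi^{m}\}(\bw_n,\br_k,a_j)$ at the departure point rather than $\varphi^{m}_{n,k,j}$ at the reference node; the paper closes this with a direct $\mathcal{O}(h)$ estimate on $|\varphi^{m}_{n,k,j}-\varphi^{m}_{n',k',j}|$ exploiting boundedness of $\hat\gamma\in[0,C_r]$ and smoothness of $\phi$ (the supremum of differences dominates the difference of suprema). The same two-layer structure (Lemma~\ref{lemma:ar} for the convolution, then an SL-interpolation estimate for the non-smooth $\psi=\sup_{\gamma}\mathcal{M}(\gamma)\phi+\mu C_r\Delta\tau$) is needed for $\mathcal{D}$, together with the observation that enlarging the control set from $(C_r\Delta\tau,a_j]$ to $[0,a_j]$ costs only $\mathcal{O}(h)$ by uniform continuity of $\gamma\mapsto\mathcal{M}(\gamma)\phi$. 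Once you make these decompositions explicit, your bookkeeping of the errors after division by $\Delta\tau$ is correct.
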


\begin{proof}[Proof of Lemma~\ref{lemma:consistency}]
Since  $\phi \in \C{\Oinf}$ and the computational domain $\Omega$ is bounded, $\phi$ has continuous and bounded  derivatives of up to second-order in $\Omega$.
Given the smooth test function $\phi$, with $j\in \J$ and $m \in \M$ being fixed
and $(i) \in \{(1), (2)\}$, we define discrete values $(\phi^{\mysup})_{l, d, j}^{m+}$,
$l \in \ND$ and $d \in \KD$, as follows
\begin{linenomath}
\postdisplaypenalty=0
\begin{align}
l \in \N \text{ and } d \in \K: & \quad (\phi^{\myloc})_{l, d, j}^{m+} = \sup_{\gamma_{l, d, j}^m \in [0, C_r \Delta \tau]} \pt_{l, d, j}^{m} + f(\gamma_{l, d, j}^m),~
(\phi^{\mynlc})_{l, d, j}^{m+} = \sup_{\gamma_{l, d, j}^m \in (C_r \Delta \tau, a_j]} \pt_{l, d, j}^{m} + f(\gamma_{l, d, j}^m),
\nonumber
\\
l \in \Nc \text{ or  } d \in \Kc: & \quad (\phi^{\myloc})_{l, d, j}^{m+} = (\phi^{\mynlc})_{l, d, j}^{m+} = \phi_{l, d, j}^m + \xi,
\label{eq:phi_tilde}
\end{align}
\end{linenomath}
where $\pt_{l, d, j}^{m}$ is given by
\EQ
\label{eq:phi_tilde_def}
\pt_{l, d, j}^{m} = \mathcal{I}\{\phi^{m} + \xi \}(\tilde{w}_l, r_d, \tilde{a}_j),
~\tilde{w}_l = \ln(\max(e^{w_l} - \gamma_{l, d, j}^{m}, e^{w^{\dagger}_{\min}})),
~ \tilde{a}_j = a_j - \gamma_{l, d, j}^{m}.
\EN
Given the discrete data $\left\{\left(\left(w_l, r_d, a_j \right), (\phi^{\mysup})_{l, d, j}^{m+}\right)\right\}$,
$(i) \in \{(1), (2)\}$, where $(\phi^{\mysup})_{l, d, j}^{m+}$, is given in \eqref{eq:phi_tilde}-\eqref{eq:phi_tilde_def}, we define associated discrete values $(\pisl)_{l,d,j}^{m}$ as follows
\begin{linenomath}
\begin{subequations}
\label{eq:psi_sl_plus}
\begin{empheq}[left={(\pisl)^{m+}_{l,d,j} = \empheqlbrace}]{alignat=3}
&\mathcal{I}\left\{(\phi^{\mysup})^{m+} \right\}(\bw_l, \br_d, a_j)
(1 + \Delta \tau r_d)^{-1} && \quad
l \in \N \text{ and } d \in \K
\label{eq:psi_sl_plus_aa}
\\
&\phi_{l, d, j}^m + \xi&& \quad \text{otherwise},
\label{eq:psi_sl_plus_bb}
\end{empheq}
\end{subequations}
\end{linenomath}
where the departure point $(\bw_l, \br_d)$ of \blue{an} SL trajectory are defined in \eqref{eq:semi_lag_results}.

We now show that the first equation of \eqref{eq:lemma_1} holds, that is, for ${\bf{x}} =(w_n,r_k,a_j, \tau_{m+1})$,
\EQAS
&&\mathcal{H}_{n,k,j}^{m+1}(\cdot) = \min\left\{ \mathcal{C}_{n,k,j}^{m+1} \left(\cdot\right), \mathcal{D}_{n,k,j}^{m+1} \left(\cdot\right) \right\}
= F_{\myin}\left({\bf{x}}, \phi\left({\bf{x}}\right)\right)
+ c\left({\bf{x}}\right)\xi
+ \mathcal{O}(h)
+ \errm
\\
&&\qquad\qquad
\text{if}~
w_{\min} < w_n < w_{\max},~ r_{\min} < r_k < r_{\max},
C_r \Delta \tau < a_j  \le a_J,~
0 < \tau_{m+1} \le T,
\ENAS
where operators $\mathcal{C}_{n,k,j}^{m+1}(\cdot)$ and $\mathcal{D}_{n,k,j}^{m+1}(\cdot)$ are defined in \eqref{eq:scheme_CD}.  First, we consider operator $\mathcal{C}_{n,k,j}^{m+1}(\cdot)$
which can be written as
\begin{linenomath}
\postdisplaypenalty=0
\begin{align}
\label{eq:C_operator}
\mathcal{C}_{n,k,j}^{m+1}\left(\cdot \right)
&=
\frac{1}{\Delta \tau}\bigg[ \phi_{n,k,j}^{m+1} + \xi
-
\Delta w \Delta r \mysum_{l \in \ND}^{d \in \KD}
\tilde{g}_{n-l, k-d}~(\plsl)^{m+}_{l,d,j} \bigg],
\end{align}
\end{linenomath}
where the discrete values $(\plsl)^{m+}_{l,d,j}$ are defined in \eqref{eq:psi_sl_plus} with $(i) = (1)$.

The key challenge in \eqref{eq:C_operator} is the discrete convolution $\mysum^{}_{}\tilde{g}~(\plsl)^{m+}_{l,d,j}$.
Our approach is to decompose it into the sum of two simpler discrete convolutions
of the forms $\mysum^{}_{} \tilde{g}~ (\pb)_{l,d,j}^{m}$ and $\mysum^{}_{} \tilde{g}~ (\vpsl)_{l,d,j}^{m}$
for which Lemmas \ref{lemma:error_smooth_sl} and \ref{lemma:ar}  are respectively applicable.
Here, $(\pb)_{l,d,j}^{m}$ is given in \eqref{eq:phi_term} and $(\vpsl)_{l,d,j}^{m}$ is to be defined subsequently.
To this end,  we  will start with the interpolated values $\pt_{l, d, j}^{m}$
in \eqref{eq:phi_tilde_def}.

For operator $\mathcal{C}_{n,k,j}^{m+1}(\cdot)$, the admissible control set is $\gamma_{l, d, j}^{m} \in [0, C_r\Delta \tau]$. In this case,  condition \eqref{eq:min_e_w} implies that, for $w_l \in (w_{\min}, w_{\max})$, $e^{w_l} - \gamma_{l, d, j}^{m} > e^{w^{\dagger}_{\min}}$ for all $\gamma_{l, d, j}^{m} \in [0, C_r\Delta \tau]$. Therefore,   we can eliminate the $\max(\cdot)$ operator in the linear interpolation operator
in \eqref{eq:phi_tilde_def} when $\gamma_{l, d, j}^{m} \in [0, C_r\Delta \tau]$.
Consequently, when $\gamma_{l, d, j}^{m} \in [0, C_r\Delta \tau]$, using \eqref{eq:phi_tilde} and
recalling the cash flow function $f(\cdot)$ defined in \eqref{eq:f_gamma_k_dis}, we have
\begin{linenomath}
\postdisplaypenalty=0
\begin{align}
\label{eq:sum_sum1}
\tilde{\phi}_{l, d, j}^{m} + f\l(\gamma_{l, d, j}^{m}\r)
&\overset{\text{(i)}}{=}
\phi\l(\ln\left(e^{w_l} - {\gamma}_{l, d,  j}^{m}\right),
            a_j - {\gamma}_{l, d, j}^{m}, \tau_{m}\r) + \xi
+
\mathcal{O}\l(h^2\r)
+ \gamma_{l, d, j}^{m}
\nonumber
\\
&\overset{\text{(ii)}}{=}
\phi_{l, d, j}^{m}
+
\xi
+ \gamma_{l, d, j}^{m}  \l(1- e^{-w_l} (\phi_w)_{l, d, j}^{m}
- (\phi_a)_{l, d, j}^{m}\r)
+ \mathcal{O}\l(h^2\r).
\end{align}
\end{linenomath}
Here, (i) follows from Remark~\ref{eq:intp}[eqns~\eqref{eq:interp_xi} and \eqref{eq:interp_sim}],
and $f\l(\gamma_{l, d, j}^{m}\r) = \gamma_{l, d, j}^{m}$ as defined in
\eqref{eq:f_gamma_k_dis};
and in~(ii), we  apply a Taylor series to expand
$\phi\l(\ln\left(e^{w_l} - {\gamma}_{l, d, j}^{m}\right), r_d,
            a_j - {\gamma}_{l, d, j}^{m}, \tau_{m}\r)$
about $(w_l, r_d, a_j, \tau_{m})$, noting
$\gamma_{l, d, j}^{m} = \mathcal{O}(\Delta \tau)$.
Therefore, using \eqref{eq:sum_sum1}, $\sup_{\gamma_{l, d, j}^{m} \in [0, C_r \Delta \tau]}
\tilde{\phi}_{l, d, j}^{m} + f(\gamma_{l, d, j}^{m}) = \ldots$
\begin{linenomath}
\postdisplaypenalty=0
\begin{align}
\label{eq:sum_sum}
\ldots &=
\phi_{l, d, j}^{m}
+
\xi
+  \mathcal{O}\l(h^2\r)
+
\sup_{\gamma_{l, d, j}^{m} \in [0, C_r \Delta \tau]}
\gamma_{l, d, j}^{m}  (1- e^{-w_l} (\phi_w)_{l, d, j}^{m}
- (\phi_a)_{l, d, j}^{m})
\nonumber
\\
&\overset{\text{(i)}}{=}\phi_{l, d, j}^{m}
+
\xi
+  \mathcal{O}(h^2)
+
\Delta \tau \sup_{\hat{\gamma}_{l, d, j}^{m} \in [0, C_r]}
\hat{\gamma}_{l, d, j}^{m}  (1- e^{-w_l} (\phi_w)_{l, d, j}^{m}
- (\phi_a)_{l, d, j}^{m}).
\end{align}
\end{linenomath}
Here, in (i) of \eqref{eq:sum_sum},  since the control $\gamma_{l, d, j}^{m}$ can be factored
out completely from the objective function
$\gamma_{l, d, j}^{m}  (1- e^{-w_l} (\phi_w)_{l, d, j}^{m}
- (\phi_a)_{l, d, j}^{m})$,  we define
a new control variable $\hat{\gamma}_{l, d, j}^{m} =
\gamma_{l, d, j}^{m}/\Delta \tau$ where $\hat{\gamma}_{l, d, j}^{m} \in [0, C_r]$.
We also note that, as a result of this change of control variable,
there is a factor of $\Delta \tau$ in front of the term
$\sup_{\hat{\gamma}_{l, d, j}^{m} \in [0, C_r]}(\cdot)$ in (i) of \eqref{eq:sum_sum}.

For subsequent use, letting ${\bf{x'}} = (w', r', a', \tau') \in \Oinf$,
we define a function $\varphi\l({\bf{x'}}\r)$ as follows
\EQA
\label{eq:phiprime}
\varphi\l({\bf{x'}}\r)
&=&
\l\{
\begin{array}{ll}
\ds \sup_{\hat{\gamma} \in [0, C_r]} \varphi'(\hat{\gamma}, {\bf{x'}}),  &
w_{\min} < w' < w_{\max},~r_{\min} < r' < r_{\max},
\\
~~~\text{where } \varphi'(\hat{\gamma}, {\bf{x'}}) = \hat{\gamma}(1 - e^{-w} \phi_w({\bf{x'}})
- \phi_a({\bf{x'}})) & C_r \Delta \tau < a'  \le a_J,~0 \le \tau' < T,
\\
0 & {\text{otherwise}}.
\end{array}
\right.
\ENA
Using \eqref{eq:sum_sum}-\eqref{eq:phiprime}, and recalling from  \eqref{eq:phi_tilde} that
$(\phi^{\myloc})_{l, d, j}^{m+} = \sup_{\gamma_{l, d, j}^{m} \in [0, C_r \Delta \tau]}
\tilde{\phi}_{l, d, j}^{m} + f(\gamma_{l, d, j}^{m})$, we have
\EQA
\label{eq:sum_sum_2}
(\phi^{\myloc})_{l, d, j}^{m+}
=
\phi_{l, d, j}^{m} + \xi + \Delta \tau  \varphi_{l, d, j}^{m}
+  \mathcal{O}\l(h^2\r), \quad  l \in \N, ~d \in \K.
\ENA
The decomposition formula \eqref{eq:sum_sum_2} allows us to write $(\plsl)^{m+}_{l,d,j}$,
defined in \eqref{eq:psi_sl_plus}, as follows
\EQ
\label{eq:sum_sum_2_dec}
(\plsl)^{m+}_{l,d,j} = (\pb)_{l,d,j}^{m} + (\vpsl)_{l,d,j}^{m} + \mathcal{O}\l(h^2\r),
\quad l \in \ND,~d \in \KD,
\EN
where $(\pb)_{l,d,q}^{m}$ is given in \eqref{eq:phi_term} and $(\vpsl)_{l,d,q}^{m}$ is given by
\begin{linenomath}
\begin{subequations}
\label{eq:psi_sl}
\begin{empheq}[left={(\vpsl)_{l,d,q}^{m} = \empheqlbrace}]{alignat=3}
&(\xi + \Delta \tau \mathcal{I} \{ \varphi^{m} \} (\bw_l, \br_d, a_j))
(1 + \Delta \tau r_d)^{-1} && \quad
l \in \N \text{ and } d \in \K,
\label{eq:psi_sl_a}
\\
&\xi&& \quad {\text{otherwise}},
\label{eq:psi_sl_b}
\end{empheq}
\end{subequations}
\end{linenomath}
where $\varphi$ is defined in \eqref{eq:phiprime}.
Using \eqref{eq:sum_sum_2_dec}-\eqref{eq:psi_sl}, we rewrite operator $\mathcal{C}_{n,k,j}^{m+1}(\cdot)$, previously given in \eqref{eq:C_operator}, into a convenient form below
\begin{linenomath}
\postdisplaypenalty=0
\begin{align}
\mathcal{C}_{n,k,j}^{m+1}(\cdot) &= \frac{1}{\Delta \tau}\bigg[ \phi_{n, k, j}^{m+1} + \xi
-
\Delta w \Delta r \mysum_{l \in \ND}^{d \in \KD} \tilde{g}_{n-l, k-d}
\l((\pb)_{l,d,j}^{m}
+  (\vpsl)_{l,d,j}^{m} + \mathcal{O}\l(h^2\r)\r)
\bigg].
\label{eq:term_cc_new}
\end{align}
\end{linenomath}
From here,  respectively applying Lemma~\ref{lemma:error_smooth_sl} and Lemma~\ref{lemma:ar}[equation~\eqref{eq:error_analysis_2}]
on discrete convolutions involving $(\pb)_{l,d,j}^{m}$ and $(\vpsl)_{l,d,j}^{m}$
gives
\begin{linenomath}
\postdisplaypenalty=0
\begin{align}
\Delta w \Delta r
\mysum^{d \in \KD}_{l \in \ND}
\tilde{g}_{n-l, k-d}~ (\pb)_{l,d,j}^{m} &=
\phi_{n,k,j}^{m}+ \Delta \tau \left[ \mathcal{L} \phi + \mathcal{J} \phi
\right]_{n,k,j}^{m} + \mathcal{O}( h^2 ) + \Delta \tau \myerrm{\phi},
\label{eq:term_c_1}
\\
\Delta w \Delta r
\mysum^{d \in \KD}_{l \in \ND}
\tilde{g}_{n-l, k-d}~ (\vpsl)_{l,d,j}^{m}
&= (\vpsl)_{n,k,j}^{m}  + \mathcal{O}( h^2 ) + \Delta \tau \myerrm{\varphi},
\label{eq:term_c_2}
\end{align}
\end{linenomath}
where $\myerrm{\phi}, \myerrm{\varphi}\to 0$ as $h\to 0$.

\noindent We now investigate the rhs of \eqref{eq:term_c_2}. By the definition of $(\vpsl)_{n,k,j}^{m}$ in \eqref{eq:psi_sl}, and since linear interpolation is used, together with \eqref{eq:simple},
we can further write the term  $(\vpsl)_{n,k,j}^{m}$ for the case \eqref{eq:psi_sl_a} as
\begin{linenomath}
\postdisplaypenalty=0
\begin{align}
(\xi + \Delta \tau \mathcal{I} \{ \varphi^{m} \} (\bw_n, \br_k, a_j)) (1 + \Delta \tau r_k)^{-1}
&= (\xi + \Delta \tau \mathcal{I} \{ \varphi^{m} \} (\bw_n, \br_k, a_j))
(1 - \Delta \tau r_k) + \mathcal{O}(h^2)
\nonumber
\\
&= \xi + \Delta \tau \mathcal{I} \{ \varphi^{m} \} (\bw_n, \br_k, a_j)
- \Delta \tau \xi r_k + \mathcal{O}(h^2).
\label{eq:term_xi_interp}
\end{align}
\end{linenomath}
Suppose that $w_{n'} \le \bw_n \le w_{n'+1}$ and $r_{k'} \le \br_k \le r_{k'+1}$.
Then, $\mathcal{I} \{ \varphi^{m} \} (\bw_n, \br_k, a_j)$ can be written into
\begin{linenomath}
\postdisplaypenalty=0
\begin{align}
\mathcal{I} \left\{ \varphi^{m} \right\} (\bw_n, \br_k, a_j)
&\overset{(i)}{=}
 x_r (x_w\varphi_{n', k', j}^{m} + (1-x_w)\varphi_{n'+1, k', j}^{m})
+ (1-x_r) (x_w\varphi_{n', k'+1, j}^{m} + (1-x_w)\varphi_{n'+1, k'+1, j}^{m}),
\nonumber
\\
&\overset{(ii)}{=}
\bigg[\sup_{\hat{\gamma} \in [0, C_r]} \hat{\gamma}(1 - e^{-w} \phi_w - \phi_a)\bigg]_{n,k,j}^{m}
 + \mathcal{O}(h).
\label{eq:varphi_interp_sl}
\end{align}
\end{linenomath}
Here, in (i), $0\le x_r \le 1$ and $0\le x_w \le 1$ are linear interpolation weights.
For (ii), we replace $\{\varphi_{n', k', j}^{m}, \ldots, \varphi_{n'+1, k'+1, j}^{m}\}$
by $\varphi_{n, k, j}^{m}$, resulting in an overall error of size $\mathcal{O}(h)$.
Specifically, as an example, replacing  $\varphi_{n', k', j}^{m}$ by $\varphi_{n, k, j}^{m}$
gives rise to an error bounded as follows
\begin{linenomath}
\postdisplaypenalty=0
\begin{align}
|\varphi_{n, k, j}^{m} - \varphi_{n', k', j}^{m}|
&\le\sup_{\hat{\gamma} \in [0, C_r]} \hat{\gamma}| e^{-w_n} (\phi_w)_{n, k, j}^m - e^{-w_{n'}} (\phi_w)_{n', k', j}^m
+ (\phi_a)_{n', k', j}^m) - (\phi_a)_{n, k, j}^m| = \mathcal{O}(h),
\label{eq:varphi_resl}
\end{align}
\end{linenomath}
due to smooth test function $\phi$ and boundedness of $\hat{\gamma} \in [0, C_r]$,
independently of $h$.

Substituting \eqref{eq:term_c_1}-\eqref{eq:term_c_2} and \eqref{eq:varphi_interp_sl} into \eqref{eq:term_cc_new},
and simplifying gives $\mathcal{C}_{n,k,j}^m(\cdot) = \ldots $
\begin{linenomath}
\postdisplaypenalty=0
\begin{align}
\ldots &=
\frac{\phi_{n, k, j}^{m+1} -  \phi_{n, k, j}^{m}}{\Delta \tau}
- \left[ \mathcal{L} \phi + \mathcal{J} \phi
+\sup_{\hat{\gamma} \in [0, C_r]} \hat{\gamma}(1 - e^{-w} \phi_w - \phi_a)
\right]_{n,k,j}^{m}
+ \xi r_k
+  \myerrm{}
+ \mathcal{O}(h)
\nonumber
\\
&\overset{(i)}{=}
\l[\phi_{\tau} - \mathcal{L} \phi - \mathcal{J} \phi - \sup_{\hat{\gamma} \in [0, C_r]} \hat{\gamma}\l(1 - e^{-w} \phi_w - \phi_a\r)\r]_{n,k,j}^{m+1}
+\xi r_k
+  \myerrm{} + \mathcal{O}(h).
\end{align}
\end{linenomath}
Here, in (i),  $\myerrm{}\to 0$ as $h \to 0$, and we use
\EQAS
(\phi_{\tau})_{n,k, j}^{m} = (\phi_{\tau})_{n,k, j}^{m+1} + \mathcal{O}\l(h\r),~ (\phi_w)_{n, k, j}^{m} = (\phi_w)_{n, k, j}^{m+1} + \mathcal{O}\l(h\r),~ (\phi_a)_{n,k,j}^{m} = (\phi_a)_{n,k,j}^{m+1} + \mathcal{O}\l(h\r).
\ENAS
This step results in an $\mathcal{O}\l(h\r)$ term inside $\sup_{\hat{\gamma}}\l(\cdot\r)$,
which can be moved out of the $\sup_{\hat{\gamma}}\l(\cdot\r)$, because it
has the form $C(\hat{\gamma})h$, where $C(\hat{\gamma})$ is bounded independently of $h$,
due to boundedness of $\hat{\gamma}\in [0, C_r]$ independently of $h$.

\medskip
\noindent We now consider operator $\mathcal{D}_{n, k, j}^{m+1}(\cdot)$ which can be written as
\begin{linenomath}
\postdisplaypenalty=0
\begin{align}
\label{eq:D_operator}
\mathcal{D}_{n,k,j}^{m+1}\left(\cdot \right)
&=
\phi_{n,k,j}^{m+1} + \xi
-
\Delta w \Delta r \mysum_{l \in \ND}^{d \in \KD} \tilde{g}_{n-l, k-d}~(\pnsl)^{m+}_{l,d,j},
\end{align}
\end{linenomath}
where the discrete values $(\pnsl)^{m+}_{l,d,j}$ are defined in \eqref{eq:psi_sl_plus} with $(i) = (2)$. Adopting a similar approach as the one utilized for  $\mathcal{C}_{n, k, j}^{m+1}(\cdot)$,
we aim to decompose $\mysum^{}_{}\tilde{g}~(\pnsl)^{m+}_{l,d,j}$ into
$\mysum^{}_{} \tilde{g}~ (\psisl)_{l,d,j}^{m}$
for which Lemma~\ref{lemma:error_smooth_sl} is applicable.
Here, $(\psisl)_{l,d,j}^{m}$ is to be defined subsequently.

We first start from the interpolated value $\pt_{l, d, j}^{m}$ in \eqref{eq:phi_tilde}.
In this case, since $\gamma_{l, d, j}^{m} \in (C_r\Delta \tau, a_j]$, we cannot eliminate the $\max(\cdot)$ operator in $\tilde{w}_l$
of the linear interpolation in \eqref{eq:phi_tilde}. Therefore,
as noted in Remark~\ref{eq:intp}[\eqref{eq:interp_xi}-\eqref{eq:interp_sim}],
for $\gamma \in (C_r\Delta \tau, a_j]$, we have $\sup_{\gamma_{l, d, j}^{m}\in (C_r\Delta \tau, a_j]} \pt_{l, d, j}^{m} + f(\gamma_{l, d, j}^{m}) = \ldots$
\begin{linenomath}
\postdisplaypenalty=0
\EQ
\label{eq:Dphi}
\ldots =
\sup_{\gamma_{l, d, j}^{m}\in (C_r\Delta \tau, a_j]}
(\phi(\tilde{w}_l, r_d,\tilde{a}_j, \tau_m)
+ \gamma_{l, d, j}^{m} (1-\mu))
+ \xi
+ \mu C_r \Delta \tau - c
+ \mathcal{O}(h^2).
\EN
\end{linenomath}
Here, $(\tilde{w}_l, \tilde{a}_j)$ is given in \eqref{eq:phi_tilde},
and $f(\gamma)$ is replaced by $\gamma(1 - \mu) + \mu C_r\Delta \tau - c$, as per \eqref{eq:f_gamma_k_dis}
for $\gamma \in (C_r\Delta \tau, a_j]$.

Recalling operator $\mathcal{M}(\cdot)$ defined in \eqref{eq:Operator_M_b},
we define a function $\psi\l({\bf{x'}}\r)$ as follows
\begin{linenomath}
\begin{subequations}\label{eq:phidoubleprime}
\begin{empheq}[left={\psi\l({\bf{x'}}\r) = \empheqlbrace}]{alignat=3}
&\sup_{\gamma \in [0, a']} \psi'(\gamma, {\bf{x'}}) &&&& \quad
w_{\min} < w' < w_{\max},~ r_{\min} < r' < r_{\max},
\label{eq:phidoubleprime_a}
\\
&\quad \text{where } \psi'(\gamma, {\bf{x'}}) = \mathcal{M} (\gamma) \phi ({\bf{x'}}) + \mu C_r \Delta \tau &&&&\quad C_r \Delta \tau < a'  \le a_J,~
0 \le \tau' < T,
\nonumber
\\
&\phi ({\bf{x'}}) &&&& \quad {\text{otherwise}}.
\label{eq:phidoubleprime_b}
\end{empheq}
\end{subequations}
\end{linenomath}
We note that in \eqref{eq:phidoubleprime_a}, the admissible control set is
$\gamma \in [0, a']$. It is straightforward to show that, for
a fixed ${\bf{x'}} \in \Omega$ satisfies \eqref{eq:phidoubleprime_a},
function $\psi'\l(\gamma; {\bf{x'}}\r)$ defined in \eqref{eq:phidoubleprime_a}
is (uniformly) continuous in $\gamma \in [0, a']$. Hence,
for the case \eqref{eq:phidoubleprime_a}
\EQA
\label{eq:imed}
\sup_{\gamma \in \l(C_r \Delta \tau, a' \r] }
\psi' \l(\gamma, {\bf{x'}}\r)
-
\sup_{\gamma \in \l(0, a' \r] }
\psi' \l(\gamma, {\bf{x'}}\r)
=
\max_{\gamma \in \l[C_r \Delta \tau, a' \r] }
\psi' \l(\gamma, {\bf{x'}}\r)
-
\max_{\gamma \in \l[0, a' \r] }
\psi' \l(\gamma, {\bf{x'}}\r)
=
\mathcal{O}\l(h\r),
\ENA
since the difference of the optimal values of $\gamma$ for the two $\max(\cdot)$ expressions
is bounded by $C_r\Delta \tau~=~\mathcal{O}(h)$.

Using \eqref{eq:phidoubleprime_a} and \eqref{eq:imed}, and
recalling from  \eqref{eq:phi_tilde} that
$(\phi^{\mynlc})_{l, d, j}^{m+} = \sup_{\gamma_{l, d, j}^{m} \in (C_r \Delta \tau, a_j]}
\tilde{\phi}_{l, d, j}^{m} + f(\gamma_{l, d, j}^{m})$, we have
\EQA
\label{eq:sum_sum_2_D}
(\phi^{\mynlc})_{l, d, j}^{m+}
=
\xi + (\psi)_{l,d,j}^{m} + \mathcal{O}(h), \quad  l \in \N, ~d \in \K,
\ENA
where $\psi$ is given in \eqref{eq:phidoubleprime_a}.
Equation \eqref{eq:sum_sum_2_D} allows us to write $(\pnsl)^{m+}_{l,d,j}$,
defined in \eqref{eq:psi_sl_plus}, as follows
\EQ
\label{eq:sum_sum_2_dec_d}
(\pnsl)^{m+}_{l,d,j} =  (\psisl)_{l,d,j}^{m} + \mathcal{O}\l(h\r),
\quad l \in \ND \text{ and } d \in \KD,
\EN
where
\EQA
\label{eq:phi_term_D}
\begin{array}{l}
(\psisl)_{l,d,q}^{m}
= \left\{
\begin{array}{lll}
(\xi + \mathcal{I} \{ (\psi)^{m}\} (\bw_l, \br_d, a_q)) (1 + \Delta \tau r_d)^{-1}
& l \in \N \text{ and } d \in \K,
\\
\phi_{l,d,q}^{m} + \xi
& \text{otherwise},
\end{array}
\right.
\end{array}
\ENA
where $\psi$ is defined in \eqref{eq:phidoubleprime}.
Using \eqref{eq:sum_sum_2_dec_d},
we rewrite operator $\mathcal{D}_{n,k,j}^{m+1}(\cdot)$, previously given in \eqref{eq:D_operator},
into a convenient form below
\begin{linenomath}
\postdisplaypenalty=0
\begin{align}
\mathcal{D}_{n,k,j}^{m+1}(\cdot) &= \phi_{n, k, j}^{m+1} + \xi
-
\Delta w \Delta r \mysum_{l \in \ND}^{d \in \KD} \tilde{g}_{n-l, k-d}
(\psisl)_{l,d,j}^{m} + \mathcal{O}\l(h\r).
\label{eq:D_oper}
\end{align}
\end{linenomath}
Then, for the above discrete convolution,
applying Lemma~\ref{lemma:ar}[eqn~\eqref{eq:error_analysis_2}],
noting \eqref{eq:simple}, gives
\begin{linenomath}
\postdisplaypenalty=0
\begin{align}
\Delta w \Delta r
\mysum^{d \in \KD}_{l \in \ND}
\tilde{g}_{n-l, k-d}~ (\psisl)_{l,d,j}^{m} &=
(\psisl)_{n,d,j}^{m}  +  \myerrm{\psi} + \mathcal{O}(h),
\nonumber
\\
&= \xi + \mathcal{I} \{ (\psi)^{m}\} (\bw_n, \br_k, a_j) + \myerrm{\psi} + \mathcal{O}(h),
\label{eq:term_c_1_varphiprime}
\end{align}
\end{linenomath}
where we used the  definition of $(\psisl)_{n,k,j}^{m}$ in \eqref{eq:phi_term_D},
and $\myerrm{\psi}\to 0$ as $h\to 0$.

For the term $\mathcal{I} \{ (\psi)^{m}\} (\bw_n, \br_k, a_j)$ in
\eqref{eq:term_c_1_varphiprime}, following the same arguments as those for
\eqref{eq:varphi_interp_sl}-\eqref{eq:varphi_resl}, noting the definition of $\psi$ in \eqref{eq:phidoubleprime},
we obtain
\begin{linenomath}
\postdisplaypenalty=0
\begin{align}
\mathcal{I} \{ (\psi)^{m}\} (\bw_n, \br_k, a_j)
&=
\sup_{\gamma\in [0, a_j]} \mathcal{M} (\gamma) \phi ({\bf{x}}_{n, k, j}^m) + \mu C_r \Delta \tau
+ \mathcal{O}(h) + \myerrm{\psi}
\nonumber
\\
&=\sup_{\gamma\in [0, a_j]} \mathcal{M} (\gamma) \phi ({\bf{x}}_{n, k, j}^{m+1}) + \mathcal{O}(h) + \myerrm{\psi}.
\label{eq:phi_d}
\end{align}
\end{linenomath}
Here, $\mathcal{M}(\gamma) \phi \l({\bf{x}}_{n, k, j}^{m}\r)  = \mathcal{M}(\gamma) \phi \l({\bf{x}}_{n, k, j}^{m+1}\r) +  \mathcal{O}\l(h\r)$, which is combined with $\mu C_r\Delta \tau = \mathcal{O}\l(h\r)$.
Substituting \eqref{eq:term_c_1_varphiprime} and \eqref{eq:phi_d} into \eqref{eq:D_oper} gives
\EQA
\label{eq:D}
\mathcal{D}_{n, j}^{m+1} \l(\cdot\r)
=
\phi_{n, k, j}^{m+1}
-
\sup_{\gamma \in \l[ 0, a \r]}\mathcal{M}(\gamma) \phi \l({\bf{x}}_{n, k,  j}^{m+1}\r) + \mathcal{O}(h) + \myerrm{}.
\ENA


Overall, recalling ${\bf{x}} = {\bf{x}}_{n,k,j}^{m+1}$,  we have
\EQA
&&\mathcal{H}_{n,k,j}^{m+1}
\left(h, \phi_{n,k,j}^{m+1}+ \xi,
\left\{\phi_{l,d,p}^{m}+\xi \right\}_{p \le j}\right)
-
F_{\myin}\left(
{\bf{x}},
\phi\left({\bf{x}}\right),
D\phi\left({\bf{x}}\right),
D^2\phi\left({\bf{x}}\right),
\mathcal{J} \phi \left({\bf{x}}\right),
\mathcal{M} \phi \left({\bf{x}}\right)
\right) \nonumber
\\
& &
\qquad\qquad
=~
c\left({\bf{x}}\right)\xi
+
\mathcal{O}(h)
+
\err,
\quad
\text{if}~ {\bf{x}} \in \Omega^{\myup}_{\myin},
\nonumber
\ENA
where $c({\bf{x}})$ is a bounded function satisfying $r_{\min} \le c({\bf{x}}) \le r_{\max}$
and
$\err  \to 0$ as $h \to 0$.
This proves the first equation in (\ref{eq:lemma_1}).
The remaining equations in (\ref{eq:lemma_1}) can be proved using similar arguments with the first equation.
\end{proof}

\begin{remark}
\label{rm:consistency_pre_con}
We impose the condition \eqref{eq:min_e_w} to ease the presentation of the proof, that is, we
make sure the term $\max(e^{w_l} - \gamma_{l,d,j}^{m}, e^{w^{\dagger}_{\min}})$ in the operator $\mathcal{C}^{m+1}_{n,k,j}(\cdot)$ will never be triggered. However,
we can avoid this condition by the similar procedures presented in \cite{online}.
\end{remark}

\begin{lemma} [Consistency]
\label{lemma:consistency_viscosity}
Assuming all the conditions in Lemma~\ref{lemma:consistency} are satisfied, then the scheme \eqref{eq:scheme_G} is consistent in the viscosity sense to the impulse control problem \eqref{def:impulse_def} in $\Oinf$. That is,
for all ${\bf{\hat{x}}} = (\hat{w}, \hat{r}, \hat{a}, \hat{\tau}) \in \Oinf$,
and for any $\phi\in \G \cap \C{\Oinf}$
with $\phi_{n,k,j}^{m+1} = \phi\left(w_n,r_k,a_j, \tau_{m+1}\right)$
and {\bf{x}}~=~$(w_n,r_k,a_j, \tau_{m+1})$, we have both of the following
\EQA
\limsup_{\subalign{h \to 0, & ~  {\bf{x}} \to {\bf{\hat{x}}} \\ \xi &\to 0}}
\mathcal{H}_{n,k,j}^{m+1}
\!\left(h, \phi_{n,k,j}^{m+1}\!+\! \xi,
\left\{\phi_{l,d,p}^{m}\!+\!\xi \right\}_{p \le j} \right)
\leq
\left(F_{\Oinf}\right)^* \!\left(
              {\bf{\hat{x}}}, \phi({\bf{\hat{x}}}), D\phi({\bf{\hat{x}}}), D^2 \phi({\bf{\hat{x}}}),
             \mathcal{J} \phi ({\bf{\hat{x}}}),
             \mathcal{M} \phi ({\bf{\hat{x}}})
             \right),
\label{eq:consistency_viscosity_1}
\\
\liminf_{\subalign{h \to 0, & ~ {\bf{x}} \to {\bf{\hat{x}}} \\ \xi &\to 0}}
\mathcal{H}_{n,k,j}^{m+1}
\!\left(h, \phi_{n,k,j}^{m+1}\!+\! \xi,
\left\{\phi_{l,d,p}^{m}\!+\!\xi \right\}_{p \le j} \right)
\geq
\left(F_{\Oinf}\right)_*\left(
              {\bf{\hat{x}}}, \phi({\bf{\hat{x}}}), D\phi({\bf{\hat{x}}}), D^2 \phi({\bf{\hat{x}}}),
             \mathcal{J} \phi ({\bf{\hat{x}}}),
             \mathcal{M} \phi ({\bf{\hat{x}}})
             \right).
\label{eq:consistency_viscosity_2}
\ENA
\end{lemma}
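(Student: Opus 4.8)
The plan is to deduce the viscosity-sense consistency \eqref{eq:consistency_viscosity_1}--\eqref{eq:consistency_viscosity_2} from the pointwise local consistency already established in Lemma~\ref{lemma:consistency}. Fix $\hat{\mathbf{x}} = (\hat{w}, \hat{r}, \hat{a}, \hat{\tau}) \in \Oinf$, and take arbitrary sequences $h \to 0$, reference nodes $\mathbf{x} := \mathbf{x}_{n,k,j}^{m+1} \to \hat{\mathbf{x}}$ (the indices $n, k, j, m$ depending on $h$), and $\xi \to 0$ along which the $\limsup$ in \eqref{eq:consistency_viscosity_1} (resp.\ the $\liminf$ in \eqref{eq:consistency_viscosity_2}) is attained. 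For each $h$ the node $\mathbf{x}$ belongs to exactly one of the sub-domains in \eqref{eq:scheme_G}, so Lemma~\ref{lemma:consistency} gives
\[
\mathcal{H}_{n,k,j}^{m+1}(h, \phi_{n,k,j}^{m+1}+\xi, \{\phi_{l,d,p}^{m}+\xi\}_{p \le j})
= F_{\bullet}(\mathbf{x}, \phi(\mathbf{x})) + c(\mathbf{x})\,\xi + \mathcal{O}(h) + \err,
\]
where $F_{\bullet}$ is the continuous operator attached to the sub-domain containing $\mathbf{x}$ (one of $F_{\myin}$, $F_{{\myin}'}$, $F_{a_{\min}}$, $F_{w_{\min}}$, $F_{w'_{\min}}$, $F_{wa_{\min}}$, $F_{w_{\max}}$, $F_{\tau_0}$, $F_{\myot}$), evaluated at $(\mathbf{x}, \phi(\mathbf{x}), D\phi(\mathbf{x}), D^2\phi(\mathbf{x}), \mathcal{J}\phi(\mathbf{x}), \mathcal{M}\phi(\mathbf{x}))$, the function $c(\cdot)$ is uniformly bounded, and $\err \to 0$ as $h \to 0$ (with the term $\err$ absent in those sub-domains for which Lemma~\ref{lemma:consistency} leaves only an $\mathcal{O}(h)$ remainder).

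For the limit passage, note that $\phi \in \C{\Oinf}$ makes each of $\mathbf{x} \mapsto \phi(\mathbf{x}), D\phi(\mathbf{x}), D^2\phi(\mathbf{x}), \mathcal{J}\phi(\mathbf{x}), \mathcal{M}\phi(\mathbf{x})$ continuous, while $c(\mathbf{x})\xi \to 0$, $\mathcal{O}(h) \to 0$ and $\err \to 0$. Since there are only finitely many sub-domain types, we may pass to a subsequence along which the type $\bullet$ of $\mathbf{x}$ is constant; along it $F_{\bullet}(\mathbf{x}, \phi(\mathbf{x}))$ converges to a limit obtained with $\mathbf{x} \to \hat{\mathbf{x}}$ from inside the sub-domain of type $\bullet$. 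By the definition of $F_{\Oinf}$ in \eqref{eq:Fomega_def} and of its u.s.c.\ and l.s.c.\ envelopes, for any such sub-domain type and any such sequence one has
\[
\limsup F_{\bullet}(\mathbf{x}, \phi(\mathbf{x})) \le (F_{\Oinf})^*(\hat{\mathbf{x}}, \phi(\hat{\mathbf{x}}), D\phi(\hat{\mathbf{x}}), D^2\phi(\hat{\mathbf{x}}), \mathcal{J}\phi(\hat{\mathbf{x}}), \mathcal{M}\phi(\hat{\mathbf{x}})),
\]
and symmetrically $\liminf F_{\bullet} \ge (F_{\Oinf})_*$ at $\hat{\mathbf{x}}$, because the envelopes are suprema (resp.\ infima) over all of $\Oinf$, hence over every sub-domain adjacent to $\hat{\mathbf{x}}$. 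Taking $\limsup$ (resp.\ $\liminf$) over the finitely many subsequences then yields \eqref{eq:consistency_viscosity_1} (resp.\ \eqref{eq:consistency_viscosity_2}); in particular, when $\hat{\mathbf{x}}$ lies on an interface and the nodes alternate between several sub-domains, every resulting limit value is automatically dominated by $(F_{\Oinf})^*$ and dominates $(F_{\Oinf})_*$, which is precisely why Definition~\ref{def:vis_def_common} is phrased with these envelopes.

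The step I expect to be the main obstacle is the $\Delta\tau$-dependent splitting $\Omega_{\myin} = \Omega^{\mydn}_{\myin} \cup \Omega^{\myup}_{\myin}$ (and $\Omega_{w_{\min}} = \Omega^{\mydn}_{w_{\min}} \cup \Omega^{\myup}_{w_{\min}}$), together with the auxiliary operators $F_{{\myin}'}$, $F_{w'_{\min}}$ of \eqref{eq:extra}, whose admissible control set $[0, a/\Delta\tau]$ itself degenerates as $h \to 0$. Here I would argue as follows. Under Assumption~\ref{as:dis_parameter} the interior truncation widens ($w_{\min} \to -\infty$, $w_{\max} \to +\infty$, $r_{\min} \to -\infty$, $r_{\max} \to +\infty$) while $C_r \Delta\tau \to 0$, so for fixed $\hat{\mathbf{x}}$ with $\hat{a} > a_{\min}$ and $\hat{\tau} > 0$ all sufficiently close nodes lie in $\Omega^{\myup}_{\myin}$, and only the full operator $F_{\myin}$, which is continuous in its arguments, enters the limit; if $\hat{a} = a_{\min}$ the relevant nodes lie in $\Omega_{a_{\min}}$ or in $\Omega^{\mydn}_{\myin}$ (where $a_j/\Delta\tau \in [0, C_r]$), and if $\hat{\tau} = 0$ they lie in $\Omega_{\tau_0}$. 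In the degenerate case $\hat{a} = a_{\min}$ one checks that $\hat\gamma \mapsto \hat\gamma(1 - e^{-w}\phi_w - \phi_a)$ is bounded on $[0, C_r]$ uniformly in $h$, so that $F_{{\myin}'}(\mathbf{x}, \phi(\mathbf{x}))$ (resp.\ $F_{w'_{\min}}$) always lies between the corresponding $F_{a_{\min}}$-type and $F_{\myin}$-type values as $a/\Delta\tau$ ranges over $[0, C_r]$, and all of these are sandwiched between $(F_{\Oinf})_*$ and $(F_{\Oinf})^*$ at $\hat{\mathbf{x}}$; this part is handled exactly as in \cite{online}. The remaining cases ($\Omega_{w_{\max}}$, $\Omega_{\tau_0}$, $\Omega_{\myot}$, and, if the outer truncation is instead held fixed, also $\Omega_{w_{\min}}$, $\Omega_{wa_{\min}}$) are immediate, since there $\mathcal{H}_{n,k,j}^{m+1}(\cdot) = F_{\bullet}(\mathbf{x}, \phi(\mathbf{x})) + c(\mathbf{x})\xi$ with $F_{\bullet}$ continuous in $(\mathbf{x}, \phi(\mathbf{x}))$.
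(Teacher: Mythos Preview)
Your proposal is correct and takes essentially the same route as the paper: both deduce viscosity-sense consistency from the local consistency of Lemma~\ref{lemma:consistency} by passing to subsequences with constant sub-domain type, invoking continuity of each $F_{\bullet}$ in its arguments, and handling the auxiliary operator $F_{{\myin}'}$ (which is not part of $F_{\Oinf}$) via comparison with neighbouring operators. The paper only writes out the one-sided bound $F_{{\myin}'} \le F_{a_{\min}}$ (from $\sup_{\hat\gamma}(\cdot)\ge 0$) for the $\limsup$ at $\hat{\mathbf{x}}\in\Omega_{a_{\min}}$ and leaves the $\liminf$ direction implicit; your two-sided sandwich $F_{\myin}\le F_{{\myin}'}\le F_{a_{\min}}$ is the natural completion and makes the $\liminf$ argument explicit.
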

\begin{proof}[Proof of Lemma~\ref{lemma:consistency_viscosity}]
Lemma~\ref{lemma:consistency_viscosity} can be proved using similar steps in
Lemma~5.5 in \cite{online}. For brevity, we outline key steps to prove
\eqref{eq:consistency_viscosity_1} for $\Omega_{\myin}$ and $\Omega_{a_{\min}}$;
other sub-domains can be treated similarly. We note the continuity in their parameters of operators defined in \eqref{eq:ftau0}-\eqref{eq:Finn},
which is needed for this proof.

Consider $\hat{\bf{x}} \in \Omega_{\myin}$. There exist sequences of discretization parameter
$\{h_i\}_i \to 0$, constants $\{\xi_i\}_i \to 0$, and gridpoints $\{(w_{n_i}, r_{k_i}, a_{j_i}, \tau_{m_i+1})\}_i
\equiv {\bf{x}}_i \to \hat{\bf{x}}$, as $i \to \infty$.
For sufficiently small $\{\Delta \tau_i\}_i$, we assume $a_{j_i} \in (C_r\Delta \tau_i, a_{\max}]$ for each $i$,
and hence, the sequence $\{{\bf{x}}_i\}_i$ is contained in $\Omega_{\myin}^{\myup}$, defined in \eqref{eq:new_dom}.
Therefore, lhs of \eqref{eq:consistency_viscosity_1} $= \limsup_{i \to \infty} \mathcal{H}_{n_i, k_i, j_i}^{m_i+1}
\!(h_i, \phi_{n_i, k_i, j_i}^{m_i+1}+\!\xi_i,
\{\phi_{l_i, d_i,  p_i}^{m_i}+\!\xi_i\}_{p_i\le j_i}
)\ldots$
\begin{linenomath}
\postdisplaypenalty=0
\begin{align*}
\ldots\underset{\text{(i)}}{\le}
\limsup_{i \to \infty}
F_{\myin} ({\bf{x}}_i, \phi({\bf{x}}_i))
+ \limsup_{i \to \infty} [ c({\bf{x}}_i)\xi_i + \mathcal{O}( h_i )
+\error({\bf{x}}_{n_i,j_i}^{m_i}, h_i)]
\underset{\text{(ii)}}{=}
F_{\myin}({\bf{\hat{x}}}, \phi({\bf{\hat{x}}}))
=
\text{rhs of \eqref{eq:consistency_viscosity_1}},
\end{align*}
\end{linenomath}
as wanted. Here, (i) is due to the local consistency result for $\Omega_{\myin}^{\myup}$ in
the first equation of \eqref{eq:lemma_1} (Lemma~\ref{lemma:consistency}), and properties
of $\limsup$; (ii) is because of continuity of $F_{\myin}$.

For $\hat{\bf{x}} \in \Omega_{a_{\min}}$, complications arise because $\{{\bf{x}}\}_i$ could converge
to $\hat{\bf{x}}$ from two different sub-domains,
$\Omega^{\myin} = \Omega^{\myup}_{\myin} \cup \Omega^{\mydn}_{\myin}$ and $\Omega_{a_{\min}}$;
however, on $\Omega^{\mydn}_{\myin}$, the second equation of \eqref{eq:lemma_1} (Lemma~\ref{lemma:consistency}) indicates local consistency with $F_{\myin}' ({\bf{x}}_i, \phi({\bf{x}}_i))$, defined in \eqref{eq:extra}
but is not part of $F_{\Oinf}$. Nonetheless, since
$\sup_{\hat{\gamma} \in [0, a/\Delta \tau]} \hat{\gamma}\l(1  - e^{-w}\phi_w - \phi_a \r) \geq 0$,
$F_{\myin}' ({\bf{x}}_i, \phi({\bf{x}}_i)) \le F_{a_{\min}} ({\bf{x}}_i, \phi({\bf{x}}_i))$,
we can eliminate $F_{\myin}' ({\bf{x}}_i, \phi({\bf{x}}_i))$ when considering $\limsup$.
Thus, lhs of \eqref{eq:consistency_viscosity_1} $= \limsup_{i \to \infty} \mathcal{H}_{n_i, k_i, j_i}^{m_i+1}
\!(h_i, \phi_{n_i, k_i, j_i}^{m_i+1}+\!\xi_i,
\{\phi_{l_i, d_i,  p_i}^{m_i}+\!\xi_i\}_{p_i\le j_i}
) \ldots$
\EQAS
\ldots\le
\limsup_{i \to \infty}
F_{\Oinf} ({\bf{x}}_i, \phi({\bf{x}}_i)) + \limsup_{i \to \infty} [ c({\bf{x}}_i)\xi_i
+~\error({\bf{x}}_{n_i,j_i}^{m_i}, h_i) ]
\le
\left(F_{\Oinf}\right)^* ({\bf{\hat{x}}}, \phi({\bf{\hat{x}}}))
= \text{rhs of \eqref{eq:consistency_viscosity_1}}.
\ENAS
\end{proof}

\subsection{Monotonicity}
We present a result on the monotonicity of scheme \eqref{eq:scheme_G}.
\begin{lemma} [$\epsilon$-monotonicity]
\label{lemma:ep_mo}
Suppose that (i) the discretization \eqref{eq:scheme_timestep_left} satisfies the positive coefficient condition \eqref{eq:pos_con}, and (ii) linear interpolation
in \eqref{eq:vtil_a}, \eqref{eq:vtil_b}
and (ii) the weight $\tilde{g}_{n-l, k-d}$
satisfies the monotonicity condition \eqref{eq:test1};
and (iii) $r_{\min}$ satisfies condition~\eqref{eq:r_min}.
Then scheme \eqref{eq:scheme_G} satisfies
\EQA
\label{eq:eps_h}
\mathcal{H}_{n,k,j}^{m+1}
\left(h, v_{n,k,j}^{m+1},
\left\{x_{l,d,p}^{m}\right\}_{p \le j}
  \right)
~\le~
\mathcal{H}_{n,k,j}^{m+1}
\left(h, v_{n,k,j}^{m+1},
\left\{y_{l,d,p}^{m}\right\}_{p \le j}
  \right)
~+~
K' \epsilon
\ENA
for bounded $\{x_{l,d,p}^{m}\}$ and $\{y_{l,d,p}^{m}\}$ having
$\{x_{l,d,p}^{m}\}  ~\geq~ \{y_{l,d,p}^{m}\}$, where
the inequality is understood in the component-wise sense, and $K'$ is a positive constant independent of $h$.
\end{lemma}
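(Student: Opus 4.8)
The plan is to verify \eqref{eq:eps_h} case by case according to the sub-domain in which the reference node ${\bf{x}} = (w_n,r_k,a_j,\tau_{m+1})$ lies, isolating the single place where the monotonicity tolerance $\epsilon$ is actually consumed. For ${\bf{x}} \in \Omega_{w_{\max}} \cup \Omega_{\tau_0} \cup \Omega_{\myot}$ the scheme \eqref{eq:scheme_G} reduces to $v_{n,k,j}^{m+1}$ minus a quantity that involves no time-$\tau_m$ data, so the two sides of \eqref{eq:eps_h} coincide and there is nothing to prove. For ${\bf{x}} \in \Omega_{w_{\min}} \cup \Omega_{wa_{\min}}$, the dependence of $\mathcal{A}_{n,k,j}^{m+1}(\cdot)$ and $\mathcal{B}_{n,k,j}^{m+1}(\cdot)$ on $\{v_{l,d,p}^{m}\}_{p\le j}$ enters only through $\tilde v_{n,k,j}^{m} = \mathcal{I}\{v^{m}\}(w_n, r_k, a_j - \gamma_{n,k,j}^{m})$ inside the term $-\sup_{\gamma}(\tilde v_{n,k,j}^{m} + f(\gamma))$; since linear interpolation is monotone in its nodal data (Remark~\ref{eq:intp}), a supremum of monotone functions is monotone, and the positive coefficient condition \eqref{eq:pos_con} keeps the implicit term $\Delta\tau(\Ld^h v)_{n,k,j}^{m+1}$ monotone, the operators $\mathcal{A}$, $\mathcal{B}$ and hence $\min\{\mathcal{A},\mathcal{B}\}$ are non-increasing in $\{v_{l,d,p}^{m}\}_{p\le j}$. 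So on these sub-domains \eqref{eq:eps_h} holds with the stronger constant $K'=0$.

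The substance of the lemma is the case ${\bf{x}} \in \Omega_{\myin} \cup \Omega_{a_{\min}}$, where $\mathcal{H}_{n,k,j}^{m+1}(\cdot)$ equals $\mathcal{C}_{n,k,j}^{m+1}(\cdot)$ or $\min\{\mathcal{C}_{n,k,j}^{m+1}(\cdot),\mathcal{D}_{n,k,j}^{m+1}(\cdot)\}$ from \eqref{eq:scheme_CD}. Here the only dependence on $\{v_{l,d,p}^{m}\}_{p\le j}$ runs through (i) the boundary values $v_{l,d,j}^{m}$, appearing linearly with weight $\tilde g_{n-l,k-d}$ for $l\in\Nc$ or $d\in\Kc$, and (ii) the interior quantities $\bvisl_{l,d,j}^{m+} = \mathcal{I}\{(v^{(i)})^{m+}\}(\bw_l,\br_d,a_j)(1+\Delta\tau r_d)^{-1}$, with $(v^{(i)})^{m+}$ obtained from \eqref{eq:scheme*} via $\vt_{l,d,j}^{m}=\mathcal{I}\{v^{m}\}(\tilde w_l, r_d, \tilde a_j)$. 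The first step is to check that the composite map $\{v_{l,d,p}^{m}\}_{p\le j} \mapsto \bvisl_{l,d,j}^{m+}$ is both monotone non-decreasing and $1$-Lipschitz in the $\ell_\infty$ norm, uniformly in $h$: linear interpolation is monotone and $1$-Lipschitz (the weights are non-negative and sum to one, and, for fixed $\gamma$, the clipping $\max(e^{w_l}-\gamma,e^{w^{\dagger}_{\min}})$ only relocates the evaluation point, not the dependence on the data); $\gamma\mapsto \vt_{l,d,j}^{m}+f(\gamma)$ followed by $\sup_\gamma$ preserves both properties; and the final factor satisfies $0 < (1+\Delta\tau r_d)^{-1} \le (1+\Delta\tau r_{\min})^{-1}$ by \eqref{eq:r_min}, which is bounded above since $\Delta\tau \le T$. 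Hence for $\{x_{l,d,p}^{m}\}\ge\{y_{l,d,p}^{m}\}$ the increments $\Delta_{l,d} := \bvisl_{l,d,j}^{m+}(x) - \bvisl_{l,d,j}^{m+}(y)$ for interior $(l,d)$, and $\Delta_{l,d} := x_{l,d,j}^{m}-y_{l,d,j}^{m}$ for boundary $(l,d)$, are all non-negative and satisfy $0 \le \Delta_{l,d} \le K'' := (1+\Delta\tau r_{\min})^{-1}(\|x^{m}\|_\infty + \|y^{m}\|_\infty)$, a constant independent of $h$ by the assumed uniform boundedness of the data (consistent with the $\ell_\infty$-stability of Lemma~\ref{lemma:stability}).

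With $\tilde g = \max(\tilde g,0) + \min(\tilde g,0)$ and all $\Delta_{l,d}\ge 0$, one estimates
\begin{equation*}
\mathcal{C}_{n,k,j}^{m+1}(\{x^{m}\}) - \mathcal{C}_{n,k,j}^{m+1}(\{y^{m}\})
= -\frac{\Delta w\,\Delta r}{\Delta\tau}\mysum^{d\in\KD}_{l\in\ND}\tilde g_{n-l,k-d}\,\Delta_{l,d}
\;\le\; \frac{\Delta w\,\Delta r}{\Delta\tau}\,K''\mysum^{d\in\KD}_{l\in\ND}\bigl|\min(\tilde g_{n-l,k-d},0)\bigr|,
\end{equation*}
since discarding the non-negative contribution of $\max(\tilde g,0)$ can only increase the right-hand side. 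The monotonicity test \eqref{eq:test1} bounds the last sum by $\epsilon\,\Delta\tau/(T\,\Delta w\,\Delta r)$, so $\mathcal{C}_{n,k,j}^{m+1}(\{x^{m}\}) - \mathcal{C}_{n,k,j}^{m+1}(\{y^{m}\}) \le (K''/T)\epsilon$. The identical computation for $\mathcal{D}_{n,k,j}^{m+1}(\cdot)$, which lacks the $1/\Delta\tau$ prefactor, gives the bound $(K''\,\Delta\tau/T)\epsilon \le K''\epsilon$ using $\Delta\tau\le T$; and since $\min\{\mathcal{C}(x),\mathcal{D}(x)\}\le \min\{\mathcal{C}(y)+K'\epsilon,\,\mathcal{D}(y)+K'\epsilon\} = \min\{\mathcal{C}(y),\mathcal{D}(y)\}+K'\epsilon$, the bound passes to $\min\{\mathcal{C},\mathcal{D}\}$ with $K' := K''\max(1,1/T)$. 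Combining with the trivial cases proves \eqref{eq:eps_h}. I expect the main obstacle to be the bookkeeping of the second paragraph: showing that the chain of interpolations, the embedded optimization over $\gamma$, the $\max(e^{w_l}-\gamma,e^{w^{\dagger}_{\min}})$ clipping and the division by $1+\Delta\tau r_d$ together define an $h$-uniformly monotone and $1$-Lipschitz map, which is precisely where the semi-Lagrangian discretization complicates matters relative to \cite{online}; once that structure is in place, the negative-weight estimate via \eqref{eq:test1} is routine.
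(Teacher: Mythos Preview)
Your proposal is correct and follows the natural case-by-case route: trivial on $\Omega_{w_{\max}}\cup\Omega_{\tau_0}\cup\Omega_{\myot}$, strictly monotone on $\Omega_{w_{\min}}\cup\Omega_{wa_{\min}}$ via linear interpolation, and $\epsilon$-monotone on $\Omega_{\myin}\cup\Omega_{a_{\min}}$ by splitting $\tilde g=\max(\tilde g,0)+\min(\tilde g,0)$ and invoking \eqref{eq:test1}; this is precisely the structure of the argument the paper defers to (Lemma~5.6 of \cite{online}), which it omits here. One minor remark: for the $\Omega_{w_{\min}}$ case, the implicit term $\Delta\tau(\Ld^h v)_{n,k,j}^{m+1}$ involves only time-$(m{+}1)$ data and therefore cancels in the difference $\mathcal{A}(\{x^m\})-\mathcal{A}(\{y^m\})$, so the positive-coefficient condition is not actually needed for the inequality \eqref{eq:eps_h} as stated (it matters for monotonicity in the time-$(m{+}1)$ neighbours, which the lemma does not vary).
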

\noindent A proof of Lemma~\ref{lemma:ep_mo} is similar to that of Lemma~5.6 in \cite{online}, and hence
omitted for brevity.

\subsection{Convergence to viscosity solution}
\label{ssc:conv}
We have demonstrated that the scheme~\eqref{eq:scheme_G} satisfies the three key properties in $\Omega$:
(i) $\ell_{\infty}$-stability (Lemma~\ref{lemma:stability}), (ii) consistency (Lemma~\ref{lemma:consistency_viscosity})
and (iii) $\epsilon$-monotonicity (Lemma~\ref{lemma:ep_mo}).
With a strong comparison result in
$\Omega_{\myin} \cup \Omega_{a_{\min}}$, we now present the main convergence result of the paper.

\begin{theorem} [Convergence in $\Omega_{\myin} \cup \Omega_{a_{\min}}$]
\label{thm:convergence}
Suppose that all the conditions for Lemmas~\ref{lemma:stability}, \ref{lemma:consistency_viscosity} and
\ref{lemma:ep_mo} are satisfied.
Under the assumption that the monotonicity tolerance $\epsilon \to 0$ as $h \to 0$,
scheme~\eqref{eq:scheme_G}  converges locally uniformly in $\Omega_{\myin} \cup \Omega_{a_{\min}}$
to the unique bounded viscosity solution of the GMWB pricing problem in the sense of Definition~\ref{def:vis_def_common}.
\end{theorem}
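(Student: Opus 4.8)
The plan is to invoke the Barles--Souganidis framework \cite{barles-souganidis:1991}, specialized to the $\epsilon$-monotone setting as in \cite{online}. First I would recall that, by Lemma~\ref{lemma:stability}, the family of numerical solutions $\{v^{m}_{n,k,j}\}$ is uniformly bounded as $h \to 0$, so we may define the half-relaxed limits
\EQAS
\overline{v}({\bf{x}}) = \limsup_{\subalign{h &\to 0 \\ {\bf{x}}_{n,k,j}^{m+1} &\to {\bf{x}}}} v_{n,k,j}^{m+1},
\qquad
\underline{v}({\bf{x}}) = \liminf_{\subalign{h &\to 0 \\ {\bf{x}}_{n,k,j}^{m+1} &\to {\bf{x}}}} v_{n,k,j}^{m+1},
\ENAS
which are respectively u.s.c.\ and l.s.c.\ bounded functions on $\Oinf$ with $\underline{v} \le \overline{v}$ everywhere by construction. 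The goal is to show $\overline{v}$ is a viscosity subsolution and $\underline{v}$ a viscosity supersolution of \eqref{eq:gmwb_def} in $\Omega_{\myin} \cup \Omega_{a_{\min}}$ in the sense of Definition~\ref{def:vis_def_common}, so that Theorem~\ref{thm:comparison} forces $\overline{v} \le \underline{v}$ there, hence $\overline{v} = \underline{v} =: v$ is the unique continuous bounded viscosity solution, and the equality of the half-relaxed limits upgrades pointwise to locally uniform convergence by a standard argument.

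The heart of the proof is the subsolution property (the supersolution case being symmetric). I would fix a test function $\phi \in \G \cap \C{\Oinf}$ and a point $\hat{\bf{x}} \in \Omega_{\myin} \cup \Omega_{a_{\min}}$ at which $\overline{v} - \phi$ attains a strict global maximum with $\overline{v}(\hat{\bf{x}}) = \phi(\hat{\bf{x}})$. By a now-routine argument, there is a sequence of grid points ${\bf{x}}_i = (w_{n_i}, r_{k_i}, a_{j_i}, \tau_{m_i+1}) \to \hat{\bf{x}}$ and discretization parameters $h_i \to 0$ along which $v^{m_i+1}_{n_i,k_i,j_i} \to \overline{v}(\hat{\bf{x}})$ and ${\bf{x}}_i$ is a global maximum of $v^{h_i} - \phi$ over the grid; set $\xi_i := v^{m_i+1}_{n_i,k_i,j_i} - \phi({\bf{x}}_i) \to 0$. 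The maximality gives $v^{m_i}_{l,d,p} \le \phi_{l,d,p}^{m_i} + \xi_i$ for all relevant nodes, so $\epsilon$-monotonicity (Lemma~\ref{lemma:ep_mo}) yields
\EQAS
0 = \mathcal{H}_{n_i,k_i,j_i}^{m_i+1}\!\left(h_i, v_{n_i,k_i,j_i}^{m_i+1}, \{v_{l,d,p}^{m_i}\}_{p\le j_i}\right)
\ge \mathcal{H}_{n_i,k_i,j_i}^{m_i+1}\!\left(h_i, \phi_{n_i,k_i,j_i}^{m_i+1}+\xi_i, \{\phi_{l,d,p}^{m_i}+\xi_i\}_{p\le j_i}\right) - K'\epsilon_i.
\ENAS
Taking $\liminf_{i\to\infty}$ and applying the consistency estimate \eqref{eq:consistency_viscosity_2} of Lemma~\ref{lemma:consistency_viscosity}, together with the hypothesis $\epsilon_i \to 0$ as $h_i \to 0$, gives $(F_{\Oinf})_*(\hat{\bf{x}}, \phi(\hat{\bf{x}}), D\phi(\hat{\bf{x}}), D^2\phi(\hat{\bf{x}}), \mathcal{J}\phi(\hat{\bf{x}}), \mathcal{M}\phi(\hat{\bf{x}})) \le 0$, which is precisely the subsolution inequality. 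The supersolution property for $\underline{v}$ follows by the mirror-image argument using \eqref{eq:consistency_viscosity_1} and the reverse $\epsilon$-monotonicity bound.

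The main obstacle — and the place where care is genuinely required — is that the convergence statement is confined to $\Omega_{\myin} \cup \Omega_{a_{\min}}$, and one must ensure the half-relaxed limit argument there does not secretly require comparison or consistency on boundary sub-domains where these fail. The key point, already anticipated in the remarks following Theorem~\ref{thm:comparison} and in the consistency proof for $\Omega_{a_{\min}}$, is that when $\hat{\bf{x}} \in \Omega_{a_{\min}} \subset \partial\Omega_{\myin}$, the approximating sequence ${\bf{x}}_i$ may approach from either $\Omega_{\myin}$ or $\Omega_{a_{\min}}$ itself (and, for small $\Delta\tau$, from $\Omega^{\mydn}_{\myin}$ where the operator is the auxiliary $F_{\myin}'$ rather than a piece of $F_{\Oinf}$); one handles this exactly as in Lemma~\ref{lemma:consistency_viscosity}, using $F_{\myin}'({\bf{x}}_i,\phi({\bf{x}}_i)) \le F_{a_{\min}}({\bf{x}}_i,\phi({\bf{x}}_i))$ (since the suppressed supremum term is nonnegative) to discard the spurious operator when passing to $\limsup$. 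Once this bookkeeping is in place, Theorem~\ref{thm:comparison} applies on $\Omega_{\myin} \cup \Omega_{a_{\min}}$ and closes the argument; I would remark that loss of boundary data on $\Gamma = \partial\Omega_{\myin}\setminus\Omega_{a_{\min}}$ is immaterial because those nodes only ever serve as Dirichlet data or padding and never feed back into the target region through the monotone stencil or the SL departure points, by the domain-size choices in Assumption~\ref{as:dis_parameter}.
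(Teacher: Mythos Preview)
Your proposal is correct and follows essentially the same Barles--Souganidis route as the paper's own proof: define the half-relaxed limits, use stability, consistency, and $\epsilon$-monotonicity (with $\epsilon\to 0$) to show $\overline{v}$ and $\underline{v}$ are sub- and supersolutions, then invoke Theorem~\ref{thm:comparison} to conclude equality and locally uniform convergence on $\Omega_{\myin}\cup\Omega_{a_{\min}}$. You simply spell out in more detail (the grid-maximum sequence, the use of $\epsilon$-monotonicity and \eqref{eq:consistency_viscosity_2}, and the $\Omega_{a_{\min}}$ boundary bookkeeping) what the paper compresses into the phrase ``appeal to a Barles--Souganidis-type analysis.''
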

\begin{proof}[Proof of Theorem~\ref{thm:convergence}]
To highlight the importance of the discretization parameter $h$, we let
${\bf{x}}_{n,k,j}^m(h) = (w_n, r_k, a_j, \tau_m; h)$, and denote by $v_{n,k,j}^m(h)$
the numerical solution at this node.
The candidate for the viscosity subsolution (resp.\ supersolution)
the GMWB pricing problem is given by
the u.s.c function $\overline{v}: \Oinf \to \mathcal{\mathbb{R}}$ (resp.\
the l.s.c function $\underline{v}: \Oinf \to \mathcal{\mathbb{R}}$)
defined as follows
\EQA
\label{eq:def_v}
\overline{v} \l( {\bf{x}} \r)
= \limsup_{\subalign{h &\to 0 \\ {\bf{x}}_{n, k, j}^{m+1}(h) & \to {\bf{x}}}}v_{n, k, j}^{m+1}(h)
\qquad
(\text{resp.}~~\underline{v} ({\bf{x}})
=
\ds\liminf_{\subalign{h &\to 0 \\ {\bf{x}}_{n, k, j}^{m+1}(h) &\to {\bf{x}}}} v_{n, k, j}^{m+1}(h))
\qquad{\bf{x}} \in \Oinf.
\ENA
Here,  $\limsup$ and $\liminf$ are finite due to stability of our scheme in $\Omega$ established in  Lemma~\ref{lemma:stability}.

We appeal to a Barles-Souganidis-type analysis in \cite{barles-burdeau:1995, barles-souganidis:1991} to show that $\overline{v}$ (resp.\ $\underline{v}$) is a viscosity subsolution (resp.\ supersolution) of the HJB-QVI~\eqref{eq:gmwb_def} in $\Oinf$ in the sense of Definition~\ref{def:vis_def_common}.
In this step, we use (i) $\ell_{\infty}$-stability (Lemma~\ref{lemma:stability}), (ii) consistency (Lemma~\ref{lemma:consistency_viscosity}) and (iii) $\epsilon$-monotonicity (Lemma~\ref{lemma:ep_mo})
of the numerical scheme, noting the requirement $\epsilon \to 0$ as $h \to 0$.
By \ref{eq:def_v}, $\overline{v} \geq \underline{v}$ in $\Oinf$.
By a strong comparison result in Theorem~\ref{thm:comparison},  $\overline{v} \leq \underline{v}$ in $\Omega_{\myin} \cup \Omega_{a_{\min}}$. Therefore, $ v({\bf{x}}) = \overline{v}({\bf{x}})= \underline{v}({\bf{x}})$
is the unique viscosity solution in $\Omega_{\myin} \cup \Omega_{a_{\min}}$
in the sense of Definition~\ref{def:vis_def_common}.  The fact that convergence is locally
uniform is automatically implied. This concludes the proof.
\end{proof}

\section{Numerical experiments}
\label{sec:num_test}
In this section, we present selected numerical results for the no-arbitrage pricing problem \eqref{eq:gmwb_def}.
In addition to validation examples, we particularly focus on investigating the impact of jump-diffusion dynamics
and stochastic interest rates on the prices/the fair insurance fees, as well as  on the holder's
optimal withdrawal behaviors.

A set of GMWB parameters commonly used for subsequent experiments is given in Table~\ref{tab:parameter_gmwb}.
These include expiry time $T$, the maximum allowed withdrawal rate $C_r$ (for continuous withdrawals),
the  proportional penalty rate $\mu$ (for withdrawing finite amounts),
the premium $z_0$ which is also the initial balance of the guarantee account
and of the personal sub-account.

For experiments in this section, the computational domain is constructed with
$w_{\min} = \ln(z_0)-10$, $w_{\max} = \ln(z_0)+10$, $r_{\min} = -0.2$, $r_{\max} = 0.3$,
together with $w^{\dagger}_{\min}$, $w^{\dagger}_{\max}$, $r^{\dagger}_{\min}$, and $r^{\dagger}_{\max}$
computed as discussed in Section~\ref{sec:num}. Unless otherwise stated, relevant details about the refinement levels are given in Table~\ref{tab:grid_size}.  Here, the timestep $M=20$ (resp.\ $M = 40$) corresponds to the case of $T=5$ (resp.\ $T = 10$) in Table~\ref{tab:parameter_gmwb}. Based on the choices of $N$ and $K$,
we have $N^{\dagger}  = 2N$ and $K^{\dagger} = 2K$ as in \eqref{eq:grid_w} and \eqref{eq:grid_r}, respectively.
We emphasize that, increasing $|w_{\min}|$, $w_{\max}$, $|r_{\min}|$, or $r_{\max}$ virtually does not change the
no-arbitrage prices/fair insurance fees. Therefore, for practical purposes, with $P^{\dagger} \equiv w^{\dagger}_{\max} - w^{\dagger}_{\min} $ and $K^{\dagger} \equiv r^{\dagger}_{\max} - r^{\dagger}_{\min}$ chosen sufficiently large as above, they can be kept constant for all refinement levels (as we let $h \to 0$).

\medskip
Similar to \cite{chen08a, huang:2010, online}, a sufficiently small fixed cost $c = 10^{-8}$ is used all numerical tests.  For user-defined tolerances $\epsilon$ and $\epsilon_{1}$ in Algorithm~\ref{alg:Gtilde},
we use $\epsilon = \epsilon_1 = 10^{-6}$ for all experiments and all refinement levels.
We note that using smaller $\epsilon$ or $\epsilon_1$ produces virtually identical numerical results.

\vspace{+1em}
\noindent
\begin{minipage}{0.52\textwidth}
\begin{tabular}{ll}
\hline
Parameter                         & Value      \\ \hline
Expiry time ($T$)                     & \{5, 10\} years \\
Maximum withdrawal rate ($C_r$)     & $1/T$    \\
Withdrawal penalty rate ($\mu$)           & 0.10       \\
Init.\ lump-sum premium ($z_0$)    & 100        \\
Init.\ balance of guarantee a/c ($=z_0$) & 100        \\
Init.\ balance value of sub-a/c ($=z_0$)         & 100        \\ \hline
\end{tabular}
\captionof{table}{GMWB parameters for numerical experiments.}
\label{tab:parameter_gmwb}
\end{minipage}
\hfill
\begin{minipage}{0.47\textwidth}
\center
\begin{tabular}{ccccl}
\hline
Refinement & $N$      & $K$   & $J$  &        $M$ \\
level      & ($w$)  & ($r$) & ($a$) & ($\tau$) \\ \hline
0     & $2^{9}$  & $2^{5}$   & 26        & \{20, 40\}       \\
1     & $2^{10}$  & $2^{6}$   & 51       & \{40, 80\}    \\
2     & $2^{11}$  & $2^{7}$  & 101       & \{80, 160\}       \\
3     & $2^{12}$  & $2^{8}$  & 201       & \{160, 320\}       \\
4     & $2^{13}$  & $2^{9}$   & 401       & \{320, 640\}       \\ \hline
\end{tabular}
\captionof{table}{Grid and timestep
refinement levels for numerical experiments.}
\label{tab:grid_size}
\end{minipage}

\bigskip
\noindent Unless otherwise stated, representative parameters to jump-diffusion dynamics
and Vasicek short rate dynamics are respectively given in Tables~\ref{tab:jump_parameters_vasicek}
(taken from \cite{online}) and \ref{tab:jump_parameters_validation} (from \cite{Pavel2015}).

\medskip
\begin{minipage}{0.52\textwidth}
\center
\setlength{\tabcolsep}{4pt} 
\renewcommand{\arraystretch}{0.9} 

\medskip
\begin{tabular}{lrrr}
\hline
Parameters                                 & Merton  & Kou      \\ \hline
$\sigz$ (risky asset volatility)            & 0.3  & 0.3  \\
$\lambda$ (jump intensity)                 & 0.1  & 0.1  \\
$\nu$(log jump multiplier mean)            & -0.9   & n/a     \\
$\varsigma$ (log jump multiplier std)     & 0.45  & n/a     \\
$p_u$ (probability of up-jump)           & n/a     & 0.3445  \\
$\eta_u$ (exp.\ parameter up-jump)    & n/a     & 3.0465  \\
$\eta_d$ (exp.\ parameter down-jump)  & n/a     & 3.0775  \\ \hline
\end{tabular}
\captionof{table}{Parameters for the jump-diffusion dynamics~\eqref{eq:Z_dynamics}.
Values are taken from \cite{online}.}
\label{tab:jump_parameters_validation}
\vfill
\end{minipage}
\hfill
\begin{minipage}{0.40\textwidth}
\smallskip
\center
\setlength{\tabcolsep}{4pt} 
\renewcommand{\arraystretch}{0.9} 
\begin{tabular}{ll}
\hline
Parameters  & Vasicek \\ \hline
$r_0$  & 0.05 \\
$\theta$ & 0.05  \\
$\delta$  & 0.0349 \\
$\sigr$  & 0.02 \\ \hline
\end{tabular}
\captionof{table}{Parameters for
the Vasicek short rate dynamics~\eqref{eq:R_dynamics}.
Values are taken from \cite{Pavel2015}.
}
\label{tab:jump_parameters_vasicek}
\vspace{+0.90cm}
\end{minipage}

\medskip
The correlation coefficient $\rho$ is chosen from
$\{-0.2, 0.2\}$. The value for $\rho$  will be specified
for each experiment subsequently.

\subsection{Validation through Monte Carlo simulation}
\label{ssc:valid}
As previously mentioned, the no-arbitrage pricing of GMWB with continuous withdrawals under a jump-diffusion dynamics with with stochastic interest rate  has not been previously studied in the literature,
hence, reference prices/insurance fees are not available for the dynamics considered in this work.
Therefore,  for validation purposes, we compare no-arbitrage prices obtained by
the proposed numerical method, hereafter referred to as ``$\epsilon$-mF'',
with those obtained by MC simulation.
\begin{table}[htb!]
\center
\begin{tabular}{cccccccccc}
\hline
\multirow{3}{*}{Method}        & \multirow{3}{*}{Level} & \multicolumn{4}{c}{Merton}                                                                                    & \multicolumn{4}{c}{Kou}                                                                                       \\ \cline{3-10}
                               &                        & \multicolumn{2}{c}{$\rho=-0.2$}                       & \multicolumn{2}{c}{$\rho=0.2$}                        & \multicolumn{2}{c}{$\rho=-0.2$}                       & \multicolumn{2}{c}{$\rho=0.2$}                        \\ \cline{3-10}
                               &                        & \multicolumn{1}{c}{price} & \multicolumn{1}{c}{ratio} & \multicolumn{1}{c}{price} & \multicolumn{1}{c}{ratio} & \multicolumn{1}{c}{price} & \multicolumn{1}{c}{ratio} & \multicolumn{1}{c}{price} & \multicolumn{1}{c}{ratio} \\ \hline
\multirow{5}{*}{$\epsilon$-mF} & 0                      & 115.4845                  &                           & 116.4466                  &                           & 109.1908                  &                           & 110.1039                  &                           \\
                               & 1                      & 114.2267                  &                           & 114.8675                 &                           & 109.1608                  &                           & 109.7832                  &                           \\
                               & 2                      & 113.6613                  & 2.22                      & 114.1549                 & 2.22                      & 109.1517                  & 3.29                      & 109.6396                  & 2.23                      \\
                               & 3                      & 113.3921                  & 2.10                      & 113.8171                  & 2.11                      & 109.1483                  & 2.62                      & 109.5719                  & 2.12                      \\
                               & 4                      & 113.2601                  & 2.04                      & 113.6524                  & 2.05                      & 109.1467                  & 2.27                      & 109.5388                  & 2.05                      \\ \hline
MC                             & 95\%-CI                & \multicolumn{2}{c}{{[}112.61,   113.47{]}}        & \multicolumn{2}{c}{{[}112.95,   113.79{]}}        & \multicolumn{2}{c}{{[}108.64,   109.48{]}}        & \multicolumn{2}{c}{{[}109.31,   110.15{]}}        \\ \hline
\end{tabular}
\caption{Validation example with jump-diffusion and Vasicek short rate dynamics
with parameters from Tables~\ref{tab:jump_parameters_validation} and~\ref{tab:jump_parameters_vasicek};
expiry time $T=5$, the insurance fee $\beta = 0.02$.}
\label{tab:sto_int}
\end{table}

To carry out Monte Carlo validation,
we proceed in two steps outlined below.
\begin{itemize}
\item Step~1: we solve the GMWB pricing problem using the ``$\epsilon$-mF'' method on
a relatively fine computational grid (Refinement Level 2 in Table~\ref{tab:grid_size}).
During this step, the optimal control $\gamma_{l, d, q}^m$ is stored for each
computational gridpoint $\x_{l, d,q}^m \in  \Omega_{\myin} \cup \Omega_{a_{\min}}
\cup \Omega_{w_{\min}} \cup \Omega_{aw_{\min}}$.

\item Step~2: we carry out Monte Carlo simulation of dynamics \eqref{eq:A_dynamics},
and \eqref{eq:dynamics}, and \eqref{eq:R_dynamics}, for $A(t)$,  $Z(t)$, and $R(t)$, respectively,
following the stored PDE-computed optimal strategies
$\{(\x_{l, d,q}^m, \gamma_{l, d, q}^m)\}$ obtained in Step~1.

Specifically, let $t_{m'} = T - \tau_m$, $m' = M-m$, $m = M-1, \ldots, 0$,
and $\hat{Z}_{m'}$, $\hat{R}_{m'}$ and  $\hat{A}_{m'}$ be simulated values.
Across each $t_{m'}$,  if necessary, linear interpolation
$\mathcal{I}\left\{\gamma^{m}\right\}(\ln(\hat{Z}_{m'}), \hat{R}_{m'}, \hat{A}_{m'})$
is applied to determine the optimal controls for simulated state values.
(No linear interpolation across time is used.)  For $t\in [t_{m'-1}, t_{m'}]$, a smaller timestep size than
$\Delta \tau$ is utilized for MC simulation. For Step~2, a total of $10^{5}$ paths and a timestep size
$\Delta \tau/20$ is used. The antithetic variate technique is also employed to reduce the variance of MC simulation.
\end{itemize}
In Table~\ref{tab:sto_int}, we present the no-arbitrage prices (in dollars) obtained by the ``$\epsilon$-mF'' method and by the above-described MC simulation.
These prices indicate indicate excellent agreement
with those obtained by MC simulation.
In addition, first-order convergence is observed for ``$\epsilon$-mF''.

\subsection{Modeling impact}
In this subsection, we investigate the (combined) impact of  jumps and stochastic
interest rate dynamics on quantities of central importance to GMWBs, namely
no-arbitrage prices and fair insurance fees, as well as on the holder's optimal withdrawal behaviors.
In this study, we typically compare the aforementioned quantities obtained from different model types:
(i) pure-diffusion (GBM) dynamics with a constant interest rate,
(i) pure-diffusion (GBM) dynamics with Vasicek short rate,
(ii) jump-diffusion dynamics with a constant interest rate,
and (iii) jump-diffusion dynamics with Vasicek short rate.
Hereinafter, these model types are respectively referred to as
``GBM-C'', ``GBM-V'',  ``JD-C'' and ``JD-V''. As an illustrative example, we only consider
the case of the Merton jump-diffusion dynamics;
using the Kou jump-diffusion dynamics yield qualitatively similar conclusions,
and hence omitted for brevity. We note that, the Merton jump parameters in Table~\eqref{tab:jump_parameters_validation}
result in $\kappa = - 0.5501$, indicating a bear stock market scenario,
which is typical in an elavated interest rate setting.

With respect to interest rates, for fair comparisons,  we establish an effective constant interest
rate which is  ``comparable'' to stochastic short rate dynamics. Hereinafter, this comparable rate is
denoted by $r_c$. Inspired by \cite{Ballotta2015}, the comparable constant interest rate $r_c$
is chosen to be the $T$-year Yield-to-Maturity (YTM)
corresponding to the Vasicek dynamics~\eqref{eq:R_dynamics}.
The comparable constant rate $r_c$ is obtained simply
by solving $e^{-r_c T} = p_b(r_0, T; T)$, where $p_b(r_0, T; T)$
given by the formula \eqref{eq:zero_coupon}.
This gives
\EQ
\label{eq:pb}
r_c = -\ln(p_b(r_0, T; T))/T, \quad
p_b(r_0, \cdot; T) \text{ is given in \eqref{eq:zero_coupon}}.
\EN
With respect to jumps, we consider an effective constant instantaneous volatility which approximates
the behavior of the Merton jump-diffusion dynamics by pure-diffusion dynamics \cite{navas_2000}.
It is interesting to include this case as conventional wisdom asserts that over long times, jump-diffusions can be
approximated by diffusions with enhanced volatility. In our experiments, the effective (enhanced)
constant instantaneous volatility, denoted by $\sigma_{c}$,
is computed by \cite{navas_2000}
\EQ
\label{eq:jump}
\sigma_{c} = \sqrt{\sigz^2 + \lambda(\nu^2 + \varsigma^2)}.
\EN
In Table~\ref{tab:models}, numerical values of parameters relatvant to
different models are given.

Regarding numerical methods for different model types, we note that the propsed SL $\epsilon$-monotone Fourier method
can be modified in a straightfoward manner to handle the GBM-V model.
Concerning the GBM-C and JD-C models, the $\epsilon$-monotone Fourier method
for jump-diffusion dynamics with a constant interes rate
proposed in our paper \cite{online} is used.

\begin{table}[htb!]
\center
\begin{tabular}{r|c|cc|c|c}
\hline
        &   & \multicolumn{2}{c|}{$r_c$}&   &  \\
Model  &  $\sigma_c$            & $T = 5$ & $T = 10$        & Merton   &  Vasicek \\ \hline
GBM-C  & 0.437 & 0.0485 & 0.0448 &  n/a &n/a\\
GBM-V  & 0.437 & \multicolumn{2}{c|}{n/a} & n/a & Table~\ref{tab:jump_parameters_vasicek}\\
JD-C   & n/a & 0.0485 & 0.0448 & Table~\ref{tab:jump_parameters_validation} &n/a\\
JD-V  &  n/a & \multicolumn{2}{c|}{n/a} & Table~\ref{tab:jump_parameters_validation} & Table~\ref{tab:jump_parameters_vasicek}\\
\hline
\end{tabular}
\caption{Parametes for different models considered;
$r_c$ and $\sigma_c$ are computed using
\eqref{eq:pb} and \eqref{eq:jump}, respectively.
\label{tab:models}}
\end{table}

In subsequent discussions, to compare no-arbitrage prices ($v$) and fair insurance fees ($\beta_f$)
across different model types, with $x \in \{v, \beta_f\}$, we denote by
$\% \Delta x (\text{Model}_1, \text{Model}_2)$ the relative change
in the quantity $x$ between $\text{Model}_1$ and $\text{Model}_2$. It is defined by
$\ds \% \Delta x (\text{Model}_1, \text{Model}_2) = \frac{|x_1 - x_2|}{x_2}$,
where $x_1$ and $x_2$ are respecitive $x$-values for $\text{Model}_1$ and $\text{Model}_2$.

\subsubsection{No-arbitrage prices and fair insurance fees}
In this experiment, we compare the no-arbitrage prices
and the fair insurance fees obtained from different model types described above
with parameters specified in Table~\ref{tab:models}
and the correlation coefficient $\rho = 0.2$ for the GBM-V and the JD-V models.
In Table~\ref{tab:table_com04}, we present selected selected results
obtained from four different models.
Here, the no-arbitrage prices
(obtained with the insurance fee $\beta = 0.02$), and
the fair insurance fees are numerically estimated as described in Subsection~\ref{ssec:fee}.
\begin{table}[htb!]
\center
\begin{tabular}{r|cc|cc}
\hline
\multirow{2}{*}{Model} & \multicolumn{2}{c}{no-arbitrage price ($v$)} & \multicolumn{2}{|c}{fair insurance fee ($\beta_f$)} \\ \cline{2-5}
                       & $T=5$         & $T=10$        & $T=5$              & $T=10$            \\ \hline
GBM-C                    & 116.1926    & 115.1230    & 0.1070           & 0.0610          \\
GBM-V                    & 116.2775    & 115.7670    & 0.1079          & 0.0647        \\
JD-C                   & 113.0806    & 111.9754    & 0.0801           & 0.0487          \\
JD-V                   & 114.1549    & 114.4837    & 0.0841           & 0.0550          \\ \hline
\end{tabular}
\caption{
No-arbitrage prices and fair insurance fees obtained from different model types;
parameters specified in Table~\ref{tab:models}; the insurance fee $\beta = 0.02$ used for
no-arbitrage prices; for GBM-V and JD-V, the correlation is $\rho = 0.2$; refinement level 2.
}
\label{tab:table_com04}
\end{table}
The numerical results in Table~\ref{tab:table_com04} suggest that
jumps and stochastic short rate have substantial combined impact on both no-arbitrage prices and fair insurance fees, with the impact being more pronounced on the latter (the fees) than on the former (prices).
Also, the fair insurance fees under the GBM-C/V models are considerably more expensive than
those obtained under JD-C/V models.
Specifically, with GBM-C being the reference model,  when $T= 5$, $\%\Delta \beta_f(\cdot,\text{GBM-C})$ ranges from
$0.8\%$ ($=\%\Delta \beta_f (\text{GBM-V}, \text{GBM-C})$) to $25.1\%$ ($=\%\Delta \beta_f (\text{JD-C}, \text{GBM-C})$), which is much large than $\%\Delta v(\cdot,\text{GBM-C})$ ranging from $0.1\%$ ($=\%\Delta v (\text{GBM-V}, \text{GBM-C})$) to $2.7\%$, which is $\%\Delta v (\text{JD-C}, \text{GBM-C})$.
Similarly,  for $T= 10$:
$\%\Delta \beta_f(\cdot,\text{GBM-C})$ ranges from  $6.0\%$ ($=\%\Delta \beta_f(\text{GBM-V}, \text{GBM-C})$) to
$20.1\%$ ($=\%\Delta \beta_f (\text{JD-C}, \text{GBM-C})$), whereas, $\%\Delta v(\cdot,\text{GBM-C})$
is only from  $0.6\%$ ($=\%\Delta v (\text{GBM-V}, \text{GBM-C})$) to $2.7\%$ ($=\%\Delta v (\text{JD-C}, \text{GBM-C})$).

We also observe that, all else being equal,
the price and the fair insurance fee obtained with a constant interest rate
(GBM-C, JD-C) are also smaller than those obtained from
the Vasicek dynamics counterpart (resp.\ GBM-V, JD-V).
For example, when $T = 10$, compare JD-C (0.0801) vs JD-V (0.0841),
and GBM-C(0.1070) vs GBM-V(0.1079).
On the other hand, application of jumps, all else being equal,
results in a lower fair insurance fee.
For example,  when $T = 10$, compare JD-C (0.0801) vs GBM-C (0.1070)
and JD-V (0.0841) vs GBM-V (0.1079)).
We also observe that, all else being equal,
the impact of jumps on the fair insurance fee (and the price)
reduces as the maturity $T$ increases, but that of stochastic interest rate
appears to be more pronounced over a longer investment horizon.
For example, regarding jumps, $\%\Delta\beta_f(\text{JD-C}, \text{GBM-C})$ is $25.1\%$
when $T = 5$ (years), but reduces to $20.1\%$ when $T = 10$ (years);
regarding interest rate, $\%\Delta\beta_f(\text{JD-C}, \text{JD-V})$ is $4.7\%$
when $T = 5$ (years), but is $11.4\%$ when $T~=~10$~(years) .

A possible explanation for the above observation is as follows.
Stochastic interest rate constitutes an additional source of risk uncaught by using a constant interest rate,
resulting in the fair insurance fee (and
the no-arbitrage price) underpriced using a constant interest rate
than using stochastic interest rate dynamics.
Furthermore, using an effective volatility ($\sigma_c$)
does not fully capture risk caused by (substantial) downward jumps,
hence resulting in the fair insurance fee underpriced.
To investigate further the combined impact of jumps
and stochastic interes rates, in the following subsection,
we study the holder's optimal withdrawal behaviors.

\subsubsection{Optimal withdrawals}
In this study, we use the fair insurance fees for the GBM-C, GBM-V, JD-C and JD-V models, respectively
denoted by $\beta_f^{gc}$, $\beta_f^{gV}$, $\beta_f^c$ and $\beta_f^{V}$.
We use $T = 10$ and $\rho = 0.2$. As reported in Table~\ref{tab:table_com04},
$\beta_f^{gc} = 0.0610$, $\beta_f^{gV} = 0.0647$, $\beta_f^c = 0.0487$ and $\beta_f^{V} = 0.0550$.
In Figure~\ref{fig:result05}, we present plots of optimal withdrawals
for (calendar) time $t = 5$ (years) obtained using different models:
the GBM-C in Figure~\ref{fig:result05}(a), the JD-C model in Figure~\ref{fig:result05}(b), the GBM-V in Figure~\ref{fig:result05}(c), and the JD-V model in Figure~\ref{fig:result05}(d).
For the GBM-V and JD-V models, the control plots correspond to the spot rate $R(t= 5) = r_c = 0.0448$.

\begin{figure}[htb!]
\begin{center}
$\begin{array}{c@{\hspace{0.08in}}c}
\includegraphics[width=0.5\textwidth]{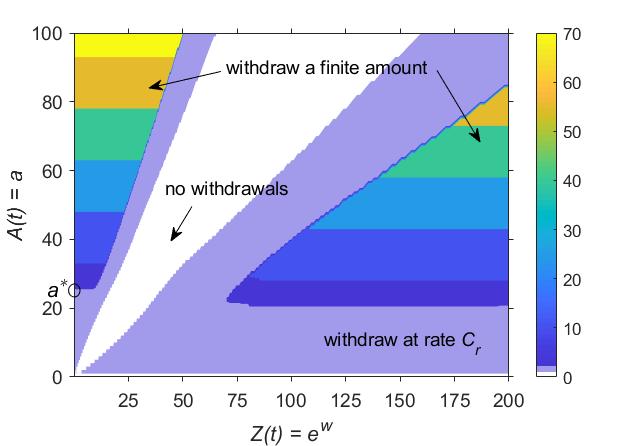}
& \includegraphics[width=0.5\textwidth]{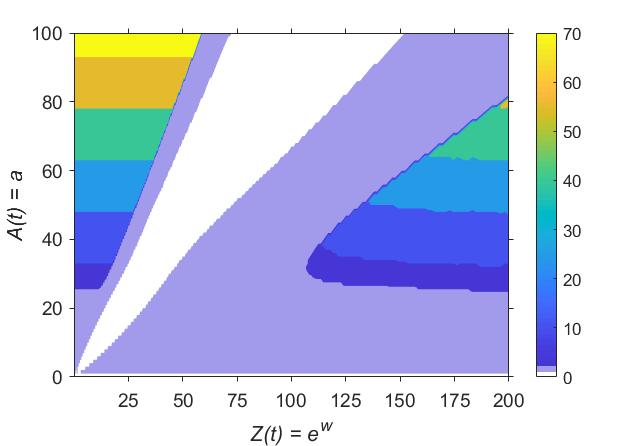}
\\
\mbox{(a) GBM-C,  $t= 5$, $\sigma_c = 0.4373$, $r_c = 0.0448$}
& \mbox{(b) JD-C, $t= 5$, $r_c = 0.0448$}
\\
\includegraphics[width=0.5\textwidth]{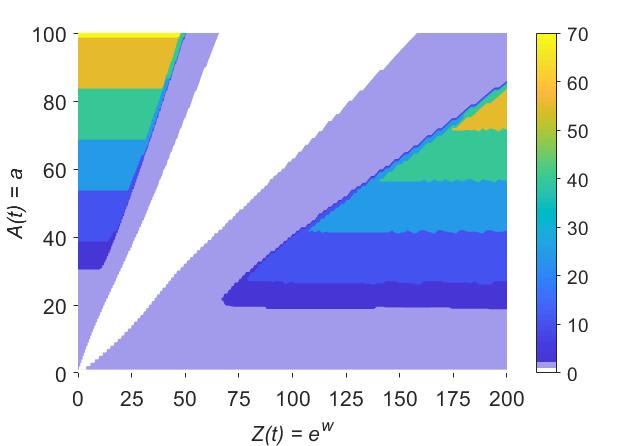}
& \includegraphics[width=0.5\textwidth]{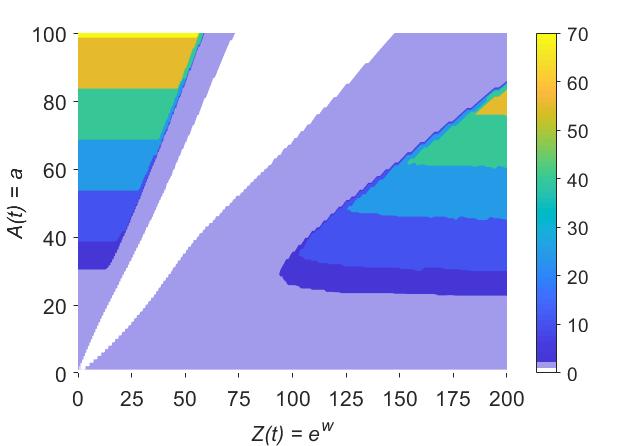}
\\
\mbox{(c) GBM-V, $t= 5$, $R(t) = r_c=  0.0448$}
& \mbox{(d) JD-V, $t= 5$, $R(t) = r_c = 0.0448$}
\end{array}$
\end{center}
\caption{The holder's optimal withdrawals at (calendar) time $t= 5$ (years);
parameters specified in Table~\ref{tab:models};
$T = 10$, $\rho = 0.2$;
fair insurance fee $\beta_f^{gc} = 0.0610$, $\beta_f^{gV} = 0.0647$, $\beta_f^c = 0.0487$, $\beta_f^{V} = 0.0550$;
refinement level 2.}
\label{fig:result05}
\end{figure}

\begin{figure}[htb!]
\begin{center}
$\begin{array}{c@{\hspace{0.08in}}c}
\includegraphics[width=0.5\textwidth]{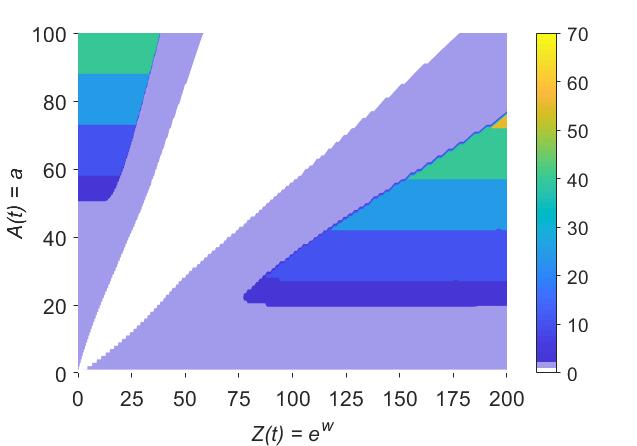}
& \includegraphics[width=0.5\textwidth]{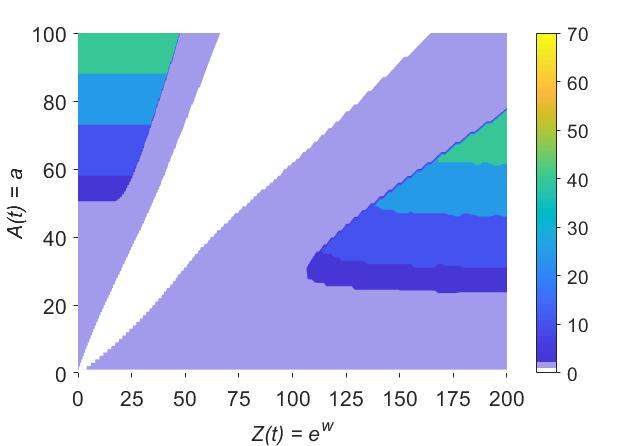}
\\
\mbox{(a) GBM-V,  $t= 5$, $R(t) = 0.03$}
& \mbox{(b) JD-V, $t= 5$, $R(t) = 0.03$}
\\
\includegraphics[width=0.5\textwidth]{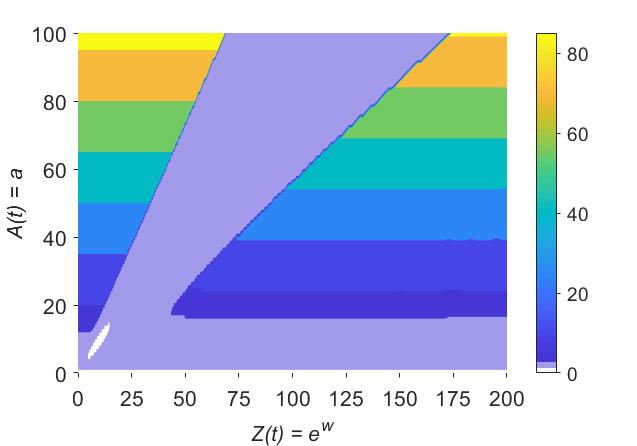}
& \includegraphics[width=0.5\textwidth]{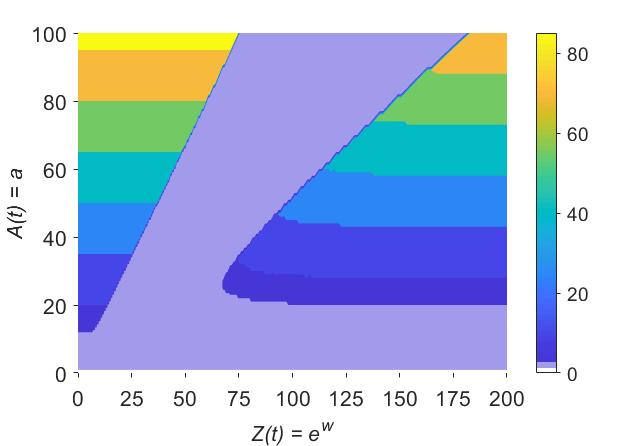}
\\
\mbox{(c) GBM-V,  $t= 5$, $R(t) = 0.1$}
& \mbox{(d) JD-V, $t= 5$, $R(t) = 0.1$}
\\
\includegraphics[width=0.5\textwidth]{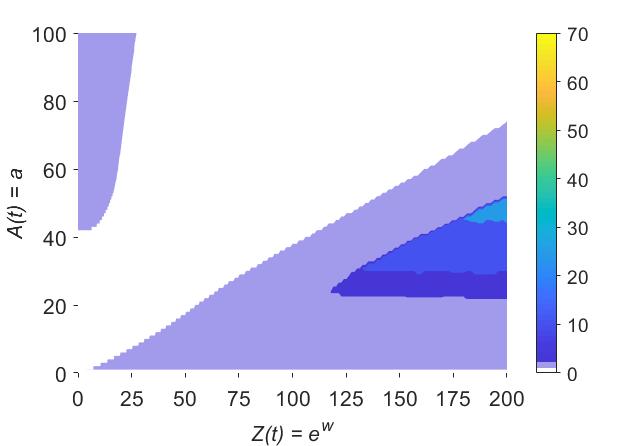}
&
\includegraphics[width=0.5\textwidth]{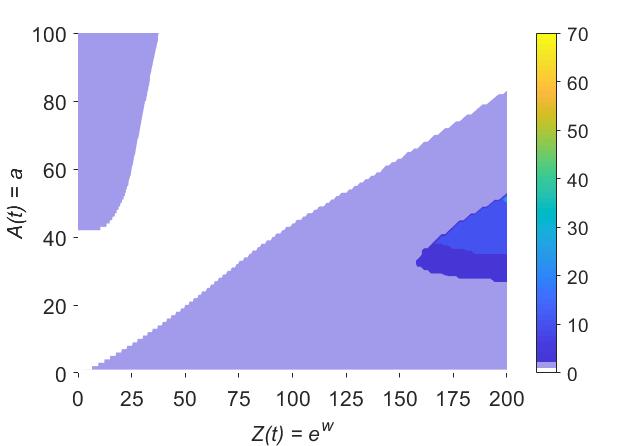}
\\
\mbox{(e) GBM-V,  $t= 5$, $R(t) = -0.0125$}
&
\mbox{(f) JD-V, $t= 5$, $R(t) =  -0.0125$}
\end{array}$
\end{center}
\caption{The holder's optimal withdrawals at $t= 5$ (years) for different spot rates;
parameters are from Table~\ref{tab:jump_parameters_validation}[Merton]
and Table~\ref{tab:jump_parameters_vasicek};
$T = 10$, correlation coefficient $\rho = 0.2$,
effective volatility $\sigma_c = 0.4373$, fair insurance fee
$\beta_f^{gV} = 0.0647$, $\beta_f^{V} = 0.0550$;
refinement level 2.
}
\label{fig:result09}
\end{figure}

From Figure~\ref{fig:result05}, we observe several key qualitative similarities across different models.
Specifically, in the lower-right region, where $A(t) \ll z_0$ and $Z(t)\gg A(t)$,
all optimal controls suggest the holder should withdraw continuously at rate $C_r$;
however, withdrawing a finite amount becomes optimal when $A(t)$ becomes sufficiently
large (upper-right region). Also, in the lower-left region, when both $A(t)$ and $Z(t)$ are small,
optimal controls suggest to either withdrawal nothing or to withdraw continuously at rate $C_r$;
however, in the upper-left region of Figure~\ref{fig:result05}, where $A(t)\gg Z(t)$,
optimal controls suggest withdraw a finite amount.

Nonetheless, significant quantitative differences are also observed,
most notably in  the upper-right and in the lower-left regions.
For example, consider the upper-right region
in Figure~\ref{fig:result05}(a)-(d). At $(Z(t), A(t)) = (200, 80)$, our numerical results in
Figure~\ref{fig:result05}(b), indicate that, when the JD-C model is used,
it is optimal to withdraw continuously at rate $C_r = 1/T = 0.1$;
however,  using other model, as shown in Figure~\ref{fig:result05}(a), (c) and (d),
it is suggested that withdrawing a finite amount (about \$60) is optimal.

In Figure~\ref{fig:result09}, we present control plots for at $t=5$ (years)
when $R(t) \in \{0.03, 0.1\} \neq r_c$, and $R(t) = -0.0125<0$ obtained using the GBM-V and JD-V models.
Comparing Figure~\ref{fig:result09}(a), (c) and (e) with Figure~\ref{fig:result05}(c),
as well as comparing Figure~\ref{fig:result09}(b), (d) and (f) with Figure~\ref{fig:result05}(d)),
suggests that the optimal withdrawal behaviours depend considerably on spot rates,
and they are significantly different from those obtained using a comparable rate $r_c$,
with a more conservative withdraw behaviours, especially in withdrawing a finite amount,
when the spot interst rate is low.

\begin{figure}
\begin{center}
\vspace*{-0.8cm}
\includegraphics[width = 0.45\textwidth]{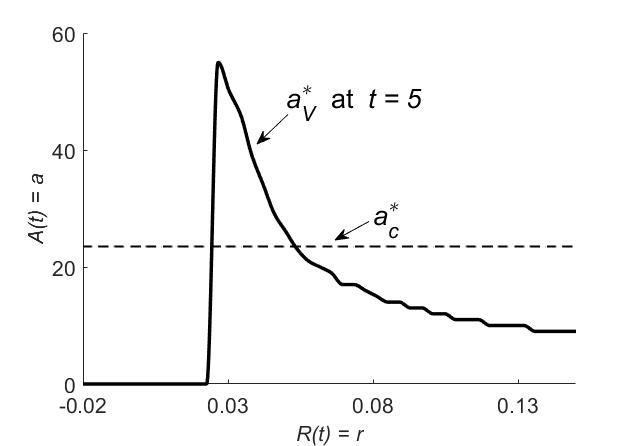}
\captionof{figure}{A plot of $a^{\ast}_V$ (for JD-V)
and $a^{\ast}_c$ (for GBM-C) against spot rate $R(t)$ at (calendar) time $t = 5$ (years);
parameters are similar to those used for Figure~\ref{fig:result05};
\label{fig:result06}}
\vspace*{-0.8cm}
\end{center}
\end{figure}


We now turn our attention to the lower-left region of the control plots in Figure~\ref{fig:result05}
and Figure~\ref{fig:result09}, where $A(t)$ dominates  $Z(t)$.
In particular, with $Z(t)$ being zero, we study the value of $a$ across which the optimal withdrawal behaviours change from withdrawing continuously at rate $C_r$ to withdrawing a finite amount.
For brevity, we only discuss the GBM-C and JD-V model.
For the GBM-C model, we denote by $a^\ast_c$ this special $a$-value, and it is given by  $a^\ast_c = -\frac{C_r}{r_c} \ln(1-\mu)$, as shown in \cite{Dai}. For the JD-V model, we denote by $a^\ast_V$ the aforementioned special value of $a$ (this is also the same $a$-value for the GBM-V model).
A closed-form expression for $a^\ast_V$ is not known to exist, and therefore,
we estimate it using numerical results.

In Figure~\ref{fig:result06}, we plot $a^\ast_c$  and $a^\ast_V$ against different spot rate $R(t)$ at $t = 5$.
We note that, when $r < 0$ and $z = e^{w} \to  0$, Figure~\ref{fig:result06} suggests that never optimal
to withdraw a finite amount (also see Figure~\ref{fig:result05}(c)).
It is observed from Figure~\ref{fig:result06} that when $R(t) \ll r_c$, $a^\ast_V$ is significantly larger than $a^\ast_c$; however,  when $R(t) \gg r_c$, $a^\ast_V$ is considerably smaller than $a^\ast_c$.
These suggest that, when the balance of sub-account balance is zero,
the holder should be much more cautious with finite amount withdrawals from the guarantee account
in a low interest rate environment than s/he is in a constant interest rate;
however, the holder should be much more aggressive in a high interest rate environment.

To summarize, our numerical results suggest a simultaneous application of
jumps and stochastic interest rate result in considerably cheaper fair fees than those obtained
under a comparable pure-diffusion model. In addition, under this realistic modeling setting,
the holder's optimal withdrawal behaviour appears to be much more conservative (resp.\ aggressive) in withdrawing a finite amount when the balance of the sub-account is negligible (resp.\ considerable) than in the optimal behaviour under a pure-diffusion model would dictate. This is possibly because of combined risk due to (i) possible downsize jumps, and (ii) stochastic interest rate, which drives lower fair insurance fees for GMWBs.
We plan to investigate these observations further in a future work.

\section{Conclusion}
\label{section:cc}

In a continuous withdrawal scenario, using an impulse control framework, the GMWB pricing problem 
under a jump-diffusion dynamics with stochastic short rate is formulated as HJB-QVI of three spatial dimensions. The viscosity solution to this HJB-QVI is shown to satisfy a strong comparison result. Utilizing a semi-Lagrangian discretization, we develop an $\epsilon$-monotone Fourier method to solve the HJB-QVI. 
We rigorously prove the convergence of the numerical solutions to the viscosity solution of the associated HJB-QVI. 
Numerical experiments demonstrate an excellent agreement with reference values obtained by the Monte Carlo simulation. 
Extensive analysis of numerical results indicate
a significant (combined) impact of jumps and stochastic interest rate dynamics on the fair insurance fees and 
on the optimal withdrawal behaviors of policy holders. For future work, we plan to investigate further 
the impact of realistic modeling with various withdrawal settings and complex contract features.


\section*{Acknowledgement}
The authors would like to thank George Labahn and Peter Forsyth of the University of Waterloo for very useful comments on earlier drafts of this paper.

{\fontsize{9.65}{10.3}\rm

\begin{appendices}
\section{Truncation error of Fourier series}
\label{app:truncation}
As $\alpha \to \infty$, there is no loss of information in
the discrete convolution  \eqref{eq:gtilde_truncated}.
However, for any finite $\alpha$, there is an error due to the use of
a truncated Fourier series. Using similar arguments in \cite{ForsythLabahn2017},
we have
\EQA
\label{eq:err_series}
\left| \tilde{g}_{n-l, k-d}(\alpha)  - \tilde{g}_{n-l, k-d}(\infty) \right|
&\leq&
\frac{2}{P^{\dagger}} \frac{1}{Q^{\dagger}} \mysum^{z \in \Z}_{s \in [\alpha N^{\dagger}/2, \infty)} \left( \frac{\sin^2 \pi \eta_s \Delta w}{(\pi \eta_s \Delta w)^2} \right) \left(\frac{\sin^2 \pi \xi_z \Delta r}{(\pi \xi_z \Delta r)^2} \right) \left| G(\eta_s, \xi_z, \Delta \tau) \right|
\nonumber
\\
&+&
\frac{1}{P^{\dagger}} \frac{2}{Q^{\dagger}} \mysum^{z \in [\alpha K^{\dagger}/2, \infty)}_{s \in \Z} \left( \frac{\sin^2 \pi \eta_s \Delta w}{(\pi \eta_s \Delta w)^2} \right) \left(\frac{\sin^2 \pi \xi_z \Delta r}{(\pi \xi_z \Delta r)^2} \right) \left| G(\eta_s, \xi_z, \Delta \tau) \right|.
\ENA
Using the closed-form expression \eqref{eq:small_g}, and noting that $\text{Re}\left(\overline{B} (\eta)\right) \leq 1$, $\left| \rho \right| < 1$, we then have
\EQA
\label{eq:repsi}
\text{Re}(\Psi(\eta, \xi)) &=& - \frac{ \sigz^2}{2} (2 \pi \eta)^2  - \rho \sigz \sigr (2 \pi \eta) (2 \pi \xi) - \frac{\sigr^2}{2} (2 \pi \xi)^2  - \lambda + \lambda \text{Re}\left(\overline{B} (\eta)\right)
\nonumber
\\
&\leq &  - \left( 1- \left| \rho \right| \right)  \frac{\sigz^2}{2} (2 \pi \eta)^2 -  \left( 1- \left| \rho \right| \right) \frac{\sigr^2}{2} (2 \pi \xi)^2.
\ENA
Thus, from \eqref{eq:repsi}, we have
\EQA
\label{eq:absbigG}
\left| G(\eta, \xi, \Delta \tau) \right| = \left| \exp \left( \Psi(\eta, \xi)\Delta \tau \right) \right|
\leq  \exp \left( - \left( 1- \left| \rho \right| \right)  \frac{\sigz^2}{2} (2 \pi \eta)^2 \Delta \tau \right) \exp \left(  -  \left( 1- \left| \rho \right| \right) \frac{\sigr^2}{2} (2 \pi \xi)^2 \Delta \tau \right).
\ENA
Let $C_6 = 2 \left( 1- \left| \rho \right| \right) \sigz^2 \pi^2 \Delta \tau / (P^{\dagger})^2$ and
$C'_6 = 2 \left( 1- \left| \rho \right| \right) \sigr^2 \pi^2 \Delta \tau / (Q^{\dagger})^2$.
Taking \eqref{eq:absbigG} into \eqref{eq:err_series}, we can bound these infinite sums as follows
\EQA
&&\left| \tilde{g}_{n-l, k-d}(\alpha)  - \tilde{g}_{n-l, k-d}(\infty) \right|
\nonumber
\\
&\leq &
\bigg( \frac{2}{P^{\dagger}} \frac{4}{\pi^2 \alpha^2} \sum_{s = \alpha N^{\dagger}/2}^{\infty} e^{-C_6 s^2} \bigg) \bigg( \frac{1}{Q^{\dagger}} \sum_{z \in \Z} \left(\frac{\sin^2 \pi \xi_z \Delta r}{(\pi \xi_z \Delta r)^2} \right)  e^{ -  C'_6 z^2 } \bigg)
\nonumber
\\
&& \qquad
+ \bigg( \frac{2}{Q^{\dagger}} \frac{4}{\pi^2 \alpha^2} \sum_{z = \alpha N^{\dagger}/2}^{\infty} e^{-C'_6 z^2} \bigg) \bigg( \frac{1}{P^{\dagger}} \sum_{s \in \Z} \left(\frac{\sin^2 \pi \eta_s \Delta w}{(\pi \eta_s \Delta w)^2} \right)  e^{ -  C_6 s^2 } \bigg)
\nonumber
\\
&\leq &
\frac{8 (K^{\dagger})^2 (1+e^{-C'_6})}{P^{\dagger} Q^{\dagger} \pi^4 \alpha^2 (1 - e^{-C'_6})} \frac{\exp \left(- C_6 N^{\dagger} \alpha^2/4 \right)}{1-e^{-C_6 N^{\dagger} \alpha}}
+
\frac{8 (N^{\dagger})^2 (1+e^{-C_6})}{P^{\dagger} Q^{\dagger} \pi^4 \alpha^2 (1 - e^{-C_6})} \frac{\exp \left(- C'_6 K^{\dagger} \alpha^2/4 \right)}{1-e^{-C'_6 K^{\dagger} \alpha}},
\nonumber
\ENA
which yields (considering fixed $P^{\dagger}$ and $Q^{\dagger}$ here)
\EQAS
\left| \tilde{g}_{n-l, k-d}(\alpha)  - \tilde{g}_{n-l, k-d}(\infty) \right| ~\simeq~
\mathcal{O}\left( e^{-1/h} / h^2 \right).
\ENAS

\section{A proof of Proposition~\ref{proposition:sum_g}}
\label{app:sum_g}

\begin{proof}[Proof of Proposition~\ref{proposition:sum_g}]
Letting $p = n-l$ and $q = k -d$, we have
\EQA
 && \Delta w \Delta r  \mysum^{k \in \KD}_{l \in \ND} \tilde{g}_{n-l, k-d}
 ~\overset{\text{(i)}}{=}~ \frac{P^{\dagger}}{N^{\dagger}} \frac{Q^{\dagger}}{K^{\dagger}} \mysum^{q \in \KD}_{p \in \ND} \tilde{g}_{p, q}
\nonumber
\\
&&
\qquad\qquad
\overset{\text{(ii)}}{=}~ \frac{P^{\dagger}}{N^{\dagger}} \frac{Q^{\dagger}}{K^{\dagger}}
\mysum^{q \in \KD}_{p \in \ND}
\frac{1}{P^{\dagger}} \frac{1}{Q^{\dagger}}
\mysum^{z \in \Ka}_{s \in \Na}
e^{2\pi i \eta_s p \Delta w}
e^{2\pi i \xi_z q \Delta r}
~
\text{tg}(s,z)
G(\eta_s, \xi_z, \Delta \tau)
\nonumber
\\
&&
\qquad\qquad
=~ \frac{1}{N^{\dagger}} \frac{1}{K^{\dagger}} \mysum^{z \in \Ka}_{s \in \Na} \text{tg}(s,z) G(\eta_s, \xi_z, \Delta \tau)  \sum_{p \in \ND} \exp{\left(\frac{2 \pi i s p}{N^{\dagger}} \right)} \sum_{q \in \KD} \exp{\left(\frac{2 \pi i z q}{K^{\dagger}} \right)}
\nonumber
\\
&&
\qquad\qquad
\overset{\text{(iii)}}{=}~ G(0, 0, \Delta \tau) ~\overset{\text{(iv)}}{=} ~ 1.
\ENA
Here, in (i), we use the periodicity of $\tilde{g}_{n-l, k-d}$, i.e.\ the sequence $\{\tilde{g}_{-N^{\dagger}/2, k}(\alpha),\ldots, \tilde{g}_{N^{\dagger}/2-1, k}(\alpha)\}$ for a fixed $k \in \KD$ is $N^{\dagger}$-periodic, and similarly, the sequence $\{\tilde{g}_{n, -K^{\dagger}/2}(\alpha),\ldots, \tilde{g}_{n, K^{\dagger}/2-1}(\alpha)\}$ for a fixed $n \in \ND$ is $K^{\dagger}$-periodic; in (ii), we use the definition of \eqref{eq:gtilde_truncated}, noting the term $\text{tg}(s,z)$ is given in \eqref{eq:trigofun}; in (iii),we apply properties of roots of unity; in (iv), we use the closed-form expression \eqref{eq:small_g}.
\end{proof}

\section{${\boldsymbol{\ell}}$-stability in ${\boldsymbol{\Omega_{\myin} \cup \Omega_{a_{\min}}}}$}
\label{app:induction}
We now show the bounds  \eqref{eq:vminusmax}-\eqref{eq:vminusmin} for $\Omega_{\myin} \cup \Omega_{a_{\min}}$.
We note that numerical solutions at  nodes in $\Omega \setminus (\Omega_{\myin} \cup \Omega_{a_{\min}})$ satisfy the bounds \eqref{eq:vminusmax}-\eqref{eq:vminusmin} at the same $j \in \J$ and $m = 0, \ldots, M$, that is
\EQ
\label{eq:boundmin}
  \max_{n \in \Nc ~\text{or}~ k \in \Kc} \left\{v_{n,k,j}^{m}\right\}~\text{satisfies  \eqref{eq:vminusmax}},
  \quad \text{and}\quad
  \min_{n \in \Nc ~\text{or}~ k \in \Kc} \left\{v_{n,k,j}^{m}\right\}~\text{satisfies  \eqref{eq:vminusmin}}.
\EN
\noindent Base case: when $m = 0$, \eqref{eq:vminusmax}-\eqref{eq:vminusmin} hold
for all $j \in \J$,
which follows from the initial condition \eqref{eq:terminal} for $n \in \N$

\noindent Induction hypothesis: we assume that \eqref{eq:vminusmax}-\eqref{eq:vminusmin} hold
for $m = \hat{m}$, where $\hat{m}\le M-1$, and $j \in \J$.

\noindent Induction: we show that \eqref{eq:vminusmax}-\eqref{eq:vminusmin} also hold
for $m = \hat{m} + 1$ and  $j \in \J$. This is done in two  steps. In Step 1, we show,
for $j \in \J$,
\EQA
\left[v_j^{\hat{m}+}\right]_{\max} &\le&
  e^{2  \hat{m} \epsilon \frac{\Delta\tau}{T}} e^{\R \hat{m} \Delta \tau}
\left(
\left\| v^{0} \right\|_{\infty} + a_{j}
\right)
\label{eq:vplusmax}
  \\
- 2\hat{m}\epsilon\frac{\Delta\tau}{T}e^{2\hat{m}\epsilon  \frac{\Delta\tau}{T}} e^{\R \hat{m} \Delta \tau}
\left(
\left\| v^{0} \right\|_{\infty} +a_{j}\right)
  &\le&
\left[v_j^{\hat{m}+}\right]_{\min},
\label{eq:vplusmin}
\ENA
where $\big[v_j^{\hat{m}+}\big]_{\max} = \max_{n,k}\big\{v_{n,k,j}^{\hat{m}+}\big\}$ and $\big[v_j^{\hat{m}+}\big]_{\min} = \min_{n,k}\big\{v_{n,k,j}^{\hat{m}+}\big\}$.
In Step~2, we bound the timestepping result \eqref{eq:scheme} at $m = \hat{m} + 1$ using
\eqref{eq:vplusmax}-\eqref{eq:vplusmin}.

\noindent Step 1 - Bound for $v_{n,k,j}^{\hat{m}+}$:
Since  $v_{n,k,j}^{\hat{m}+}  = \max\left(\vl_{n,k,j}^{\hat{m}+}, \vn_{n,k,j}^{\hat{m}+}\right)$,
using \eqref{eq:scheme*}, we have
\EQA
\label{eq:local_m_gamma}
v_{n,k,j}^{\hat{m}+}
&=&  \sup_{\gamma_{n,k,j}^{\hat{m}} \in[0, a_j]}
\left[
   \mathcal{I}\left\{v^{\hat{m}}\right\}
   \left(\max\left(e^{w_n} - \gamma_{n, j}^{\hat{m}}, e^{w^{\dagger}_{\min}}\right), r_k,
    a_j - \gamma_{n,k,j}^{\hat{m}}   \right)
    + f(\gamma_{n,k,j}^{\hat{m}})
    \right].
\ENA
As noted in Remark~\ref{rm:sup_exist}, for the case $c>0$ as considered here,
the supremum of \eqref{eq:local_m_gamma} is achieved by
an optimal control $\gamma^* \in [0, a_j]$. That is, \eqref{eq:local_m_gamma} becomes
\EQA
\label{eq:local_m_opt}
v_{n,k,j}^{\hat{m}+}
= \mathcal{I}\left\{v^{\hat{m}}\right\}
   \left(\max\left(e^{w_n} - \gamma^*, e^{w^{\dagger}_{\min}}\right), r_k,
    a_j - \gamma^* \right)
    + f(\gamma^*), \quad \gamma^* \in [0, a_j].
\ENA
We assume that
$\max\left(e^{w_n} - \gamma^*, e^{w^{\dagger}_{\min}}\right)
\in \left[e^{w_{n^\prime}}, e^{w_{n^\prime+1}}\right]
$
and
$(a_j - \gamma^*) \in
[a_{j^\prime}, a_{j^\prime+1}]$,
and nodes that are used for linear interpolation are
$({\bf{x}}_{n^\prime,k,j^\prime}^{\hat{m}}, \ldots, {\bf{x}}_{n^\prime+1,k, j^\prime+1}^{\hat{m}})$.
We note that these node could be outside $\Omega_{\myin} \cup \Omega_{a_{\min}}$,
in $\Omega_{w_{\min}} \cup \Omega_{wa_{\min}}$.
However, by \eqref{eq:boundmin}, the numerical solutions at these nodes satisfy
the same bounds \eqref{eq:vminusmax}-\eqref{eq:vminusmin}.
Computing $v_{n,k,j}^{\hat{m}+}$ using linear interpolation results in
\EQA
\label{eq:v_interp}
v_{n,k,j}^{\hat{m}+}
= x_a \left(x_w~v_{n', k, j'}^{\hat{m}} + (1-x_w)~v_{n'+1, k, j'}^{\hat{m}}\right)
+ (1-x_a) \left(x_w~v_{n', k, j'+1}^{\hat{m}} + (1-x_w)~v_{n'+1, k, j'+1}^{\hat{m}}\right),
\ENA
where $0\le x_a\le 1$ and $0\le x_w\le 1$ are interpolation weights. In particular,
\EQA
\label{eq:xa_interp}
x_a = \frac{a_{j'+1} - (a_j - \gamma^*)}{a_{j'+1} - a_{j'}}.
\ENA
Using \eqref{eq:boundmin} and the induction hypothesis for \eqref{eq:vminusmax}
gives abound for nodal values used in \eqref{eq:v_interp}
\EQA
\label{eq:bound}
\left\{v_{n^\prime, k, j^\prime}^{\hat{m}}, v_{n^\prime+1, k, j^\prime}^{\hat{m}}\right\}
&\le &
e^{2   \hat{m} \epsilon\frac{\Delta\tau}{T}} e^{\R \hat{m} \Delta \tau} (\|v^{0}\|_{\infty} + a_{j^\prime}),
\nonumber
\\
\left\{v_{n^\prime, k, j^\prime+1}^{\hat{m}},
v_{n^\prime+1, k, j^\prime+1}^{\hat{m}}\right\}
&\le &
e^{2   \hat{m} \epsilon\frac{\Delta\tau}{T}} e^{\R \hat{m} \Delta \tau} (\|v^{0}\|_{\infty} + a_{j^\prime+1}).
\ENA
Taking into account the non-negative weights in linear interpolation, particularly \eqref{eq:xa_interp},
and upper bounds in \eqref{eq:bound}, the interpolated result $\mathcal{I}\left\{v^{\hat{m}}\right\}\left(\cdot \right)$ in \eqref{eq:local_m_opt} is bounded by
\EQA
\label{eq:Ibound}
\mathcal{I}\left\{v^{\hat{m}}\right\}
   \left(\max\left(e^{w_n} - \gamma^*, e^{w^{\dagger}_{\min}}\right), r_k,
    a_j - \gamma^* \right)
\le
e^{2 \hat{m} \epsilon \frac{\Delta\tau}{T}} e^{\R \hat{m} \Delta \tau} (\|v^{0}\|_{\infty} + (a_j - \gamma^*)).
\ENA
Using \eqref{eq:Ibound} and $f(\gamma^*)~\le~\gamma^*$ (by definition in \eqref{eq:f_gamma_k_dis}),
 \eqref{eq:local_m_opt} becomes
\EQAS
v_{n,k,j}^{\hat{m}+} ~\le~ e^{2 \hat{m} \epsilon \frac{\Delta\tau}{T}} e^{\R \hat{m} \Delta \tau} \left(\|v^{0}\|_{\infty} + a_j - \gamma^*
\right) + \gamma^*
~\le ~
e^{2 \hat{m} \epsilon \frac{\Delta\tau}{T}} e^{\R \hat{m} \Delta \tau} \left(\|v^{0}\|_{\infty} + a_j\right),
\ENAS
which proves \eqref{eq:vplusmax} at $m = \hat{m}$.

For subsequent use,  we note, since
$v_{n,k,j}^{\hat{m}+}  = \max\left(\vl_{n,k,j}^{\hat{m}+}, \vn_{n,k,j}^{\hat{m}+}\right)$,
\eqref{eq:vplusmax} results in
\EQA
\label{eq:upper_bound_01}
\left\{\vl_{n,k,j}^{\hat{m}+}, \vn_{n,k,j}^{\hat{m}+}\right\}
~\leq ~
v_{n,k,j}^{\hat{m}+}
&\le &
e^{2 \hat{m} \epsilon \frac{\Delta\tau}{T}} e^{\R \hat{m} \Delta \tau} \left(\|v^{0}\|_{\infty} +  a_j\right).
\ENA
Next, we derive a lower bound for $\vl_{n,k,j}^{\hat{m}+}$ and $\vn_{n,k,j}^{\hat{m}+}$.
By the induction hypothesis for \eqref{eq:vminusmin}, we have
$v_{n,k,j}^{\hat{m}}\ge -2m\epsilon\frac{\Delta\tau}{T}e^{2 \hat{m} \epsilon\frac{\Delta\tau}{T}}
e^{\R \hat{m} \Delta \tau}
\left(
\left\| v^{0} \right\|_{\infty} + a_j\right)$.
Comparing $\vl_{n,k,j}^{\hat{m}+}$ given by  the supremum in \eqref{eq:scheme*} with
$v_{n,k,j}^{\hat{m}}$, which is the candidate for the supremum evaluated at  $\gamma_{n,k,j}^{\hat{m}} = 0$,
yields
\EQA
\label{eq:lower_bound_0l}
v_{n,k,j}^{\hat{m}} ~ \ge~ \vl_{n,k,j}^{\hat{m}+} ~\ge~  -2\hat{m}\epsilon\frac{\Delta\tau}{T}e^{2 \hat{m} \epsilon \frac{\Delta\tau}{T}} e^{\R \hat{m} \Delta \tau}
\left(
\left\| v^{0} \right\|_{\infty} + a_j\right),
\ENA
which proves \eqref{eq:vplusmin} at $m = \hat{m}$.

For  $\vn_{n,k,j}^{\hat{m}+}$ in \eqref{eq:scheme*}, we
consider optimal $\gamma = \gamma^*$, where $\gamma^* \in (C_r\Delta \tau, a_j]$.
Using the induction hypothesis and non-negative weights of linear interpolation,
noting $\gamma^* \ge 0$ and assuming $f(\gamma^*)  \ge 0$, gives
\EQA
\label{eq:lower_bound_02}
\vn_{n,k,j}^{\hat{m}+}
&\ge &
-2\hat{m}\epsilon\frac{\Delta\tau}{T}e^{2\hat{m} \epsilon\frac{\Delta\tau}{T}}  e^{\R \hat{m} \Delta \tau}
\left(
\left\| v^{0} \right\|_{\infty} + (a_j - \gamma^*)\right)  + f(\gamma^*)
\nonumber
\\
&\ge&
-2\hat{m}\epsilon\frac{\Delta\tau}{T}e^{2\hat{m} \epsilon\frac{\Delta\tau}{T}} e^{\R \hat{m} \Delta \tau} \left(
\left\| v^{0} \right\|_{\infty} + a_j\right).
\ENA
From \eqref{eq:upper_bound_01}-\eqref{eq:lower_bound_0l} and \eqref{eq:lower_bound_02}, noting $\epsilon\le 1/2$,
we have
\EQA
\label{eq:abs_bound}
\left\{|\vl_{n,k,j}^{\hat{m}+}|, |\vn_{n,k,j}^{\hat{m}+}|\right\}
\leq
e^{2 \hat{m} \epsilon \frac{\Delta\tau}{T}} e^{\R \hat{m} \Delta \tau} \left(\|v^{0}\|_{\infty} +  a_j\right).
\ENA
\noindent Step 2 - Bound for $v_{n,k,j}^{\hat{m}+1}$: We will show that \eqref{eq:vminusmax}-\eqref{eq:vminusmin}
hold at $m = \hat{m}+1$.
For all $n \in \N$, $k \in K$, $j \in J$,
using \eqref{eq:semi_lag_results} and \eqref{eq:V_nG},
we have
\EQA
\label{eq:vkk}
\bvlsl_{n,k,j}^{{\hat{m}+1}}
&=&
    \Delta w \Delta r
\mysum^{d \in \KD}_{l \in \ND}
    \tilde{g}_{n-l, k-d}
    \bvlsl_{l,d,j}^{\hat{m}+}
\nonumber
\\
&=&
\Delta w \Delta r
\mysum^{d \in \KD}_{l \in \ND}
    \big(\max\left(\tilde{g}_{n-l, k-d}, 0\right)
    +
    \min\left(\tilde{g}_{n-l, k-d}, 0\right)
    \big)
    \bvlsl_{l,d,j}^{{\hat{m}+}}.
\ENA
Note that $\bvlsl_{l,d,j}^{{\hat{m}+}}$ is computed by \eqref{eq:semi_lag_results}, where $\bw_l$ and $\br_d$ have no dependence on $a_j$.  From \eqref{eq:vkk}, using the property of linear interpolation
and the upper bound \eqref{eq:abs_bound}, we have
\EQA
|\bvlsl_{n,k,j}^{{\hat{m}+1}} |
&\leq &
\frac{\Delta w \Delta r}{|1 + \Delta \tau r_d|}
\mysum^{d \in \KD}_{l \in \ND}
    \big(\max\left(\tilde{g}_{n-l, k-d}, 0\right)
    +
    \left| \min\left(\tilde{g}_{n-l, k-d}, 0\right) \right|
    \big)
\big| \mathcal{I}\big\{\vl^{\hat{m}+}\big\}(\bw_l, \br_d, a_j) \big|
\nonumber
\\
&\overset{\text{(i)}}{\le}&
(1+2\epsilon \frac{\Delta \tau}{T})
e^{2\epsilon \hat{m} \frac{\Delta\tau}{T}}
\left(
1
+
\Delta \tau \R
\right) e^{\R \hat{m} \Delta \tau }
\left(\|v^{0}\|_{\infty} + a_j\right)
\nonumber
\\
&\le &
e^{2\epsilon (\hat{m}+1) \frac{\Delta\tau}{T}}
e^{\R (\hat{m}+1)  \Delta \tau }
\left(\|v^0\|_{\infty} + (1+\mu)a_j+c \right),
\label{eq:vl}
\ENA
where in (i), we use \eqref{eq:sum_g} and \eqref{eq:r_k_con}.
Similarly, for $n \in \N$, $k \in \K$, $j \in \J$,
we also have
\EQ
\label{eq:vn}
|\bvnsl_{n,k,j}^{\hat{m}+1}| \le e^{2 (\hat{m} +1) \epsilon \frac{\Delta\tau}{T}} e^{\R (\hat{m}+1)  \Delta \tau } (\|v^{0}\|_{\infty} +  a_j).
\EN
Therefore, from \eqref{eq:vl}-\eqref{eq:vn}, we conclude,
for $n \in \N$, $k \in \K$, $j \in \J$,
\begin{linenomath}
\EQS
|v_{n,k,j}^{\hat{m}+1}| \le  e^{2 (\hat{m} +1) \epsilon \frac{\Delta\tau}{T}}  e^{\R (\hat{m}+1)  \Delta \tau } (\|v^{0}\|_{\infty} +  a_j).
\ENS
\end{linenomath}
This proves \eqref{eq:vminusmax} at time $m = \hat{m}+1$.


To prove \eqref{eq:vminusmin}, similarly with \eqref{eq:vkk},
for $n \in \N$, $k \in \K$, $j \in \J$, we have
\EQA
\bvlsl_{n,k,j}^{{\hat{m}+1}}
&=&
\Delta w \Delta r
   \mysum^{d \in \KD}_{l \in \ND}
    \tilde{g}_{n-l, k-d}~
    \bvlsl_{l,d,j}^{{\hat{m}+}}
\nonumber
    \\
&\geq&
    \Delta w \Delta r
  \bigg[ \mysum^{d \in \KD}_{l \in \ND}
    \max \left(\tilde{g}_{n-l, k-d}, 0 \right)
     \bvlsl_{l,d,j}^{{\hat{m}+}}
  -
  \mysum^{d \in \KD}_{l \in \ND}
    \big| \min \left(\tilde{g}_{n-l, k-d}, 0 \right) \big|
    \big|  \bvlsl_{l,d,j}^{{\hat{m}+}} \big| \bigg]
\nonumber
  \\
 &\overset{\text{(i)}}{\geq} &  \frac{\Delta w \Delta r}{1 + \Delta \tau r_d} \mysum^{d \in \KD}_{l \in \ND}
    \tilde{g}_{n-l, k-d}~ \left[ - 2\epsilon \hat{m} \frac{\Delta\tau}{T} e^{2\epsilon \hat{m} \frac{\Delta\tau}{T}}  e^{\R \hat{m} \Delta \tau}
\left(
\left\| v^{0} \right\|_{\infty} + a_{j}
\right) \right]
  \\
  && -~  \frac{\Delta w \Delta r}{1 + \Delta \tau r_d} \mysum^{d \in \KD}_{l \in \ND}
   \big| \min\left( \tilde{g}_{n-l, k-d}, 0 \right) \big| \left[
e^{2\epsilon \hat{m} \frac{\Delta\tau}{T}} e^{\R \hat{m} \Delta \tau}
\left(
\left\| v^{0} \right\|_{\infty} +  a_{j}
\right) \right]
\nonumber
   \\
   & \overset{\text{(ii)}}{\geq} & - 2\epsilon (\hat{m}+1) \frac{\Delta\tau}{T}  e^{2\epsilon (\hat{m}+1) \frac{\Delta\tau}{T}}  e^{\R (\hat{m}+1)  \Delta \tau}
\left(
\left\| v^{0} \right\|_{\infty} +  a_{j}
\right),
\label{eq:lower_bound_03}
\ENA
where, in (i), we used \eqref{eq:lower_bound_0l}, \eqref{eq:abs_bound}, and the property of linear interpolation; in (ii), we used \eqref{eq:test1}, \eqref{eq:sum_g} and \eqref{eq:r_k_con}. Thus, by \eqref{eq:lower_bound_03}, we have
\EQAS
v_{n,k,j}^{\hat{m}+1}  ~\geq~  \bvlsl_{n,k,j}^{{\hat{m}+1}} ~\geq~ - 2\epsilon (\hat{m}+1) \frac{\Delta\tau}{T} e^{2\epsilon (\hat{m}+1) \frac{\Delta\tau}{T}}
e^{\R (\hat{m}+1) \Delta \tau} \left(
\left\| v^{0} \right\|_{\infty} + a_j
\right),
\ENAS
which proves \eqref{eq:vminusmin} at $m = \hat{m}+1$.

\end{appendices} 

\begin{thebibliography}{10}

\bibitem{alonso-garcca_wood_ziveyi_2018}
J.~Alonso-Garcca, O.~Wood, and J.~Ziveyi.
\newblock Pricing and hedging guaranteed minimum withdrawal benefits under a
  general {L}{\'e}vy framework using the {COS} method.
\newblock {\em Quantitative Finance}, 18:1049--1075, 2018.

\bibitem{AsriMazid2020}
B.E. Asri and S.~Mazid.
\newblock Stochastic impulse control problem with state and time dependent cost
  functions.
\newblock {\em Mathematical Control \& Related Fields}, 10(4):855--875, 2020.

\bibitem{Azimzadeh2018}
P.~Azimzadeh, E.~Bayraktar, and G.~Labahn.
\newblock Convergence of implicit schemes for {H}amlton-{J}acobi-{B}ellman
  quasi-variational inequalities.
\newblock {\em SIAM Journal on Control and Optimization}, 56(6):3994--4016,
  2018.

\bibitem{Azimzadeh2015}
P.~Azimzadeh and P.A. Forsyth.
\newblock The existence of optimal bang-bang controls for {GM}x{B} contracts.
\newblock {\em SIAM Journal on Financial Mathematics}, pages 117--139, 02 2015.

\bibitem{Baccarin2009}
S.~Baccarin.
\newblock Optimal impulse control for a multidimensional cash management system
  with generalized cost functions.
\newblock {\em European Journal of Operational Research}, (196):198--206, 2009.

\bibitem{Bacinello2016}
A.R. Bacinello, P.~Millossovich, and A.~Montealegre.
\newblock The valuation of {GMWB} variable annuities under alternative fund
  distributions and policyholder behaviours.
\newblock {\em Scandinavian Actuarial Journal}, 2016(5):446--465, 2016.

\bibitem{Bacinello2011}
A.R. Bacinello, P.~Millossovich, A.~Olivieri, and E.~Pitacco.
\newblock Variable annuities: a unifying valuation approach.
\newblock {\em Insurance: Mathematics and Economics}, 49(3):285--297, 2011.

\bibitem{Ballotta2015}
L.~Ballotta and I.~Kyriakou.
\newblock Convertible bond valuation in a jump diffusion setting with
  stochastic interest rates.
\newblock {\em Quantitative Finance}, 15(1):115--129, 2015.

\bibitem{barles-burdeau:1995}
G.~Barles and J.~Burdeau.
\newblock The {D}irichlet problem for semilinear second-order degenerate
  elliptic equations and applications to stochastic exit time control problems.
\newblock {\em Communications in Partial Differential Equations}, 20:129--178,
  1995.

\bibitem{Barles2008}
G.~Barles and C.~Imbert.
\newblock Second-order elliptic integro-differential equations: viscosity
  solutions' theory revisited.
\newblock {\em Ann. Inst. H. Poincar{\'e} Anal. Non Lin{\'e}aire},
  25(3):567--585, 2008.

\bibitem{barles-souganidis:1991}
G.~Barles and P.E. Souganidis.
\newblock Convergence of approximation schemes for fully nonlinear equations.
\newblock {\em Asymptotic Analysis}, 4:271--283, 1991.

\bibitem{Bauer}
D.~Bauer, A.~Kling, and J.~Russ.
\newblock A universal pricing framework for guaranteed minimum benefits in
  variable annuities.
\newblock {\em ASTIN Bulletin}, 38:621--651, 2008.

\bibitem{BelakChristensenSeifried2017}
C.~Belak, S.~Christensen, and F.T. Seifried.
\newblock A general verification result for stochastic impulse control
  problems.
\newblock {\em SIAM Journal on Control and Optimization}, 55(2):627--649, 2017.

\bibitem{Benhamou2009}
E.~Benhamou and P.~Gauthier.
\newblock Impact of stochastic interest rates and stochastic volatility on
  variable annuities.
\newblock {\em SSRN Electronic Journal}, 10 2009.
\newblock Paris December 2009 Finance International Meeting AFFI - EUROFIDAI.

\bibitem{Berestycki2014}
H.~Berestycki, R.~Monneau, and J.A. Scheinkman.
\newblock A non-local free boundary problem arising in a theory of financial
  bubbles.
\newblock {\em Philosophical transactions. Series A, Mathematical, physical,
  and engineering sciences}, 372, 11 2014.

\bibitem{interestratemodels}
D.~Brigo and F.~Mercurio.
\newblock {\em Interest Rate Models — Theory and Practice: With Smile,
  Inflation and Credit}.
\newblock Springer, New York, 2007.

\bibitem{chen08a}
Z.~Chen and P.~A. Forsyth.
\newblock A numerical scheme for the impulse control formulation for pricing
  variable annuities with a {Guaranteed Minimum Withdrawal Benefit (GMWB)}.
\newblock {\em Numerische Mathematik}, 109:535--569, 2008.

\bibitem{chen06}
Z.~Chen and P.A. Forsyth.
\newblock A semi-{L}agrangian approach for natural gas storage valuation and
  optimal control.
\newblock {\em SIAM Journal on Scientific Computing}, 30:339--368, 2007.

\bibitem{chen_2008_a}
Z.~Chen, K.R. Vetzal, and P.A. Forsyth.
\newblock The effect of modelling parameters on the value of {GMWB} guarantees.
\newblock {\em Insurance: Mathematics and Economics}, 43:165--173, 2008.

\bibitem{choi2018valuation}
J.~Choi.
\newblock Valuation of gmwb under stochastic volatility.
\newblock {\em Journal of Interdisciplinary Mathematics}, 21(3):539--551, 2018.

\bibitem{crandall-ishii-lions:1992}
M.~G. Crandall, H.~Ishii, and P.~L. Lions.
\newblock User's guide to viscosity solutions of second order partial
  differential equations.
\newblock {\em Bulletin of the American Mathematical Society}, 27:1--67, 1992.

\bibitem{Dai}
M.~Dai, Y.K. Kwok, and J.~Zong.
\newblock Guaranteed minimum withdrawal benefit in variable annuities.
\newblock {\em Mathematical Finance}, 18:595--611, 2008.

\bibitem{DANG2010}
D.M. Dang, C.~C. Christara, K.~R. Jackson, and A.~Lakhany.
\newblock A {PDE} pricing framework for cross-currency interest rate
  derivatives.
\newblock {\em Procedia Computer Science}, 1(1):2371--2380, 2010.
\newblock ICCS 2010.

\bibitem{DangForsyth2014}
D.M. Dang and P.A. Forsyth.
\newblock Continuous time mean-variance optimal portfolio allocation under jump
  diffusion: A numerical impulse control approach.
\newblock {\em Numerical Methods for Partial Differential Equations},
  30:664--698, 2014.

\bibitem{davis2010}
M.~H.~A. Davis, X.~Guo, and G.~Wu.
\newblock Impulse control of multidimensional jump diffusions.
\newblock {\em SIAM Journal on Control and Optimization}, 48:5276--5293, 2010.

\bibitem{DJ12}
K.~Debrabant and E.R. Jakobsen.
\newblock Semi-{L}agrangian schemes for linear and fully non-linear diffusion
  equations.
\newblock {\em Mathematics of Computation}, 82:1433--1462, 2013.

\bibitem{demiralp2021negative}
S.~Demiralp, J.~Eisenschmidt, and T.~Vlassopoulos.
\newblock Negative interest rates, excess liquidity and retail deposits: Banks'
  reaction to unconventional monetary policy in the euro area.
\newblock {\em European Economic Review}, 136:103745, 2021.

\bibitem{Dempster1996}
M.~Dempster and J.~Hutton.
\newblock Numerical valuation of cross-currency swaps and swaptions.
\newblock {\em SSRN Electronic Journal}, 02 1996.

\bibitem{Kwok2019}
B.~Dong, W.~Xu, and Y.K. Kwok.
\newblock Willow tree algorithms for pricing guaranteed minimum withdrawal
  benefits under jump-diffusion and {CEV} models.
\newblock {\em Quantitative Finance}, 19(10):1741--1761, 2019.

\bibitem{Donnelly}
R.~Donnelly, S.~Jaimungal, and D.~Rubisov.
\newblock Valuing guaranteed withdrawal benefits with stochastic interest rates
  and volatility.
\newblock {\em Quantitative Finance}, 14:369--382, 2014.

\bibitem{Duffy2015}
D.G. Duffy.
\newblock {\em {G}reen's Functions with Applications}.
\newblock Chapman and Hall/CRC, New York, 2nd edition, 2015.

\bibitem{Masahiko2008}
M.~Egami.
\newblock A direct solution method for stochastic impulse control problems of
  one-dimensional diffusions.
\newblock {\em SIAM Journal on Control and Optimization}, 47(3):1191--1218,
  2008.

\bibitem{Fang2008}
F.~Fang and C.W. Oosterlee.
\newblock A novel pricing method for {E}uropean options based on
  {F}ourier-{C}osine series expansions.
\newblock {\em SIAM Journal on Scientific Computing}, 31:826--848, 2008.

\bibitem{Forsyth08}
P.~A. Forsyth and G.~Labahn.
\newblock Numerical methods for controlled {H}amilton-{J}acobi-{B}ellman {PDE}s
  in finance.
\newblock {\em Journal of Computational Finance}, 11 (Winter):1--44, 2008.

\bibitem{ForsythLabahn2017}
P.~A. Forsyth and G.~Labahn.
\newblock {$\epsilon$-monotone Fourier methods for optimal stochastic control
  in finance}.
\newblock {\em Journal of Computational Finance}, 22(4):25--71, 2019.

\bibitem{garronigreenfunctionssecond92}
M.~G. Garroni and J.~L. Menaldi.
\newblock {\em {G}reen functions for second order parabolic
  integro-differential problems}.
\newblock Number 275 in Pitman Research Notes in Mathematics. Longman
  Scientific and Technical, Harlow, Essex, UK, 1992.

\bibitem{Molent2016a}
L.~Gouden{\'e}ge, A.~Molent, and A.~Zanette.
\newblock Pricing and hedging {GMWB} in the {H}eston and in the {Black-Scholes}
  with stochastic interest rate models.
\newblock {\em Computational Management Science}, 16:217--248, 2016.

\bibitem{Molent2019}
L.~Gouden{\'e}ge, A.~Molent, and A.~Zanette.
\newblock Gaussian process regression for pricing variable annuities with
  stochastic volatility and interest rate.
\newblock {\em Decisions in Economics and Finance}, 06 2020.
\newblock To be published.

\bibitem{GuanLian2014}
H.~Guan and Z.~Liang.
\newblock Viscosity solution and impulse control of the diffusion model with
  reinsurance and fixed transaction costs.
\newblock {\em Insurance: Mathematics and Economics}, (54):109--122, 2014.

\bibitem{Gudkov2018}
N.~Gudkov, K.~Ignatieva, and J.~Ziveyi.
\newblock Pricing of guaranteed minimum withdrawal benefits in variable
  annuities under stochastic volatility, stochastic interest rates and
  stochastic mortality via the componentwise splitting method.
\newblock {\em Quantitative Finance}, 19:1--18, 09 2018.

\bibitem{Guo2009}
X.~Guo and G.~Wu.
\newblock Smooth fit principle for impulse control of multidimensional
  diffusion processes.
\newblock {\em SIAM Journal on Control and Optimization}, 48(2):594--617, 2009.

\bibitem{huang:2010}
Y.~Huang and P.A. Forsyth.
\newblock Analysis of a penalty method for pricing a {Guaranteed Minimum
  Withdrawal Benefit (GMWB)}.
\newblock {\em IMA Journal of Numerical Analysis}, 32:320--351, 2012.

\bibitem{huang:2010a}
Y.~Huang, P.A. Forsyth, and G.~Labahn.
\newblock Combined fixed and point policy iteration for {HJB} equations in
  finance.
\newblock {\em SIAM Journal on Numerical Analysis}, 50:1861--1882, 2012.

\bibitem{huang:2010b}
Y.~Huang, P.A. Forsyth, and G.~Labahn.
\newblock Iterative methods for the solution of the singular control
  formulation of a {GMWB} pricing problem.
\newblock {\em Numerische Mathematik}, 122:133--167, 2012.

\bibitem{Huang2014}
Y.T. Huang and Y.K. Kwok.
\newblock Analysis of optimal dynamic withdrawal policies in withdrawal
  guarantees products.
\newblock {\em Journal of Economic Dynamics and Control}, 45:19--43, 2014.

\bibitem{Huang2015}
Y.T. Huang and Y.K. Kwok.
\newblock Regression-based {M}onte {C}arlo methods for stochastic control
  models: variable annuities with lifelong guarantees.
\newblock {\em Quantitative Finance}, 16(6):905--928, 2016.

\bibitem{Huang2018}
Y.T. Huang, P.~Zeng, and Y.K. Kwok.
\newblock Optimal initiation of guaranteed lifelong withdrawal benefit with
  dynamic withdrawals.
\newblock {\em SIAM Journal on Financial Mathematics}, 8:804--840, 2017.

\bibitem{ignatieva_song_ziveyi_2018}
K.~Ignatieva, A.~Song, and J.~Ziveyi.
\newblock Fourier space time-stepping algorithm for valuing guaranteed minimum
  withdrawal benefits in variable annuities under regime-switching and
  stochastic mortality.
\newblock {\em ASTIN Bulletin}, 48(1):139--169, 2018.

\bibitem{Jaimungal2013}
S.~Jaimungal and V.~Surkov.
\newblock {Valuing Early-Exercise Interest-Rate Options With Multi-Factor
  Affine Models}.
\newblock {\em International Journal of Theoretical and Applied Finance
  (IJTAF)}, 16(06):1--29, 2013.

\bibitem{Jakobsen2006}
E.~R. Jakobsen and K.~H. Karlsen.
\newblock A ``maximum principle for semicontinuous functions'' applicable to
  integro-partial differential equations.
\newblock {\em Nonlinear differ. equ. appl.}, 13:137–165, 2006.

\bibitem{jobst2016negative}
A.~Jobst and H.~Lin.
\newblock {\em Negative interest rate policy (NIRP): implications for monetary
  transmission and bank profitability in the euro area}.
\newblock International Monetary Fund, 2016.

\bibitem{Kling2011}
A.~Kling, F.~Ruez, and J.~Ruß.
\newblock The impact of stochastic volatility on pricing, hedging, and hedge
  efficiency of variable annuity guarantees.
\newblock {\em Astin Bulletin}, 41(2):511--545, 2011.

\bibitem{Korn1999}
R.~Korn.
\newblock Some applications of impulse control in mathematical finance.
\newblock {\em Mathematical Methods of Operations Research}, 50(3):493--518,
  1999.

\bibitem{kou01}
S.G. Kou.
\newblock A jump diffusion model for option pricing.
\newblock {\em {M}anagement {S}cience}, 48:1086--1101, August 2002.

\bibitem{Mackay2018}
M.~Kouritzin and A.~Mackay.
\newblock {VIX}-linked fees for {GMWBs} via explicit solution simulation
  methods.
\newblock {\em Insurance: Mathematics and Economics}, 81:1--17, 2018.

\bibitem{lopez2020have}
Jose~A Lopez, Andrew~K Rose, and Mark~M Spiegel.
\newblock Why have negative nominal interest rates had such a small effect on
  bank performance? {C}ross country evidence.
\newblock {\em European Economic Review}, 124:103402, 2020.

\bibitem{online}
Y.~Lu, D.M. Dang, P.A. Forsyth, and G.~Labahn.
\newblock An {$\epsilon$}-monotone {F}ourier method for {Guaranteed Minimum
  Withdrawal Benefit (GMWB)} as a continuous impulse control problem.
\newblock Submitted, 06 2022.

\bibitem{Pavel2015}
X.~Luo and P.V. Shevchenko.
\newblock {Valuation of variable annuities with guaranteed minimum withdrawal
  and death benefits via stochastic control optimization}.
\newblock {\em Insurance: Mathematics and Economics}, 62(3):5--15, 2015.

\bibitem{MaForsyth2015}
K.~Ma and P.A. Forsyth.
\newblock An unconditionally monotone numerical scheme for the two-factor
  uncertain volatility model.
\newblock {\em IMA Journal of Numerical Analysis}, 37(2):905--944, 2017.

\bibitem{merton1975}
R.C. Merton.
\newblock Option pricing when underlying stock returns are discontinuous.
\newblock {\em Journal of {F}inancial {E}conomics}, 3:125--144, 1976.

\bibitem{Milevsky2006}
M.A. Milevsky and T.S. Salisbury.
\newblock Financial valuation of guaranteed minimum withdrawal benefits.
\newblock {\em Insurance: Mathematics and Economics}, 38:21--38, 2006.

\bibitem{Molent2020}
A.~Molent.
\newblock Taxation of a {GMWB} variable annuity in a stochastic interest rate
  model.
\newblock {\em ASTIN Bulletin}, 50(3):1001--1035, 2020.

\bibitem{navas_2000}
J.F. Navas.
\newblock On jump diffusion processes for asset returns.
\newblock {Working paper, Instituto de Empresa}, 2000.

\bibitem{OksendalSulemBook3}
B.~Oksendal and A.~Sulem.
\newblock {\em Applied Stochastic Control of Jump Diffusions}.
\newblock Springer, 3rd edition, 2019.

\bibitem{DangOu2021}
D.~Ou.
\newblock {\em Efficient Nested Simulation of Tail Risk Measures for Variable
  Annuities}.
\newblock PhD thesis, 2021.

\bibitem{Peng2009}
J.~Peng, K.S. Leung, and Y.K. Kwok.
\newblock Pricing guaranteed minimum withdrawal benefits under stochastic
  interest rates.
\newblock {\em Quantitative Finance}, 12:933--941, 2010.

\bibitem{Pham}
H.~Pham.
\newblock On some recent aspects of stochastic control and their applications.
\newblock {\em Probability Surveys}, 2:506--549, 2005.

\bibitem{Arto2018}
A.~Picarelli, C.~Reisinger, and J.~Arto.
\newblock {\em Boundary Mesh Refinement For Semi-Lagrangian Schemes: Numerical
  Methods and Applications in Optimal Control}, pages 155--174.
\newblock 08 2018.

\bibitem{pooley2003}
D.M. Pooley, P.A. Forsyth, and K.R. Vetzal.
\newblock Numerical convergence properties of option pricing {PDE}s with
  uncertain volatility.
\newblock {\em IMA Journal of Numerical Analysis}, 23:241--267, 2003.

\bibitem{Reisinger2017}
C.~Reisinger and J.R. Arto.
\newblock Boundary treatment and multigrid preconditioning for
  semi-{L}agrangian schemes applied to {H}amilton– {J}acobi– {B}ellman
  equations.
\newblock {\em Journal of Scientific Computing}, 72:198--230, 2017.

\bibitem{MR2022}
Daedal Research.
\newblock {The US Annuity Market: Analysis By Type, By Distribution Channel, By
  Contract Type, By Investment Category, By Asset Under Management, By Annuity
  Premium Size and Trends with Impact of COVID-19 and Forecast up to 2022},
  2022.

\bibitem{Ruijter2013}
M.J. Ruijter, C.W. Oosterlee, and R.F.T. Aalbers.
\newblock On the {F}ourier cosine series expansion ({COS}) method for
  stochastic control problems.
\newblock {\em Numerical Linear Algebra with Applications}, 20:598--625, 2013.

\bibitem{Seydel2009}
R.C. Seydel.
\newblock Existence and uniqueness of viscosity solutions for {QVI} associated
  with impulse control of jump-diffusions.
\newblock {\em Stochastic Processes and Their Applications}, 119:3719--3748,
  2009.

\bibitem{Pavel2017}
P.~Shevchenko and X.~Luo.
\newblock Valuation of variable annuities with guaranteed minimum withdrawal
  benefit under stochastic interest rate.
\newblock {\em Insurance: Mathematics and Economics}, 76:104--117, 07 2017.

\bibitem{Pavel2016}
P.V. Shevchenko and X.~Luo.
\newblock {A unified pricing of variable annuity guarantees under the optimal
  stochastic control framework}.
\newblock {\em Risks}, 4(3):1--31, 2016.

\bibitem{PvSDangForsyth2018_MQV}
P.~M. {Van Staden}, D.M. Dang, and P.A. Forsyth.
\newblock Mean-quadratic variation portfolio optimization: A desirable
  alternative to time-consistent mean-variance optimization?
\newblock {\em SIAM Journal on Financial Mathematics}, 10(3):815--856, 2019.

\bibitem{PvSDangForsyth2019_Distributions}
P.~M. {Van Staden}, D.M. Dang, and P.A. Forsyth.
\newblock On the distribution of terminal wealth under dynamic mean-variance
  optimal investment strategies.
\newblock {\em SIAM Journal on Financial Mathematics}, 12(2):566--603, 2021.

\bibitem{PvSDangForsyth2019_Robust}
P.~M. {Van Staden}, D.M. Dang, and P.A. Forsyth.
\newblock The surprising robustness of dynamic mean-variance portfolio
  optimization to model misspecification errors.
\newblock {\em European Journal of Operational Research}, 289:774--792, 2021.

\bibitem{PvSDangForsyth2018_TCMV}
P.M. {Van Staden}, D.M. Dang, and P.A. Forsyth.
\newblock Time-consistent mean-variance portfolio optimization: a numerical
  impulse control approach.
\newblock {\em Insurance: Mathematics and Economics}, 83(C):9--28, 2018.

\bibitem{vasicek1977equilibrium}
O.~Vasicek.
\newblock An equilibrium characterization of the term structure.
\newblock {\em Journal of financial economics}, 5(2):177--188, 1977.

\bibitem{MNIF06}
V.L. Vath, M.~Mnif, and H.~Pham.
\newblock A model of optimal portfolio selection under liquidity risk and price
  impact.
\newblock {\em Finance and Stochastics}, 11:51--90, 2007.

\bibitem{wang08}
J.~Wang and P.A. Forsyth.
\newblock Maximal use of central differencing for {Hamilton-Jacobi-Bellman
  PDEs} in finance.
\newblock {\em SIAM Journal on Numerical Analysis}, 46:1580--1601, 2008.

\bibitem{wang2015understanding}
L.~Wang, A.~Kalife, X.~Tan, B.~Bouchard, and S.~Mouti.
\newblock Understanding guaranteed minimum withdrawal benefit: a study on
  financial risks and rational lapse strategy.
\newblock {\em Insurance markets and companies: analyses and actuarial
  computations}, (61):6--21, 2015.

\end{thebibliography}
\end{document}